\definecolor{linkcolour}{rgb}{0,0.2,0.6} 
\newcommand{\pmap}{\rightharpoonup}
\newcommand{\cF}{\mathcal{F}}
\newcommand{\id}{\operatorname{id}}
\newcommand{\nodes}{\operatorname{nodes}}
\newcommand{\tleaves}{\operatorname{leaves}}
\newcommand{\trees}{\operatorname{Trees}}
\newcommand{\xred}{\operatorname{red}_{\cX}}
\newcommand{\collapse}{\downarrow}
\newcommand{\cG}{\mathcal{G}}
\newcommand{\textend}{\operatorname{extend}}
\newcommand{\tprune}{\operatorname{prune}}
\newcommand{\tshrink}{\operatorname{shrink}}
\newcommand{\treduce}{\operatorname{reduce}}
\newcommand{\Oo}{\mathcal{O}}
\newcommand{\NLogSpace}{\textsc{NLogSpace}}
\renewcommand\thmcontinues[1]{Continued}
\newcommand{\expr}{\operatorname{Expr}}
\newcommand{\cX}{\mathcal{X}}
\newcommand{\cY}{\mathcal{Y}}
\newcommand{\hX}{\hat{\cX}}
\newcommand{\comp}{\circ}
\newcommand{\var}{\operatorname{Var}}
\newcommand{\Var}{\var}
\newcommand{\subs}{\operatorname{Subs}}
\newcommand{\val}{\operatorname{Val}}
\newcommand{\gsem}[1]{|#1|}
\newcommand{\cR}{\mathcal{R}}
\newcommand{\AL}{\Sigma}
\newcommand{\REG}{\operatorname{REG}_\Sigma}
\newcommand{\dom}{\operatorname{dom}}
\newcommand{\sem}[1]{{\lsem{}{#1}\rsem}}
\newcommand{\cS}{\mathcal{S}}
\newcommand{\SR}{\mathbb{S}}
\newcommand{\sr}{S}
\newcommand{\add}{\oplus}
\newcommand{\mult}{\odot}
\newcommand{\bigadd}{\bigoplus}
\newcommand{\zero}{\mathbb{0}}
\newcommand{\one}{\mathbb{1}}
\newcommand{\nat}{\mathbb{N}}
\newcommand{\natinf}{\mathbb{N}_\infty}
\newcommand{\natninf}{\mathbb{N}_{-\infty}}
\newcommand{\trop}{\natinf(\min,+)}
\newcommand{\artic}{\natninf(\max,+)}
\newcommand{\natsemiring}{\nat(+, \cdot)}
\newcommand{\cL}{\mathcal{L}}
\newcommand{\cA}{\mathcal{A}}
\newcommand{\cB}{\mathcal{B}}
\newcommand{\trans}[2][]{\raisebox{-1pt}[10pt][0pt]{$\overset{#2}{\underset{^{#1}}{\raisebox{0pt}[3pt][0pt]{$\relbar\mspace{-8mu}\rightarrow$}}}$}}
\newcommand{\Run}{\operatorname{Run}}
\newcommand{\asem}[1]{{\sem{#1}}}
\tikzset{
defaultstyle/.style={>=stealth,semithick, auto,font=\small,
initial text= {},
initial distance= {3.5mm},
accepting distance= {3.5mm}},
accepting/.style=accepting by arrow,
nstate/.style={circle, semithick,inner sep=1pt, minimum size=4mm}}
\newtheorem{lemma}{Lemma}
\newtheorem{theorem}{Theorem}
\newtheorem{corollary}{Corollary}
\newtheorem{proposition}{Proposition}
\declaretheorem[style=definition]{example}
\renewcommand\thmcontinues[1]{Continued}
\begin{document}

\begin{frontmatter}
\title{Copyless Cost-Register Automata: \\
	Structure, Expressiveness, and Closure Properties\tnoteref{t1}}
\tnotetext[t1]{A preliminary version of this paper appeared in~\cite{MazowieckiR16} at the 33rd Symposium on Theoretical Aspects of Computer Science, STACS.}
%  \tnotetext[t1]{This document is a collaborative effort.}
%  \tnotetext[t2]{The second title footnote which is a longer
%     longer than the first one and with an intention to fill
%     in up more than one line while formatting.}
 
\author[1]{Filip Mazowiecki\corref{cor1}}
\ead{filip.mazowiecki@labri.fr}

\author[2]{Cristian Riveros\corref{cor1}}
\ead{cristian.riveros@uc.cl}

\cortext[cor1]{Corresponding author}
%  \fntext[fn1]{This is the specimen author footnote.}
%  \fntext[fn2]{Another author footnote, but a little more 
%              longer.}
%  \fntext[fn3]{Yet another author footnote. Indeed, you can have
%     any number of author footnotes.}

\address[1]{University of Oxford, UK}
\address[2]{Pontificia Universidad Cat\'olica de Chile, Chile}

\begin{abstract}
Cost register automata (CRA) and its subclass, copyless CRA, were recently proposed by Alur et al. as a new model for computing functions over strings.
We study some structural properties, expressiveness, and closure properties of copyless CRA. 
We show that copyless CRA are strictly less expressive than weighted automata and are not closed under reverse operation.
To find a better class we impose restrictions on copyless CRA, which ends successfully with a new robust computational model that is closed under reverse and other extensions.
\end{abstract}

\begin{keyword}
Cost Register Automata, Weighted Automata, Semirings. 
\end{keyword}

\end{frontmatter}

\section{Introduction}
\label{sec:introduction}
Weighted automata (WA) are an expressible extension of finite state automata for computing functions over strings.
They have been extensively studied since Sch\"{u}tzenberger~\cite{Schuetzenberger56}, and its decidability problems~\cite{Krob92,AlmagorBK11}, extensions~\cite{DrosteG07}, logic characterization~\cite{DrosteG07,kreutzer2013quantitative}, and applications~\cite{Mohri97,culik1993image} have been deeply investigated.

Recently, Alur et al.~\cite{alur2013regular,alur2013decision} introduced the computational model of cost register automata (CRA), an alternative model for computing functions over strings. The idea of this model is to enhance deterministic finite automata with registers that can be updated by combining registers contents using operations over a fixed semiring. 
In contrast to automata models with counters~\cite{HopcroftU79}, CRA blindly updates its registers on each transition by using values computed on the  previous state.
%In other words,  register content cannot be used for taking decisions during a computation. 
In~\cite{alur2013regular}, it was shown that CRA are strictly more expressive than WA.
Interestingly, it was also shown that a natural subfragment of CRA is equally expressive to WA, which gives a new representation for understanding this class of functions.

New representations for WA allows to study natural subclasses of functions that could not be proposed from the classical perspective.
% (e.g. weighted automata~\cite{Schuetzenberger56,DrosteHWA09,Sakarovitch09}). 
This is the case for the class of copyless CRA that where proposed in~\cite{alur2013regular,AlurC11}. % in hope of finding decidable subclasses of CRA. 
The idea of the so-called copyless restriction is to use each register at most once in every transition. Intuitively, this automaton model is what we call ``register-deterministic'' in the sense that it cannot copy the content of each register, similar to a deterministic finite automaton that cannot make a copy of its current state.
The copyless restriction was successfully used in the context of streaming tree transducers~\cite{alur2012streaming} for capturing MSO-transductions over trees and it was proposed as a natural restriction over CRA. 
Furthermore, copyless CRA are an excellent candidate for having good decidability properties.
It was stated in~\cite{alur2013regular} that ``the existing proofs of the undecidability of equivalence rely on the unrestricted non-deterministic nature of weighted automata''
%the existing proofs of undecidability in WA rely on the unrestricted non-deterministic nature of the model
and, thus, it is believed that copyless CRA might have good decidability properties.
Despite that this is a natural and interesting model for computing functions, research on this line has not been pursued further and not much is known about copyless~CRA.

In this paper we study the structure, expressiveness, and closure properties of copyless CRA.
We start by developing a toolkit of structural properties for analysing copyless CRA.
Towards this goal, we introduce a \emph{normal form} on the registers of copyless CRA.
We show that every copyless CRA can be put in this normal form which considerably simplifies the analysis of this model. 
With this restriction we provide further results that explain the flow and grow of registers content during a run.
Specifically, we prove that from its normal form one can identify a subset of registers that cannot be reset and are constantly growing during a run. 
These registers are called \emph{stable registers}, in the sense of having a stable assignment for transitions (see Section~\ref{sec:structure_copyless}).
We show that stable registers lead the behaviour of copyless CRA and that they are crucial to analyse the growing rate of loops. 

Then we turn our attention on studying the expressivity of copyless CRA. 
As a proof of concept, we use the structural properties developed in this paper to compare the expressiveness of copyless CRA with the class of functions defined by WA. 
We show that copyless CRA are strictly less expressive than WA.
% Furthermore, we show that copyless CRA are strictly less expressive than regular cost functions, a class of functions introduced in~\cite{alur2013regular}.
It is important to stress that it was previously believed that copyless CRA are strictly less expressive than WA, but this is the first paper which proves this statement formally. 

In the last sections, we focus on the robustness of copyless CRA in terms of its closure properties.
The robustness of a computational model is usually measured in terms of how stable is the model when new operations or extensions are allowed. 
Deterministic finite automata are a good example for the previous statement: they are closed under several operations like union or intersection and, further, they can be enhanced by non-determinism and regular look-ahead without changing the class of recognized languages.
% , regular look-ahead~\cite{XXX},
These properties are probably one of the reasons behind its fruitful connection with MSO logic or finite monoids~\cite{Buchi60,Rabin1959}.
Unfortunately, this measure of robustness put copyless CRA in an undesirable position: our expressiveness result shows that copyless CRA are not closed under reverse and, furthermore, under any extensions regarding directions of its reading head. 
This implies that the behaviour of copyless CRA is asymmetric with respect to the input, which buried our expectations of a robust class for computing functions.

The lack of good closure properties for copyless CRA fuels our interest in its subclasses. 
We consider a natural fragment of copyless CRA, called \emph{bounded alternation copyless} CRA (BAC).
This class was previously introduced in~\cite{kickasspaper} and  characterized in terms of the so-called \emph{Maximal Partition logic}.
%Interestingly, this fragment of copyless CRA does not lose much in terms of expressivity. In fact, most examples in~\cite{alur2013regular} and in this paper are definable in BAC.
In contrast to copyless CRA, BAC are robust under several and natural extensions previously considered in~\cite{alur2013regular,alur2012streaming}. 
Specifically, we show that BAC are closed under unambiguous non-determinism, regular look-ahead and under reverse.
Furthermore, all the structural toolkit introduced for copyless CRA also extend for this class.
% We believe that the constructions are non-trivial and the techniques introduced (e.g. $\cX$-reduction) are interesting on their own.  
These results emphasize that BAC is a promising computational model in the world of quantitative functions, showing that there exists a rich theory of functions below the class of WA.

\emph{New material in this article}. Preliminary versions of some of the results in this
work appeared in~\cite{MazowieckiR16}.  However, this article contains substantial new material. We include the full proofs of the main results and characterization presented in~\cite{MazowieckiR16}. In particular, the proofs in Section~\ref{sec:structure_copyless} are of fundamental importance to understand the structure of copyless CRA (e.g. Propositions~\ref{proposition:normal_form} to~\ref{prop:long_exp}) and the proof of Theorem~\ref{theorem:counterexample} shows how to use all the structural results to show its limits of expressibility. 
Furthermore, we believe that the proof of Theorem~\ref{theo:unambiguous-theorem} is interesting on its own: the determinization of unambiguous non-determistic BAC is a non-standard automata construction. The construction has some resemblance to the well-known Safra's construction~\cite{Safra88}, in the sense of keeping a tree structure of runs that is updated in a non-trivial way.
In terms of new results, in Section~\ref{sec:natural_semiring} we show that copyless CRA is strictly less expressive than WA over the semiring of natural numbers. Interestingly, the proof over the natural semiring is simpler than over the max-plus semiring and works over the one-letter alphabet.

\emph{Organization}. In Section~\ref{sec:preliminaries} we introduce copyless CRA and some basic definitions. In Section~\ref{sec:structure_copyless} we introduce the normal form and analyze the content of registers during the runs of copyless CRA. In Section~\ref{sec:natural_semiring} we work over the natural semiring and we show that copyless CRA are strictly contained in the class of weighted automata. Then we turn our attention to the max-plus semiring and we show in Section~\ref{sec:nonexpressibility} that the class of copyless CRA is not closed under reverse, which implies that copyless CRA are strictly contained in weighted automata also over the max-plus semiring. 
%In Section~\ref{sec:weighted} we compare copyless CRA with the model of weighted automata. 
In Section~\ref{sec:bounded_alternation} we define BAC and show some closure properties of this class. We conclude in Section~\ref{sec:conclusions} with possible directions for future research.

\section{Preliminaries}
\label{sec:preliminaries}
In this section, we recall the definitions of cost register automata and the copyless restriction. We start with the definitions of expressions and substitutions over a semiring that are standard in the area.

% \noindent\textbf{Finite automata.}
% A finite automaton~\cite{HopcroftU79} over $\Sigma^*$ is a tuple $\cA = (Q, \Sigma, \delta, q_0, F)$ where $Q$ is a finite set of states, $\delta \subseteq Q \times \Sigma \times Q$ is a finite transition relation, $q_0$ is the initial state and $F$ is the final set of states. 
% A run $\rho$ of $\cA$ is a sequence of transitions of the form:
% \[
% p_0 \:\trans{a_1}\: p_1  \:\trans{a_2}\: \cdots\:\trans{a_n}\: p_n
% \]
% where $(p_i, a_{i+1}, p_{i+1}) \in \delta$ for every $i \leq n$.  
% We say that $\rho$ is a run of $\cA$ over a word $w = a_1 \ldots a_n$ if $p_0 = q_0$.
% Furthermore, we say that $\rho$ is an accepting run if $p_n \in F$. 
% A word $w$ is accepted by $\cA$ if there exists an accepting run of $\cA$ over $w$. We denote $\cL(\cA)$ the language of all words accepted by $\cA$.
% 
% A finite automaton $\cA$ is called deterministic if $\delta$ is a function $\delta: Q \times \Sigma \rightarrow Q$. 
% When $\cA$ is deterministic, we extend the function $\delta$ from letters to words and we denote this extension by $\delta^*: Q \times \Sigma^* \rightarrow Q$, i.e., $\delta^*(q, \epsilon) = q$ and $\delta^*(q, w \cdot a) = \delta(\delta^*(q, w), a)$.

\noindent\textbf{Semirings and functions.} A semiring is a structure $\SR = (\sr, \add, \mult, \zero, \one)$ where $(\sr, \add, \zero)$ is a commutative monoid, $(\sr-\{\zero\}, \mult, \one)$ is a monoid, multiplication distributes over addition, and $\zero \mult s = s \mult \zero = \zero$ for each $s \in \sr$. 
If the multiplication is commutative, we say that $\SR$ is commutative. In this paper, we always assume that $\SR$ is commutative and we usually denote the set of elements $S$ by the name of the semiring $\SR$.
For examples of semirings we will consider the \emph{semiring of natural numbers} $\natsemiring = (\nat, +, \cdot, 0, 1)$, the \emph{min-plus semiring} $\trop = (\natinf, \min, +, \infty, 0)$ and the \emph{max-plus semiring} $\artic = (\natninf, \max, +, -\infty, 0)$ which are standard semirings in the field of weighted automata~\cite{DrosteHWA09}. 

In this paper, we study the specification of functions from strings to values, namely, from $\Sigma^*$ to $\SR$.
We say that a function $f: \Sigma^* \rightarrow \SR$ is definable by a computational system $\cA$ (e.g. weighted automaton, or CRA) if $f(w) = \asem{\cA}(w)$ for any $w \in \Sigma^*$, where $\asem{\cA}$ is the semantics of $\cA$ over strings.  
For any string $w$, we denote by $w^r$ the reverse string. 
We say that a class of functions $F$ is \emph{closed under reverse}~\cite{alur2013regular} if for every $f \in F$ there exists a function $f^r \in F$ such that $f^r(w^r) = f(w)$ for all $w \in \Sigma^*$.

\noindent\textbf{Variables, expressions, and substitutions.}  Fix a semiring $\SR = (\sr, \add, \mult, \zero, \one)$ and a finite set of variables $\cX$ disjoint from $\sr$.
We denote by $\expr(\cX)$ the set of all syntactical \emph{expressions} that can be defined with $\cX$, constants in $\sr$, and the binary operations $\add, \mult$. 
%An expression in $\expr(\cX)$ is basically a formula in $\SR$ where variables in $\cX$ are missing values. 
For any expression $e \in \expr(\cX)$ we denote by $\var(e)$ the set of variables in $e$. 
We call an expression $e \in \expr(\cX)$ a \emph{ground expression} if $\var(e) = \emptyset$. For any ground expression we define $\asem{e} \in \SR$ to be the evaluation of $e$ with respect to $\SR$.

A \emph{substitution} over $\cX$ is defined as a mapping $\sigma: \cX \rightarrow \expr(\cX)$.
We denote the set of all substitutions over $\cX$ by $\subs(\cX)$.
A \emph{ground substitution} $\sigma$ is a substitution where the expression $\sigma(x)$ is ground for every $x \in \cX$.
Any substitution $\sigma$ can be extended to a mapping $\hat{\sigma}: \expr(\cX) \rightarrow \expr(\cX)$ such that, for every $e \in \expr(\cX)$, $\hat{\sigma}(e)$ is the expression $e[\sigma]$ of substituting each $x \in \var(e)$ by the expression $\sigma(x)$.
For example, if $\sigma(x) = 2x$ and $\sigma(y) = 3y$, and $e = x+y$, then $\hat{\sigma}(e) = 2x + 3y$.
Using the extension $\hat{\sigma}$, we can define the composition substitution $\sigma_1 \comp \sigma_2$ by $\sigma_1 \comp \sigma_2(x) = \hat{\sigma}_1(\sigma_2(x))$ for each $x \in \cX$.

A valuation is defined as a substitution of the form $\nu: \cX \rightarrow \SR$. 
We denote the set of all valuations over $\cX$ by $\val(\cX)$.
Clearly, any valuation $\nu$ composed with a substitution $\sigma$ defines a ground substitution, since $\var(\nu \circ \sigma (x))=\emptyset$ for all $x \in \cX$. We say that two expressions $e_1$ and $e_2$ are equal (denoted by $e_1 = e_2$) if they are equal up to evaluation equivalence, that is, $\asem{\nu \circ e_1} = \asem{\nu \circ e_2}$ for every valuation $\nu \in \val(\cX)$.
Similarly, we say that two substitutions $\sigma_1$ and $\sigma_2$ are equal (denoted by $\sigma_1 = \sigma_2$) if $\sigma_1(x) = \sigma_2(x)$ for every $x \in \cX$. 

\noindent\textbf{Cost register automata.} 
A cost register automaton (CRA) over a semiring $\SR$ \cite{alur2013regular} is a tuple $\cA = (Q, \AL, \cX, \delta, q_0, \nu_0, \mu)$  
where $Q$ is a finite set of states, $\AL$ is a finite alphabet, $\cX$ is a finite set of variables (we also call them registers), $\delta: Q \times \AL \rightarrow Q \times \subs(\cX)$ is the transition function, $q_0$ is the initial state, $\nu_0: \cX \rightarrow \SR$ is the initial valuation, and $\mu: Q \rightarrow \expr(\cX)$ is the final output function.  
A configuration of $\cA$ is a tuple $(q, \nu)$ where $q\in Q$ and $\nu \in \val(\cX)$ represents the current values in the variables of $\cA$.
Given a string $w = a_1 \ldots a_n \in \AL^*$, the run of $\cA$ over $w$ is a sequence of configurations:
$
(q_0, \nu_0) \:\trans{a_1} \: (q_1, \nu_1) \: \trans{a_2} \: \ldots \: \trans{a_n} \: (q_n, \nu_n) 
$
such that, for every $1 \leq i \leq n$,  $\delta(q_{i-1}, a_i) = (q_i, \sigma_i)$ and $\nu_i(x) = \asem{\nu_{i-1} \comp \sigma_i(x)}$ for each $x \in \cX$.
The output of $\cA$ over $w$, denoted by $\asem{\cA}(w)$, is $\asem{\nu_n \circ \mu(q_n)}$.
Note that we assume that every CRA defines a (total) function from words to values, opposed to the definition in \cite{AlurDDRY11} where CRA can define partial functions. This difference is not significant and all our results can be extended from total functions to partial functions. 

The run of $\cA$ over $w$ can be equally defined in terms of ground expressions rather than values. A ground configuration of $\cA$ is a tuple $(q, \varsigma)$ where $q \in Q$ and $\varsigma \in \subs(\cX)$ is a ground substitution. Given a string $w = a_1 \ldots a_n \in \AL^*$, the ground run of $\cA$ over $w$ is a sequence of ground configurations:
$
(q_0, \varsigma_0) \:\trans{a_1} \ldots \: \trans{a_n} \: (q_n, \varsigma_n) 
$
such that for $1 \leq i \leq n$,  $\delta(q_{i-1}, a_i) = (q_i, \sigma_i)$, $\varsigma_0 = \nu_0$ and $\varsigma_i(x) = \varsigma_{i-1} \circ \sigma_i(x)$ for each $x \in \cX$.
We denote the output ground expression of $\cA$ over a string $w$ by $|\cA|(w) = \varsigma_{n} \circ \mu(q_n)$.
Notice that, in contrast to ordinary runs, ground runs keep ground expressions as partial values of the run.
It is easy to see that  $\asem{\cA}(w) =  \asem{|\cA|(w)}$.

We define the transitive closure of the transition function $\delta^* : Q \times \AL^* \to Q \times \subs(\cX)$, by induction over the word-length. 
Formally, $\delta^*(q, \epsilon) = (q, \id)$ where $\epsilon$ is the empty word and $\id(x) = x$ for all $x \in \cX$; and $\delta^*(q_1, w \cdot a) = (q_3, \sigma \circ \sigma')$, whenever $\delta^*(q_1,w) = (q_2,\sigma)$ and $\delta(q_2, a) = (q_3,\sigma')$. 
For a CRA $\cA$ we define the set $\subs(\cA) = \{ \sigma \mid \text{$\delta^*(p, w) = (q, \sigma)$ for $p,q \in Q$ and $w \in \Sigma^*$}\}$. 

\noindent\textbf{Copyless restriction and copyless CRA.} We say that an expression $e \in \expr(\cX)$ is \emph{copyless} if $e$ uses every variable from $\cX$ at most once.
For example, $x \cdot (y + z)$ is copyless but $x \cdot y + x \cdot z$ is not copyless (i.e. $x$ is mentioned twice).
Notice that the copyless restriction is a syntactic constraint over expressions. 
Furthermore, we say that a substitution $\sigma$ is \emph{copyless} if for every $x \in \cX$ the expression $\sigma(x)$ is copyless and $\var(\sigma(x)) \cap \var(\sigma(y)) = \emptyset$ for $x, y \in \cX$ and $x \neq y$. 
%Copyless substitutions, similar to copyless expressions, are restricted in such a way that each variable is used at most once in the whole substitution.

A CRA $\cA$ is called \emph{copyless}~\cite{alur2013regular} if for every transition $\delta(q_1, a) = (q_2, \sigma)$  the substitution $\sigma$ is copyless; and for every state $q \in Q$ the output expression $\mu(q)$ is copyless.
In other words, every time $\cA$ updates registers or outputs a value, each register can be used only once.
It is straightforward that if $\cA$ is copyless then all substitutions $\sigma \in \subs(\cA)$ are also copyless.
In the following, we give some  examples of copyless CRA. %and compare their expressiveness with weighted automata. 

\begin{figure}[t]
	\begin{center}
\hspace{-1cm}
		\begin{tikzpicture}[-\string>,\string>=stealth',shorten \string>=1pt,auto,node distance=1.8cm,semithick,initial text={}]
		\tikzstyle{every state}=[fill=white,draw=black,text=black]

		\node[state, draw=white] 		(p0) at (-4,0) {};
		\node[state, right of=p0, node distance=1.5cm] 		(p) {};
		\node[state,  below of=p, node distance=1.5cm, draw=white] (pf) {};
		
		\draw (p0) edge[pos=0.01] node {$x,y:=0 \;\;\;\;$} (p);
		\draw (p) edge[pos=0.8] node {$\; \max\{x,y\}$} (pf);
		
		\path (p)   edge	[loop above]	node {
			\renewcommand{\arraystretch}{0.7}
			$\;\;\;\;\;\;\;\;\;\; \;\;\;\;\; \;\;\;\;\; \;\;\;\;\;  a \;\;   \begin{array}{|rcl}
			\;\; x & := & \max\{x,y\} \\
			\;\; y & := & 0
			\end{array}$
		}  (p)
		(p)   edge	[loop right]	node {
			\renewcommand{\arraystretch}{0.7}
			$\; b \;\;
			\begin{array}{|rcl}
			\;\; x & := & x \\
			\;\; y & := & y + 1
			\end{array}$}  (p);
	
% 		\begin{center}
% 		\begin{tikzpicture}[-\string>,\string>=stealth',shorten \string>=1pt,auto,node distance=1.8cm,semithick,initial text={}]
% 		\tikzstyle{every state}=[fill=white,draw=black,text=black]
		
		\node[state, draw=white] 		(p02) at (3,0) {};
		\node[state, right of=p02, node distance=1.5cm] 		(p2) {};
		\node[state,  below of=p2, node distance=1.7cm, draw=white] (pf2) {};
		
		\draw (p02) edge[pos=0.01] node {$x,y,z:=0 \;\;\;\;\;\;\;\;\;\;$} (p2);
		\draw (p2) edge[pos=0.8] node {$\; x+\max\{y,z\}$} (pf2);
		
		\path (p2)   edge	[loop above]	node {
			\renewcommand{\arraystretch}{0.7}
			\;\;\;\;\;\;\;\;\;\;\;\;\;\;\;\;\;\;\;\;\;\;\;\;\;\;\;\;\;  $\# \;\;   \begin{array}{|rcl}
			\;\; x & := & x + \max\{y,z\} \\
			\;\; y,z & := & 0 
			\end{array}$
		}  (p2)
		(p2)   edge	[loop right]	node {
			$
			\renewcommand{\arraystretch}{1.2}
			\begin{array}{ll}
			\; a \! &
			\renewcommand{\arraystretch}{1}
			\begin{array}{|rcl}
			\;\; y & := & y+1 
			\end{array} \\
			\; b \! &
			\renewcommand{\arraystretch}{1}
			\begin{array}{|rcl}
			\;\; z & := & z+1 
			\end{array} 
			\end{array}
			$
		}  (p2);
% 		\end{tikzpicture}
% 		\vspace{-0.75cm}
		\node at ($(p) + (-1.5, -0.7)$) {$\cA_1$};
		\node at ($(p2) + (-1.5, -0.7)$) {$\cA_2$};
			\end{tikzpicture}
		\vspace{-0.8cm}
\caption{Examples of copyless cost-register automata}\label{fig:CRA}
	\end{center}
	\vspace{-0.75cm}
\end{figure}
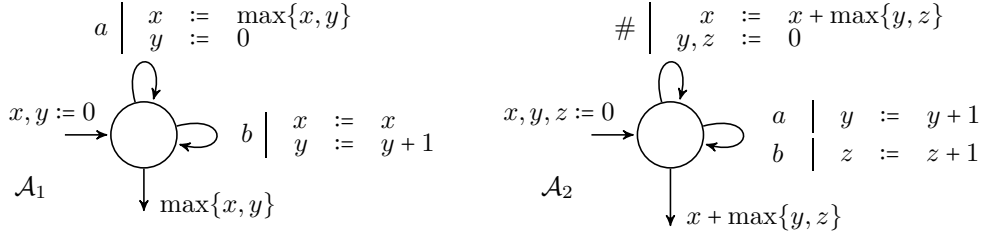

\begin{example} \label{ex:max-b-substrings}
	Let $\SR$ be the max-plus semiring $\artic$ and $\Sigma = \{a,b\}$. 
	Consider the function $f_1$ that for a given string $w \in \Sigma^*$ computes the longest substring of $b$'s.
	This can be easily defined by the CRA $\cA_1$ in Figure~\ref{fig:CRA}. %with two registers $x$ and~$y$. 
	The CRA $\cA_1$ stores in the $y$-register the length of the last suffix of $b$'s and in the $x$-register the length of the longest substring of $b$'s seen so far.
% 	After reading a $b$-symbol $\cA_1$ adds one to $x$ (the $b$-infix has increased by one) and it keeps $y$ unchanged. Furthermore, after reading an $a$-symbol it resets $x$ to zero and updates $y$ by comparing the substring of $b$'s that has just finished (i.e. the previous $x$-content) with the length of the longest substring of $b$'s (i.e. the previous $y$-content) that has been seen so far.
% 	Finally, it outputs the maximum between $x$ and $y$. 	
	One can easily check that $\cA_1$ is a copyless CRA. 
	Indeed, each substitution is copyless and the final output expression $\max\{x, y\}$ is copyless as well. 
\end{example}

\begin{example}
	\label{ex:no-poly-ambiguous}
	Let $\SR$ be the max-plus semiring $\artic$ and $\Sigma = \{a,b, \# \}$.
	Consider the function $f_2$ such that, for any $w \in \Sigma^*$ of the form $w_0 \# w_1 \# \ldots \# w_n$ with $w_i \in \{a,b\}^*$, it computes the maximum number of $a$'s or $b$'s for each substring $w_i$ (i.e. $\max\{|w_i|_a, |w_i|_b\}$) and then it sums these values over all substrings $w_i$, that is, $f_2(w) = \sum_{i=0}^n \max\{|w_i|_a, |w_i|_b\}$. One can check that the copyless CRA $\cA_2$ in Figure~\ref{fig:CRA} computes $f_2$.
	The copyless CRA $\cA_2$ follows similar ideas to $\cA_1$: the registers $y$ and $z$ count the number of $a$'s and $b$'s, respectively, in the longest suffix without $\#$ and the register $x$ stores the partial output without considering the last suffix of $a$'s and $b$'s.
	Note that in Figure~\ref{fig:CRA} we omit an assignment if a register is not updated (i.e. it keeps its previous value). 
	For example, for the $a$-transition we omit the assignments $x:=x$ and $z:=z$ for the sake of presentation. 
% 	Similarly, we also omit the assignment $x:=x$ and $z:=z$ for the $b$-transition.
	One should keep in mind these assignments because of the copyless restriction.
	
% 	When the last substring $w_i$ over $\{a,b\}$ is finished (i.e. there comes a $\#$-symbol or the input ends), then $\cA_2$ adds the maximum number of $a$'s or $b$'s in $w_i$ to $z$ (i.e. $z := z + \max\{x,y\}$).    
\end{example}

%Function $f_2$ is a counterexample for showing that copyless CRA are contained in the class of linear ambiguous weighted automata. 
%In~\cite{KlimannLMP04} it was shown that $f_2$ is definable by a weighted automata but it is not definable by any polynomial weighted ambiguous automata. 
%Interestingly, the previous example shows that copyless CRA can defined $f_2$ and, therefore, they can defined functions that are beyond the scope of polynomial ambiguous automata.

% \label{subsec:updates_without_zero}

\noindent\textbf{Weighted automata and linear CRA.} We will use the class of functions defined by weighted automata only to compare its expressibility with copyless CRA (see Section~\ref{sec:natural_semiring} and~\ref{sec:nonexpressibility}). For the sake of completeness, we give the definition and semantics of weighted automata here (see~\cite{Sakarovitch09,DrosteHWA09} for some concrete examples of weighted automata).
Fix a finite alphabet $\AL$ and a commutative semiring $\SR$.
A \emph{weighted automaton} (WA) over $\AL$ and $\SR$ is a tuple $\cA = (Q, \AL, E, I, F)$ where $Q$ is a finite set of states, $E: Q \times \AL \times Q \rightarrow \SR$ is a weighted transition relation, and $I, F: Q \rightarrow \SR$ are the initial and the final function, respectively~\cite{Sakarovitch09,DrosteHWA09}.
Usually, if $E(p, a, q) = s$, we denote this transition graphically by $p \:\trans{a/s}\: q$.
A \emph{run} $\rho$ of $\cA$ over a word $w = w_1 \ldots w_n$ is a sequence of transitions:
$
\rho = q_0 \:\trans{w_1/s_1}\: q_1  \:\trans{w_2/s_2}\: \cdots\:\trans{w_n/s_n}\: q_n.
$
The \emph{weight} of a run $\rho$ of $\cA$ over $w$ is defined by
$
|\rho| = I(q_0) \mult (\bigodot_{i=1}^{n} s_i) \mult F(q_n).
$
We define $\Run_\cA(w)$ as the set of all runs of $\cA$ over $w$.
Finally, the weight of $\cA$ over a word $w$ is defined by
$
\asem{\cA}(w) = \bigoplus_{\rho \in \Run_\cA(w)} |\rho|
$
where the sum is equal to $\zero$ if $\Run_\cA(w)$ is empty.

In~\cite{alur2013regular} it was shown that the class of functions defined by WA over some semiring $\SR$ is equally expressive to the class of functions defined by linear CRA over $\SR$. Formally, we call a CRA \emph{linear} if all its substitutions and also its final output function are made by linear expressions over $\SR$, namely, expressions of the form $\bigadd_{i=1}^n s_i \mult x_i$ for some values $s_i \in \SR$. For example, the CRA $\cA_1$ in Example~\ref{ex:max-b-substrings} is linear and can be defined by a WA, but the CRA $\cA_2$ in Example~\ref{ex:no-poly-ambiguous} is not linear, given that the expression $x+\max\{y,z\}$ is not linear (although it is not hard to redefine $\cA_2$ by an equivalent weighted automaton or linear CRA).

\noindent\textbf{Trim assumption.}
For technical reasons, in this paper we assume that our finite automata and cost register automata are always \emph{trim}, namely, all their states are reachable from some initial states (i.e., they
are accessible) and they can reach some final state (i.e., they are co-accessible).
It is worth noticing that verifying if a state is accessible or co-accessible is reduced to a reachability test in the transition graph~\cite{papadimitriou1993computational} and this can be done in \NLogSpace.
Thus, we can assume without lost of generality that all our automata are trimmed. 

% \noindent\textbf{Functions over words and closure under reverse.} 
% We denote by $\cF$ the class of all functions from $\Sigma^*$ to $\SR$.
% For any word $w$, let $w^r$ denote the reverse string. 
% We say that a class of functions $F \subseteq \cF$ is \emph{closed under reverse}~\cite{alur2013regular} if for every $f \in F$ there exists a function $f^r \in F$ such that $f^r(w) = f(w^r)$ for all $w \in \Sigma^*$.
% Finally, we say that a function $f \in \cF$ is definable (or recognizable) by a CRA $\cA$ if $f(w) = \asem{\cA}(w)$ for any $w \in \Sigma^*$.  

\noindent\textbf{Copyless expressions as sum of monomials.}
By distributivity it is easy to show that every copyless expression can be rewritten as a sum of monomials (despite that the new expression does not have to preserve the copyless property). 
This representation of copyless expressions as sum of monomials is an easy technical result and the reader can safely skip its proof in a first read.
% easy to  and will be later used in the proof of Theorem~\ref{theorem:counterexample}. 
% For this reason, the proof is given in the appendix and the reader can skip this result in a first read.

\begin{lemma} \label{lemma:copyless-expressions}
	For any copyless expression $e$, there exist an expression $e'$ of the form
	$\bigoplus_{i = 1}^{k} \, \left( c_i \odot \bigodot_{x \in X_i} x \right)$
	where $k \geq 0$, $X_i \subseteq \var(e)$, $c_i \in \SR$, and all sets $X_1, \ldots, X_k$ are different, such that $e \equiv e'$.
\end{lemma}

\begin{proof}
	The lemma is shown by induction on the size of $e$. For the base case, when $e$ is equal to a constant or a variable, the lemma trivially holds by taking $e' = e$. For the inductive case, suppose that $e = e_1 \otimes e_2$ where $\otimes$ is either $\add$ or $\mult$. 
	By the inductive hypothesis we know that there exist expressions $e_1'$ and $e_2'$ equivalent to $e_1$ and $e_2$, respectively, such that for $j \in \{1,2\}$:
	\[
	e_j' \; \equiv \; \bigoplus_{i = 1}^{k_j} \, \left( c_i^j \odot \bigodot_{x \in X_i^j} x \right)
	\]
	where $X_1^j, \ldots, X_{k_j}^j$ is a sequence of different sets over $\cX$ and $c_1^j, \ldots, c_{k_j}^j$ is a sequence of values over $\SR$ for $k_j \ge 0$.
	Given that $e$ is a copyless expression we know that $\var(e_1) \cap \var(e_2) = \emptyset$. 
	% 	Notice that $X_i^j \subseteq \var(e_j)$ for $j \in \{1,2\}$ and $i \leq k_j$. 
	% 	Otherwise, there exists $X_i^j \not \subseteq \var(e_j)$ and $c_i^j$ must be equal to $\zero$ (i.e. $X_i^j$ is not contributing to $e_j'$) and we can omit $X_i^j$ in $e_{j}'$. 
	Then since $X_i^j \subseteq \var(e_j)$ and $\var(e_1) \cap \var(e_2) = \emptyset$, we get:
	\begin{align}
	X_{i_1}^1 \cap X_{i_2}^2 = \emptyset & \;\;\;\;\;\;\; \text{for every } i_1 \leq k_1 \text{ and } i_2 \leq k_2. \label{fact:non-intersecting-variables} 
	\end{align}
	Now we consider two cases when $\otimes$ is either $\add$ or $\mult$. If $e = e_1 \add e_2$, then by considering $e' = e_1' \add e_2'$ the lemma is proved.
	Otherwise, $e = e_1 \mult e_2$ and we get:
	\begin{align*}
	e_1' \mult e_2' & \;\; = \;\; \bigoplus_{i = 1}^{k_1} \, \big( c_i^1 \odot \bigodot_{x \in X_i^1} x \big) \ \mult \ \bigoplus_{i = 1}^{k_2} \, \big( c_i^2 \odot \bigodot_{x \in X_i^2} x \big) \\
	& \;\; = \;\; \bigoplus_{i_1 = 1}^{k_1} \, \bigoplus_{i_2 = 1}^{k_2} \ \big( c_{i_1}^1 \mult c_{i_2}^1 \odot \bigodot_{x \in X_{i_1}^1 \cup  X_{i_2}^2} x \big) \;\; = \;\; e' \\
	\end{align*}
	Recall that $X_{i_1}^1 \cap X_{i_2}^2 = \emptyset$ by~(\ref{fact:non-intersecting-variables}).
	It remains to show that all sets of the form $X_{i_1}^1 \cup X_{i_2}^2$ are different.
	Indeed, all sets $X_{i_1}^1$ are pairwise different and the same holds for $X_{i_2}^2$. 
	This means that all sets $X_{i_1}^1 \cup X_{i_2}^2$ are pairwise different as well.
\end{proof}

\noindent\textbf{Removing zeros from copyless CRA.}
\label{subsection:zeros}
We say that an expression $e \in \expr(\cX)$ is \emph{reduced} if $e = \zero$ or the $\zero$-constant is not mentioned inside $e$. 
It is straightforward to show that for any expression $e$ there exists an equivalent expression $e^*$ that is reduced. 
Indeed, one can construct inductively an equivalent expression by using the following reductions: $e \add \zero = e$ and $e \mult \zero = \zero$. 
Then by reducing each subexpression recursively, the resulting expression is either $\zero$ or do not use the $\zero$-constant at all. 
Further, note that if $e$ is copyless, then its reduced expression $e^*$ is copyless as well. 

Let $f : \AL^* \to \SR$ be a function definable by a copyless CRA. We say that $f$ is a \emph{non-zero function} if $f(w) \neq \zero$ for all $w \in \Sigma^*$.
The following result shows that, without lost of generality, we can assume that all constants in a copyless CRA are different from $\zero$ whenever the function defined by $\cA$ is a non-zero function.
Like for Lemma~\ref{lemma:copyless-expressions}, the proof of Lemma~\ref{proposition:non-zero} is an easy but technical construction, and the reader can skip its proof in a first read.

\begin{lemma}
	\label{proposition:non-zero}
	Let $\cA$ be a copyless CRA such that $\asem{\cA}$ is a non-zero function. Then there exists a copyless CRA $\cA'$ such that its initialization, substitutions and final output functions are reduced and different from~$\zero$.
\end{lemma}

\begin{proof}
 We show how to avoid keeping a $\zero$ value in registers that where forced to be $\zero$ (i.e. forced by a substitution $\sigma$ such that $\sigma(x) =\zero$ for some register $x$). The idea is keep the information which registers are equal to $\zero$ in the states. 
 Let $\cA  =  (Q, \AL, \cX, \delta, q_0, \nu_0, \mu)$ be a copyless CRA.
 We define a new copyless CRA $\cA' =  (Q', \AL, \cX, \delta, q_0', \nu_0', \mu')$, where $Q' = Q \times 2^\cX$ is the new set of states, $q_0' = (q_0, S_0)$ where $S_0$ is the set of registers such that $x \in S$ iff $\nu_0(x) = \zero$, and  $\nu_0'$ is the same as $\nu_0$ for registers in $\cX \setminus S_0$. Further, for registers $x \in S_0$ we define $\nu_0'(r) = \one$ (or any other constant in $\SR \setminus \{\zero\}$).
 
 For the definition of $\delta'$ and $\mu'$ we introduce some notation. Let $S \subseteq \cX$ and $e$ be an expression over $\cX$. We define the expression $e[S]$ as the reduced expression $d^*$, where $d$ is the result of taking $e$ and replacing all registers $x \in S \cap \var(e)$ by $\zero$.
 We are ready to define $\delta'$ and $\mu'$. 
 For every $(q, S) \in Q'$ and $a \in \Sigma$, we define $\delta'((q,S), a) = ((q',S'), \sigma')$, where $\delta(q, a) = (q',\sigma)$, $S'$ is the set of all registers $x$ such that $\sigma(x)[S]$ is equal to $\zero$, and $\sigma'$ is defined for every $x \in \cX$ as follows:
 \[
 \sigma'(x) \; = \; \left\{ 
 \begin{array}{ll}
 \sigma(x)[S] \;\;& \text{if } x \notin S' \\
 \one & \text{if } x \in S'
 \end{array}
 \right.
 \]
 Finally, we define the output function $\mu'((q, S)) = \mu(q)[S]$, for every $(q, S) \in Q'$.
 
 It is straightforward to show that $\cA'$ and $\cA$ define the same function and that $\nu_0'$, $\delta'$ and $\mu'$ do not use the $\zero$-constant.
 Note that there cannot exist a reachable state $(q, S) \in Q'$ such that $\mu'((q, S)) = \zero$. Otherwise, the output of the function defined by $\cA$ will be $\zero$ for some input in $\Sigma^*$, which contradicts the fact that $\asem{\cA}$ is a non-zero function.	
\end{proof}

We say that such an automaton $\cA$ is non-zero. Similarly, a substitution $\sigma$ is non-zero if $\sigma(x)$ is non-zero for every register $x$.
Furthermore, we say that a semiring is a \emph{non-zero semiring} if every expression $e$ that does not use $\zero$ cannot evaluate to $\zero$. For example in the $\artic$ semiring, where $\zero = -\infty$ if an expression uses only natural numbers then with $+$ and $\max$ operation one can never get $-\infty$. With this property every non-zero CRA over a non-zero semiring do not have $\zero$ in the registers in any configuration.
We will need this property of the $\artic$ semiring later in Section~\ref{sec:structure_copyless} and~\ref{sec:nonexpressibility}.

% the proof is given in the appendix.

\section{Structural properties of copyless CRA}
\label{sec:structure_copyless}
In this section we analyze the structure of copyless CRA and develop some machinery that will be useful in Section~\ref{sec:nonexpressibility}. 
These results will help to understand the internal structure of copyless CRA.

\subsection{Normal form} Let $\cA = (Q, \AL, \cX, \delta, q_0, \nu_0, \mu)$ be a copyless CRA and let $\preceq$ be a predefined linear order over $\cX$.
We say that a substitution $\sigma : \cX \to \expr(\cX)$ is in normal form with respect to $\preceq$ if $x \preceq y$ for all $x \in X$ and all $y \in \var(\sigma(x))$.
In other words, all variables mentioned in $\sigma(x)$ are greater or equal to $x$ with respect to~$\preceq$.
Furthermore, we write that $\cA$ is in \emph{normal form} with respect to $\preceq$ if every $\sigma \in \subs(\cA)$ is in normal form with respect to $\preceq$.
For example, the copyless CRAs in Example~\ref{ex:max-b-substrings} and~\ref{ex:no-poly-ambiguous} are in normal form with respect to the orders $x \preceq y$ and $x \preceq y \preceq z$, respectively.
% The idea behind normal form is to prevent undesirable behavior of copyless CRA.
For the sake of simplification, we assume that every set of registers $\cX$ is given with a linear order $\preceq_{\cX}$ and instead of writing that $\cA$ or $\sigma$ are in normal form with respect to $\preceq_{\cX}$ we write in short that $\cA$ or are in normal form.

\begin{example}\label{example:normal_form}
	% \noindent\begin{minipage}{0.48\textwidth}
	
Consider the set of registers $\cX = \{x,y\}$ with the order $x \preceq y$.
The copyless CRA $\cB_1$ in Figure~\ref{fig:CRA2} is not in normal form, because of the $b$-transitions.
On the other hand, the copyless CRA $\cB_2$ is in normal form. For both automata we omit the initial states, initial valuations and final output functions because they are not relevant for the discussion.
\end{example}

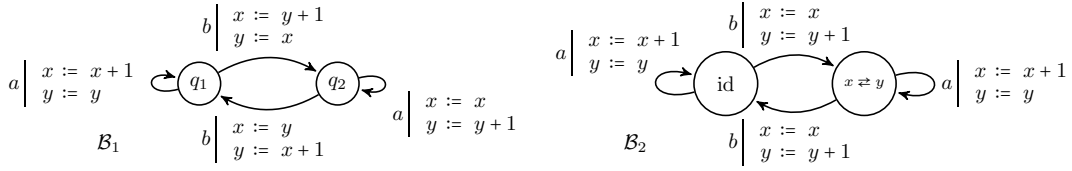
\begin{figure}[t]
	\begin{center}
		\begin{tikzpicture}[-\string>,\string>=stealth',shorten \string>=1pt,auto,node distance=2.3cm,semithick,initial text={},scale=0.4,every node/.style={scale=0.8}]
		\tikzstyle{every state}=[fill=white,draw=black,text=black,style={font=}]
		
		\node[state,minimum size=2ex] 		(p) at (-9,-0.4) {$q_1$};
		\node[state,minimum size=2ex] 		[right of=p](r) {$q_2$};
		\node at ($(p)+(-3,-2)$) {$\cB_1$};
		
		\path (p)   edge	[loop left]		node (s1) {$a \; 
			\renewcommand{\arraystretch}{0.9} 
			\begin{array}{|r c l}
			\; x & \!\!\!\! := \!\!\!\! & x+1 \\
			\; y & \!\!\!\! := \!\!\!\! & y
			\end{array}$}  (p)
		(p)   edge	[bend left]	node {$b \;
			\renewcommand{\arraystretch}{0.9}
			\begin{array}{|rcl}
			\; x & \!\!\!\! := \!\!\!\! & y+1 \\
			\; y & \!\!\!\! := \!\!\!\! & x
			\end{array}$}  (r)
		(r)   edge	[bend left]	node {$b \;  
			\renewcommand{\arraystretch}{0.9}
			\begin{array}{|rcl}
			\; x & \!\!\!\! := \!\!\!\! & y \\
			\; y & \!\!\!\! := \!\!\!\! & x+1
			\end{array}$}  (p)
		(r)   edge	[loop right]	node [align=center,below right]{$a \;   
			\renewcommand{\arraystretch}{0.9}
			\begin{array}{|rcl}
			\; x & \!\!\!\! := \!\!\!\! & x \\
			\; y & \!\!\!\! := \!\!\!\! & y+1
			\end{array}$}  (r);
		
		% 	\end{tikzpicture}
		% \end{center}
		% % \end{minipage}
		% % \begin{minipage}{0.50\textwidth}
		% \begin{center}
		% 	\begin{tikzpicture}[-\string>,\string>=stealth',shorten \string>=1pt,auto,node distance=2.3cm,semithick,initial text={},scale=0.5,every node/.style={scale=0.8}]
		% 	\tikzstyle{every state}=[fill=white,draw=black,text=black]
		
		\node[state,minimum size=7ex] 		(p) at (8.5,-0.4) {$\operatorname{id}$};
		\node[state,minimum size=7ex] 		[right of=p](r) {\scriptsize{$x \rightleftarrows y$}};
		
		\node at ($(p)+(-3,-2)$) {$\cB_2$};
		
		\path (p)   edge	[loop left]		node [above] (s1) {$a \; 
			\renewcommand{\arraystretch}{0.9} \begin{array}{|rclc}
			\; x & \!\!\!\! := \!\!\!\! & x+1 &\textcolor{white}{aaaa} \\
			\; y & \!\!\!\! := \!\!\!\! & y
			\end{array}$}  (p)
		(p)   edge	[bend left]	node {$b \;
			\renewcommand{\arraystretch}{0.9}
			\begin{array}{|rcl}
			\; x & \!\!\!\! := \!\!\!\! & x \\
			\; y & \!\!\!\! := \!\!\!\! & y+1
			\end{array}$}  (r)
		(r)   edge	[bend left]	node {$b \;  
			\renewcommand{\arraystretch}{0.9}
			\begin{array}{|rcl}
			\; x & \!\!\!\! := \!\!\!\! & x \\
			\; y & \!\!\!\! := \!\!\!\! & y+1
			\end{array}$}  (p)
		(r)   edge	[loop right]	node [align=center]{$a \;
			\renewcommand{\arraystretch}{0.9}
			\begin{array}{|rcl}
			\; x & \!\!\!\! := \!\!\!\! & x+1 \\
			\; y & \!\!\!\! := \!\!\!\! & y
			\end{array}$}  (r);
		\end{tikzpicture}
% 		\vspace{-0.5cm}
		\caption{Examples of copyless cost-register automata regarding normal form}\label{fig:CRA2}
	\end{center}
% 	\vspace{-0.75cm}
\end{figure}

In the previous example, the automaton $\cB_1$ uses the registers $x$ and $y$ to count the number of $a$'s and $b$'s.
However, depending on the current state both registers have either the number of $a$'s or the number of $b$'s. 
It is clear that one would like to avoid this type of behavior in a theoretical analysis. %this cyclic information can be easily stored in finite memory. 
Intuitively, one register should always contain the number of $a$'s and the other register the number of $b$'s.
One can clearly transform $\cB_1$ to an automaton in normal form by exchanging the use of $x$ and $y$ in the transitions and encoding this permutation between registers in the states.
This is precisely the automaton $\cB_2$ in Figure~\ref{fig:CRA2}. 
We generalize this idea for all copyless CRA in the next proposition.
%Note that the substitutions use the same expressions but the registers are substituted. We encode this with permutations in the states, for example in the state $q_2$ the content of the registers $x$ and $y$ are swapped.

% From now on we assume that copyless cost-register automata are given with a linear order on the registers so we can order the registers, i.e., write $\cX = \{x_1, \dots, x_n\}$ with the order defined by $x_i \preceq x_j$ iff $i \leq j$.
% The next proposition shows that every copyless CRA can be transformed into an equivalent copyless CRA in normal form with respect to $\preceq$. 

\begin{proposition}
\label{proposition:normal_form}
	For every copyless CRA $\cA$ there exists a copyless CRA in normal form $\cA'$ with the same set of registers such that they output the same ground expressions for all words and thus recognize the same function. The number of states in $\cA'$ can be bounded exponentially in the size of the automaton $\cA$.
\end{proposition}

\begin{proof}
Let $\cA \; = \; (Q, \AL, \cX, \delta, q_0, \nu_0, \mu)$ be a copyless CRA. 
 We define a copyless CRA $\cA'$ in normal form such that $\cA'$ computes the same function as $\cA$.
 The idea of $\cA'$ is to store a state $q \in Q$ and a permutation $\rho$ of the set $\cX$ such that, if $(q, \nu)$ is the current configuration of $\cA$ over an input $w$, then $((q,\rho), \nu')$ is the configuration of $\cA'$ over $w$ and $\nu(x) = \nu'(\rho(x))$. 
 In other words, the content of $\nu$ is still defined by $\nu'$ but the value $\nu(x)$ of $x$ is now in the register $\rho(x)$ for every $x \in \cX$.
 The permutation of registers' content will allow us to keep the normal form in $\cA'$. 
 Formally, let $\cA' \; = \; (Q', \AL, \cX, \delta', q_0', \nu_0', \mu')$, where:
 \[
 Q' \; = \; Q \times \{\rho \mid \rho \text{ is a permutation of the set } \cX\}
 \]
 is the set of states, $q_0' = (q_0, \id)$ is the initial state where $\id$ is the identity permutation, and $\nu_0' = \nu_0$ is the initial function.
 For the sake of presentation, let us show how the run of $\cA'$ will correspond to the run of $\cA$ before defining the transition function $\delta'$ and the output function~$\mu'$. 
 For an expression $e \in \expr(\cX)$ and a permutation $\rho$ over $\cX$, we define $\rho(e)$ to be the expression $e$ where the registers are replaced according to~$\rho$.
 Let $(q, \nu)$ and $(q', \nu')$ be the configuration of $\cA$ and $\cA'$, respectively, after reading $w \in \Sigma^*$. We will show:
\begin{equation}
\label{property}
q' = (q, \rho) \text{ for some permutation } \rho \text{ and } \nu(x) = \nu'(\rho(x)) \text{ for every } x \in \cX.
\end{equation}
Note that, for the word $\epsilon$, this correspondence holds since $(q_0, \nu_0)$ and $((q_0, \id), \nu_0')$ are the initial configuration of $\cA$ and $\cA'$, respectively, and $\nu_0'(\id(x)) = \nu_0'(x) = \nu_0(x)$. 
We define $\mu'((q,\rho)) = \rho(\mu(q))$ and show that if~(\ref{property}) always holds, then $\cA$ and $\cA'$ computes the same function. 
Indeed, if the transition function $\delta'$ preserves~(\ref{property}), then:
\[
\begin{array}{rcll}
\nu'(\mu'((q,\rho))) & = & \nu'(\rho(\mu(q))) \;\;\; & \text{(by definition of $\mu'$)} \\
& = & \nu(\mu(q)) & \text{(by Property (\ref{property})}
\end{array}
\]
This proves that the outputs of $\cA$ and $\cA'$ are the same (provided that $\delta'$ preserves (1)).
It remains to define $\delta'$ such that $\cA'$ is a copyless CRA in normal form and its definition satisfies~(\ref{property}).

\newcommand{\sS}{S_{\sigma, \rho}}
\newcommand{\stau}{\tau_{\sigma, \rho}}

We need some additional definitions. 
For a copyless substitution $\sigma$ and a permutation $\rho$ both over $\cX$ we define the set $\sS = \{x \in \cX \mid \var(\rho(\sigma(x))) \neq \emptyset\}$, that is, the set of all variables $x$ where $\sigma(x)$ is not ground.
Further, define the set $\sS' = \{y \mid \exists x \in \cX. \,y = \min\{\var(\rho(\sigma(x)))\}\}$. For each $x \in \sS$, the set $\var(\rho(\sigma(x)))$ is non-empty and thus has a least element. This motivates the function $\stau^0: \sS \rightarrow \sS'$ that associates to each $x \in \sS$ its least element, formally:
\begin{equation}
\label{property2}
\stau^0(x) \; = \; \min\left\{\var\big(\rho(\sigma(x)) \big) \right\}
\end{equation}
One can easily check that $\stau^0$ is a bijective function from $\sS$ to $\sS'$.
It is surjective by the definition of $\sS$ and $\sS'$, and injective by the copyless restriction over $\sigma$.
To see the last claim, recall that any copyless substitution satisfies $\var(\sigma(x)) \cap \var(\sigma(y)) = \emptyset$ and, in particular, $\var(\rho(\sigma(x))) \cap \var(\rho(\sigma(y))) = \emptyset$ for any permutation $\rho$. 
Finally, given that $\stau^0$ is a bijective function from $\sS \subseteq \cX$ to $\sS' \subseteq \cX$, we can extend $\stau^0$ to a bijection $\stau: \cX \rightarrow \cX$ such that $\stau(x) = \stau^0(x)$ for every $x \in \sS$.
Of course, there might be many extensions of $\stau^0$ to $\stau$ but we can choose any extension (the decision is not important for the construction).

We have now all the ingredients to define the transition function $\delta'$. For every $p \in Q$, $a \in \Sigma$, and permutation $\rho$ over $\cX$, if $\delta(p, a) = (q, \sigma)$ then we define $\delta'((p, \rho), a) = ((q, \stau), \sigma')$ such that
\[
\sigma'(x) \; = \; \rho(\sigma(\stau^{-1}(x)))
\]
for every $x \in \cX$.
The substitution  $\rho \circ \sigma \circ \stau^{-1}$ is a copyless substitution for every copyless $\sigma$ because $\rho$ and $\stau^{-1}$ are just permuting the variables. 
Therefore, we can conclude that $\sigma'$ is copyless as well.

Our next step is to show that $\sigma'$ is in normal form.
Recall that $\sigma'$ is in normal form if for every $x \in \cX$ it holds that $x \preceq y$ for every $y \in \var(\sigma'(x))$.
We prove this by case analysis by considering whether $x \in \sS'$ or not.
First suppose that $x \in \sS'$. 
Since $x$ is in the codomain of $\stau^0$ and $\stau$ is an extension of $\stau^0$, we have that $x = \min\{\var(\rho(\sigma(\stau^{-1}(x))))\}$ by (\ref{property2}). 
Then if we replace $\rho(\sigma(\stau^{-1}(x)))$ by $\sigma'$, we get that $x = \min\{\var(\sigma'(x))\}$.
In particular, we have that $x \preceq y$ for every $y \in \var(\sigma'(x))$.
Now suppose that $x \notin \sS'$.
This means that $x$ is not in the codomain of $\stau^0$ which implies that $\stau^{-1}(x) \notin \sS$.
In other words, $\sigma'(x) = \rho(\sigma(\stau^{-1}(x)))$ is a ground expression and $\var(\sigma'(x)) = \emptyset$.
Since for both cases it holds that $x \preceq y$ for every $x \in \cX$ and $y \in \var(\sigma'(x))$, then we conclude that $\sigma'$ is in normal form. 

For the last part of the proof, we show by induction that $\delta'$ satisfies the correspondence~(\ref{property}) between $\cA$ and $\cA'$.
Let $(p, \nu_n)$ and $((p, \rho), \nu_n')$ be the configuration of $\cA$ and $\cA'$, respectively, after reading $w \in \Sigma^*$. Assume by induction that $\nu_n(x) = \nu_n'(\rho(x))$ for every $x \in \cX$ and let:
\[
(p, \nu_n) \:\trans{a} \: (q, \nu_{n+1}) \;\; \text{ and } \; \; ((p, \rho), \nu_n') \:\trans{a} \: ((q, \stau), \nu_{n+1}')
\]
be the transitions for $\cA$ and $\cA'$, respectively,  after reading a new letter $a \in \Sigma$. 
We prove the correspondence~(\ref{property}) between $\nu_{n+1}$ and $\nu_{n+1}'$ as follows:
\[
\renewcommand{\arraystretch}{1.3}
\begin{array}{rcll}
\nu_{n+1}'(\stau(x)) & \; = \;   & \nu_{n}'(\sigma'(\stau(x))) & \text{(by definition of $\nu_{n+1}'$)} \\
&  \; = \; & \nu_{n}'(\rho(\sigma(\stau^{-1}(\stau(x))))) \;\;\; & \text{(by definition of $\sigma'$)} \\
&  \; = \;  & \nu_{n}'(\rho(\sigma(x))) & \text{(by composing $\stau$ and $\stau^{-1}$)} \\
& \; = \; & \nu_n(\sigma(x)) & \text{(by the induction assumption)} \\
&  \; = \;  & \nu_{n+1}(x) & \text{(by definition of $\nu_{n+1}$)} 
\end{array}
\]
This proves that the transition function $\delta'$ keeps the correspondence (\ref{property}) between $\cA$ and $\cA'$. Since it also holds for the initial configuration then by induction it holds for all reachable configurations, which proves that the outputs of $\cA$ and $\cA'$ are the same. 

So far, we have shown that $\cA$ and $\cA'$ define the same function and every substitution on $\cA'$-transitions are in normal form. To conclude the proof we need to show that if every substitution on $\cA$-transitions is in normal form, then every $\sigma \in \subs(\cA)$ is in normal form. We prove this by induction over the length of the word.
Assume that it holds for all words $w$ of length at most $n$ and let $\delta^*(q,w) = (q', \sigma)$. Suppose we want to extend $w$ with $a \in \Sigma$ and let $\delta(q',a) = (q'', \sigma')$. By definition, we know that $\delta^*(q,w\cdot a) = (q'',\sigma \circ \sigma')$. Set $y \in \var(\sigma \circ \sigma'(x))$. By definition there exists a register $z$ such that $z \in \var(\sigma'(x))$ and $y \in \var(\sigma(z))$. Since $\delta$ is in normal form, we conclude that $x \preceq z \preceq y$. In other words, $x \preceq y$ for every $y \in \var(\sigma \circ \sigma'(x))$ and, thus, $\sigma \circ \sigma'$ is in normal form. 	
\end{proof}

% \begin{proof}[Proof (sketch).]
% For a copyless CRA $\cA = (Q, \AL, \cX, \delta, q_0, \nu_0, \mu)$, we define the set of states for $\cA'$ by 
% $
%  Q' \; = \; Q \times \{\rho \mid \rho \text{ is a permutation of the set } \cX\}
%  $.
% Let $\delta(q,a) = (p,\sigma)$ be a transition. The idea is to find a permutation $\rho$ such that the new substitution $\sigma'(x) = \sigma(\rho(x))$ is a substitution in normal form. Such a permutation always exists because $\cA$ satisfies the copyless restriction. %(although the permutation does not have to be unique). 
% Intuitively, the substitutions $\sigma'$ and $\sigma$ store the same values in the registers, but the corresponding values might be in different registers. 
% To enable $\cA'$ to find the proper value of every register, we store the permutation $\rho$ in the state $p$  (i.e., the one defined by the substitution $\sigma$) and update the transition of $\cA$ accordingly.
% % We provide the full proof with the formal definitions of the new transition function in the appendix.
% \end{proof}

\subsection{Stable registers and reset substitutions}
\label{subsection:stable_reset}
%Let $\cA$ be a copyless CRA in normal form with respect to a fixed total order $\preceq$. 
Let $\cA$ be a copyless CRA in normal form. 
During a run of $\cA$ the content of its registers flows from higher to lower registers with respect to the total order $\preceq$. 
This does not necessarily mean that the content of all registers eventually reaches the $\preceq$-minimum register. 
For example, if all substitutions in $\cA$ are of the form $\sigma(x) = x \add k$ for some $k \in \SR$, then each register will store just its own content during the whole run. 
Intuitively, in this example each register is ``stable'' with respect to the content flow of $\cA$, since each register never passes its value to lower registers.  
This idea motivates the notion of \emph{stable registers}, which are essential to understand the behavior and output of copyless CRA.
Let $\sigma \in \subs(\cA)$ be a copyless substitution in normal form.
We say that a register $x$ is $\sigma$-\emph{stable} (or stable on $\sigma$) if $x \in \var(\sigma(x))$.
More general, we write that a register $x$ is \emph{stable} in $\cA$ if  $x$ is $\sigma$-stable for all substitutions~$\sigma \in \subs(\cA)$.

Stable registers play a crucial role in the behavior of copyless CRA. 
Indeed, we will show that they are the only registers whose value always depends on the whole word, namely, we can always ``reset'' the value of non-stable registers to a constant. For instance, the automaton $\cA_1$ in Example~\ref{ex:max-b-substrings} resets the register $y$ to 0 each time the symbol $a$ is read. On the other side, the register $x$ is stable and it cannot be reset to a constant. In fact its value only grows or remains the same during the run of $\cA_1$.

We formalize this idea of reseting the content of registers as follows: a substitution $\sigma \in \subs(\cA)$ is a \emph{reset} substitution if $\var(\sigma(x)) = \emptyset$ for all non $\sigma$-stable registers $x$. 
We say that a substitution $\sigma \in \subs(\cA)$ is an \emph{$\cA$-reset substitution} if $\var(\sigma(x)) = \emptyset$ for all non-stable registers $x$~in~$\cA$.

% In the next result, we say that a copyless CRA is strongly connected if all states are mutually reachable in the transition graph of $\cA$.
We start with the following lemma that shows that the composition preserves stability between registers.
\begin{lemma}\label{lemma:stability}
Let $\sigma, \sigma'$ be two copyless substitution in normal form. For any register $x \in \cX$ it holds that $x$ is stable on $\sigma$ and $\sigma'$ if, and only if, $x$ is $(\sigma \circ \sigma')$-stable. 
\end{lemma}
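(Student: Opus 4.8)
The plan is to reduce both directions to a single identity describing how variables propagate through composition, and then to exploit the total order that underlies the normal form. Since $(\sigma \circ \sigma')(x) = \hat{\sigma}(\sigma'(x))$ is obtained by replacing every $z \in \var(\sigma'(x))$ occurring in $\sigma'(x)$ by the expression $\sigma(z)$, I would first record that
\[
\var\big((\sigma \circ \sigma')(x)\big) \;=\; \bigcup_{z \in \var(\sigma'(x))} \var(\sigma(z)).
\]
The one delicate point is that expressions are taken up to evaluation equivalence, so a priori a variable could occur syntactically yet not contribute to the value. This is ruled out by copylessness: in the copyless expression $(\sigma \circ \sigma')(x)$ every variable occurs at most once, so no occurrence can be cancelled against another monomial, and the syntactic variable set coincides with the essential one. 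Hence the displayed identity is legitimate.

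Given this identity, the forward implication is immediate. If $x$ is stable both on $\sigma$ and on $\sigma'$, then $x \in \var(\sigma'(x))$, so $x$ is a valid index $z$ in the union above, and the corresponding set is $\var(\sigma(x))$, which contains $x$ by $\sigma$-stability. Therefore $x \in \var((\sigma \circ \sigma')(x))$, i.e. $x$ is $(\sigma \circ \sigma')$-stable.

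The backward implication is the crux, and it is exactly where the normal form is used. Assume $x \in \var((\sigma \circ \sigma')(x))$. By the identity there is a witness $z \in \var(\sigma'(x))$ with $x \in \var(\sigma(z))$. Now I would invoke the normal-form constraint twice: since $\sigma'$ is in normal form and $z \in \var(\sigma'(x))$, we obtain $x \preceq z$; since $\sigma$ is in normal form and $x \in \var(\sigma(z))$, we obtain $z \preceq x$. As $\preceq$ is a total order, antisymmetry forces $z = x$. Substituting $z = x$ back gives simultaneously $x \in \var(\sigma'(x))$ and $x \in \var(\sigma(x))$, that is, $x$ is stable on both $\sigma$ and $\sigma'$, as required.

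The only genuine obstacle is justifying the variable-propagation identity despite working modulo evaluation equivalence; once copylessness is used to forbid cancellation, everything else is a two-line order argument that pins the witness $z$ to $x$ by antisymmetry of $\preceq$.
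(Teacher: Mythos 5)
Your proof is correct and follows essentially the same route as the paper: the forward direction by direct composition, and the backward direction by extracting a witness $z$ with $z \in \var(\sigma'(x))$ and $x \in \var(\sigma(z))$, then using the normal form to get $x \preceq z \preceq x$ and hence $z = x$. Your extra remark that copylessness rules out cancellation under evaluation equivalence is a reasonable precaution, though the paper treats $\var$ and composition purely syntactically so it is not strictly needed.
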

\begin{proof}
Suppose that $x$ is stable on $\sigma$ and $\sigma'$. This means that $x \in \Var(\sigma(x))$ and $x \in \Var(\sigma'(x))$ and, thus, $x \in \Var(\sigma\circ \sigma'(x))$ by composition, implying that $x$ is $(\sigma\circ \sigma')$-stable. 
For the other direction, suppose that $x \in \Var(\sigma \circ \sigma'(x))$.
Then we know that there exists $y$ such that $x \in \Var(\sigma(y))$ and $y \in \Var(\sigma'(x))$.
Since $\sigma$ and $\sigma'$ are in normal form, then $x \preceq y \preceq x$ which implies that $x = y$ and, thus, $x$ is stable on $\sigma$ and $\sigma'$.
\end{proof}
% 
% We generalize the idea of stable register from substitutions to copyless CRA.
% We say that a register $x$ is \emph{stable} in $\cA$ (or just stable if $\cA$ is understood from the context) if $x$ is $\sigma$-stable for all transitions $\sigma \in \subs(\cA)$.
% A register is called \emph{non-stable} if it is not stable. 
By the previous lemma, it is enough to check that a register $x$ is $\sigma$-stable for all transitions $\delta(p,a) = (q,\sigma)$ to know whether $x$ is stable in all substitutions from $\subs(\cA)$.

% To prove Proposition~\ref{prop:reset_register}, we need to recall some standard definitions of directed labeled graphs and bottom strongly connected components. 
For $Q' \subseteq Q$ we say that $Q'$ is a \emph{bottom strongly connected component} (BSCC) of $\cA$ if (1) for every pair $q_1, q_2 \in Q'$ there exists $w \in \AL^*$ such that $\delta^*(q_1, w) = (q_2, \sigma)$ for some substitution $\sigma$ and (2) for every $q_1, q_2 \in Q'$ and $w \in \AL^*$ if $\delta^*(q_1, w) = (q_2, \sigma)$ then $q_2 \in Q'$.
Intuitively a BSCC $Q'$ of $\cA$ is a set of mutually reachable states such that there is no word that leaves $Q'$. 
We say that $\cA$ is \emph{strongly connected} if the whole set $Q$ is a BSCC of $\cA$.
% The definitions of a BSCC and strongly connected automata adapt naturally to other finite state automata models, e.g. weighted automata.
% \ifilip{``In this section, we assume that $\cA$ is strongly connected.'' Do we need this?}

\begin{proposition}
	\label{prop:reset_register}
	Let $\cA$ be a copyless and strongly connected CRA in normal form.
	Then for all $q,q' \in Q$ there exists $w^{q,q'} \in \Sigma^*$ and a substitution $\sigma$ such that $\delta^*(q,w^{q,q'}) = (q', \sigma)$ and $\sigma$ is an $\cA$-reset substitution.
	Furthermore, there exists $w^{q,q'}$ containing all letters in $\Sigma$.
\end{proposition}
% The strongly connected restriction is a technical assumption to be sure that the result holds for every pair of states. Of course, the result also holds if we restrict to the strongly connected components of $\cA$.
For instance in Example~\ref{ex:max-b-substrings} it suffices to take the word $w = ba$ or any word that contains $a$, given that the substitution defined by the $a$-transition is an $\cA$-reset substitution. For simplicity if $q = q'$, we write $w^{q}$ and $\sigma^{q}$ instead of $w^{q,q}$ and $\sigma^{q,q}$. Note that the additional requirement that $w^{q,q'}$ contains all letters in $\Sigma$ is rather technical and its usefulness will be clear later in Section~\ref{sec:nonexpressibility}.

\begin{proof}[Proof (of Proposition~\ref{prop:reset_register})]
	Let $x_1, \ldots, x_n$ be all registers in $\cX$ in the increasing order with respect to $\preceq$. % and $q, q'$ two states in $Q$. 
	We construct the word $w^{q,q'}$ by (inverse) induction starting from $x_n$ and ending in $x_1$.
	Specifically, for every $i \leq n$ we define a word $w_i^{q, q'}$ and a substitution $\sigma^{q,q'}_i$ such that the proposition holds for all non-stable registers $x$ such that $x_i \preceq x$.
	Clearly, the proposition  will be shown by defining $w^{q, q'} = w^{q,q'}_1$.
	
	We start with the base case $i = n$ and consider whether $x$ is stable or not.
	If $x_n$ is stable, then take a word $u$ and a substitution $\sigma$ such that $\delta^*(q,u) = (q', \sigma)$ and $u$ contains each letter in $\Sigma$.
	Given that $\cA$ is strongly connected we know that $u$ and $\sigma$ always exists. 
	Then by defining $w^{q,q'}_n = u$ and $\sigma^{q,q'}_n = \sigma$, the proposition holds for the stable register $x_n$.
	Now, suppose that $x_n$ is non-stable which means that there exist a pair $p, p' \in Q$ and a word $u$ such that $\delta^*(p, u) = (p', \sigma)$ and $x_n$ is non-stable in $\sigma$.
	Given that $\cA$ is in normal form and $x_n$ is the maximum register with respect to $\preceq$, this implies that $\var(\sigma(x_n)) = \emptyset$.
	Pick two words $v_1, v_2 \in \Sigma^*$ such that $\delta^*(q, v_1) = (p, \sigma_1)$ and $\delta^*(p', v_2) = (q', \sigma_2)$ for some substitutions $\sigma_1$ and $\sigma_2$, and $v_1$ contains each letter in~$\Sigma$.
	Again, we know that $v_1$ and $v_2$ always exists since $\cA$ is strongly connected.
	Then define  $w^{q,q'}_n = v_1 \cdot u \cdot v_2$ and  $\sigma^{q,q'}_n = \sigma_1 \cdot \sigma \cdot \sigma_2$.
	By construction, we know that $\delta^*(q, w^{q,q'}_n) = (q', \sigma^{q,q'}_n)$ and $w^{q,q'}_n$ contains all letters in $\Sigma$. 
	To prove that $\var(\sigma^{q,q'}_n(x_n)) = \emptyset$, notice that $x_n$ is non-stable on $\sigma$.
	By Lemma~\ref{lemma:stability} this implies that $x_n$ is non-stable on $\sigma_1 \circ \sigma \circ \sigma_2$.
	Given that $\cA$ is in normal form and $x_n$ is the maximal register, we get that $\var(\sigma^{q,q'}_n(x_n)) = \emptyset$.
	
	For the inductive step, assume that $w^{q,q'}_{i+1}$ and $\sigma^{q,q'}_{i+1}$ exist. We show how to construct $w^{q,q'}_i$ and $\sigma^{q,q'}_i$ that satisfy the proposition for registers greater or equal than $x_i$. 
	Again, we consider two cases depending on whether $x_i$ is stable or not. 
	If $x_i$ is stable, then by defining $w^{q,q'}_i = w^{q,q'}_{i+1}$ and $\sigma^{q,q'}_i = \sigma^{q,q'}_{i+1}$ the inductive step trivially holds.
% 	Indeed, property (1) and (2) are satisfied by the inductive hypothesis and (3) is satisfied since it holds for every register greater than $x_i$ (again by inductive hypothesis) and also for $x_i$ given that $x_i$ is stable.
% 	The interesting case is when $x_i$ is non-stable.
	Suppose that $x_i$ is non-stable and, therefore, there exist a pair  $p, p' \in Q$ and a word $u$ such that $\delta^*(p, u) = (p', \sigma)$ and $x_i$ is non-stable on $\sigma$.
	Let $v_1, v_2 \in \Sigma^*$ such that $\delta^*(q', v_1) = (p, \sigma_1)$ and $\delta^*(p', v_1) = (q', \sigma_2)$ for some substitutions $\sigma_1$ and $\sigma_2$.
	Recall that $v_1$ and $v_2$ exist because $\cA$ is strongly connected. 
	Now, define:
	\[
	\renewcommand{\arraystretch}{1.4}
	\begin{array}{rcl}
	w^{q,q'}_i & = & w^{q,q}_{i+1} \cdot v_1 \cdot u \cdot v_2 \\
	\sigma^{q,q'}_i & = & \sigma^{q,q}_{i+1} \circ \sigma_1 \circ \sigma \circ \sigma_2
	\end{array}
	\]
	It is clear by construction that  $\delta^*(q, w^{q,q'}_i) = (q', \sigma^{q,q'}_i)$ and $w^{q,q'}_i$ contains all letter in $\Sigma$
	(because already $w^{q,q}_{i+1}$ contains all letters in $\Sigma$). 
	To conclude the proof, we show that $\var(\sigma^{q,q'}_i(x)) = \emptyset$ for every non-stable register $x \succeq x_i$. 
	Let $x$ be any non-stable register $x \succeq x_i$ (possibly $x_i$) and let $\sigma^* = \sigma_1 \circ \sigma \circ \sigma_2$. 
	First, note that all $y \in \Var(\sigma^*(x))$ are non-stable.
	Otherwise, if $y \in \Var(\sigma^*(x))$ is stable, then $y \in \Var(\sigma^*(y))$, which is a contradiction because $\Var(\sigma^*(x)) \cap \Var(\sigma^*(y)) = \emptyset$ by the definition of being copyless.  
	Therefore, we have that every register in $\Var(\sigma^*(x))$ is non-stable.
	Note also that $x_i \notin \Var(\sigma^*(x))$.
	This is obvious when $x \neq x_i$ because $\cA$ is in normal form. Otherwise, $x = x_i$ and we know that $x_i \notin \Var(\sigma^*(x_i))$ because $x_i$ is non-stable.
	Then we have that all registers in $\Var(\sigma^*(x))$ are non-stable and strictly greater than $x_i$. 
	By the induction assumption $\Var(\sigma^{q,q}_{i+1}(y)) = \emptyset$ for all $y \in \Var(\sigma^*(x))$. By composing $\sigma^{q,q}_{i+1}$ and $\sigma^*$, we conclude that $\Var(\sigma^{q,q}_{i}(x)) = \emptyset$.
% 	This concludes the proof.
\end{proof}

%\begin{proposition}
%	\label{prop:reset_register}
%	Let $\cA = (Q, \AL, \cX, \delta, q_0, \nu_0, \mu)$ be a copyless and strongly connected CRA.
%	Then for all  $q,q' \in Q$ there exists a word $w^{q,q'}$ and a substitution $\sigma^{q,q'}$ such that (1) $\delta^*(q,w^{q,q'}) = (q', \sigma^{q,q'})$, (2) $w^{q,q'}$ contains each letter in $\AL$, and (3) $\sigma^{q,q'}$ is an $\cA$-reset substitution.
%\end{proposition}

\subsection{Growing rate of stable registers in a cycle}
The behavior of cycles in a computation model is always important; most of the decidability results can be derived from a good understanding of its cyclic behavior. 
Here, we study how the content of stables registers behaves through cycles.
We say that a word $w \in \Sigma^*$ is a cycle over a state $q \in Q$ in $\cA$ if $\delta^*(q, w) = (q, \sigma)$ for some substitution~$\sigma$. 
Of course, the iteration of a cycle $w$ (i.e. $w^n$ for any $n \geq 1$) is also a cycle over $q$ and it satisfies $\delta^*(q, w^n) = (q, \sigma^n)$ for any $n$, where $\sigma^n$ is $\sigma$ composed $n$-times with itself. 
We will need the next result to show that by iterating cycles one can always ``reset'' the content of non $\sigma$-stable registers.  
%The next two results show the behavior of registers when iterating substitutions.
% As the next lemma shows the substitution in $\cA$ are converted into a reset substitution when they are iterated a fixed number of times. 
\begin{lemma} \label{lemma:reseting-loop}
Let $\cA$ be in normal form and $\sigma \in \subs(\cA)$ a copyless substitution. There exists $N \leq |\cX|$ such that $\sigma^N$ is a reset substitution. 
\end{lemma}

\begin{proof}
Suppose $x$ is non-stable over $\sigma$, i.e. $x \not \in \var(\sigma(x))$, and assume that $\var(\sigma(x)) \neq \emptyset$. Then consider the register $y_1 =  \min(\var(\sigma(x)))$. Since $\cA$ is copyless and in normal form then $x \prec y_1$. But since $y_1 \in \var(\sigma(x))$ and $\cA$ is copyless then $y_1 \not \in \var(\sigma(y_1))$ (i.e. $y_1$ cannot appear twice in $\sigma$). 
Recall that $\var(\sigma(y_1)) \subseteq \var(\sigma^2(x))$ given that $y_1 \in \var(\sigma(x))$.
Combining both facts we get that $y_1 \not \in \var(\sigma^2(x))$ and because $\cA$ is in normal form $y_1 \prec \min(\var(\sigma^2(x)))$. 

So far, we have proved that $x \prec y_1 =  \min(\var(\sigma(x))) \prec \min(\var(\sigma^2(x)))$ and $x \not \in \var(\sigma^2(x))$. Also, we know that $\sigma^2$ is a copyless substitution in normal form. Then, we can repeat the same argument above for $x$ and $\sigma^2$, forming an increasing sequence of registers:
\[x \prec y_1 = \min(\var(\sigma(x))) \prec y_2 = \min(\var(\sigma^2(x))) \prec y_3 = \min(\var(\sigma^3(x))) \prec  \dots\]
The number of registers is finite so this sequence cannot be infinite. Thus there exists an $N$ such that $\var(\sigma^N(x)) = \emptyset$.
It is straightforward that we can take $N \leq |\cX|$.
\end{proof}
By Lemma~\ref{lemma:reseting-loop} we know that a non-stable register $x$ over $\sigma$ becomes a constant when $\sigma$ is iterated at least $|\cX|$-times.
% From this lemma, we know that a non-stable register $x$ over $\sigma$ is reset to a constant when $\sigma$ is iterated sufficiently many times.
In the next proposition, we study the growing behavior of stable registers when a reset substitution is iterated. This will be useful to understand the behavior of copyless CRA inside their~cycles.
For this result we need additional assumptions that automata do not use $\zero$ in their expressions and that the semiring $\SR$ is non-zero (see Section~\ref{subsection:zeros} for details).
% Moreover the expressions in our semirings that do not use $\zero$ do not evaluate to $\zero$. This is a technical assumption to avoid having $\zero$ in the registers of CRA. 
% Notice that the arctic-semiring $\artic$ satisfies $\zero = -\infty$ and that the functions we defined in Section~\ref{sec:nonexpressibility} do not output $-\infty$. Also in $\artic$ expressions defined with $\max$, $+$ and $\nat$ do not evaluate to $-\infty$. Thus this assumption does not influence the results in Section~\ref{sec:nonexpressibility}.
% (see Appendix \ref{appendix:structure} for details).
\begin{proposition}
	\label{prop:long_exp}
	Let $\SR$ be a non-zero semiring and let $\cA$ be a non-zero automaton
	in normal form. Let $\sigma \in \subs(\cA)$ be a reset substitution and $x$ a $\sigma$-stable register. Then there exist $c,d \in \SR$ with $c \neq \zero$ such that for every $i \geq 0$ we have:
	$$
	\sigma^{i+1}(x) \;\; = \;\; (c^{i} \mult \sigma(x)) \add \big( d \mult \bigadd_{j=0}^{i-1} c^j \big)
	$$
\end{proposition}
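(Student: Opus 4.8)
The plan is to prove the formula by induction on $i$, using Lemma~\ref{lemma:easy_exp} as the engine. First I would instantiate that lemma with $\tau = \sigma$: since $\sigma$ is a collapse substitution and $x$ is $\sigma$-stable, the lemma yields fixed constants $c, d \in \SR$ with $c \neq \zero$ such that $\sigma \circ \sigma(x) = \sigma^2(x)$ is equivalent to $(c \mult \sigma(x)) \add d$. These are precisely the $c$ and $d$ claimed by the proposition, and I would fix them once and for all; the whole induction then only manipulates this single identity rather than re-invoking the lemma, which guarantees the constants stay consistent across iterations.

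For the base case $i = 0$, the right-hand side reads $(c^0 \mult \sigma(x)) \add (d \mult \bigadd_{j=0}^{-1} c^j) = (\one \mult \sigma(x)) \add \zero = \sigma(x)$, using that an empty $\bigadd$ is $\zero$, so the formula holds trivially. For the inductive step I would write $\sigma^{i+2} = \sigma \circ \sigma^{i+1}$, so that $\sigma^{i+2}(x) = \hat{\sigma}(\sigma^{i+1}(x))$. Applying $\hat{\sigma}$ to the inductive hypothesis and noting that $\hat{\sigma}$ fixes the ground coefficients $c^i$, $d$, $c^j$ (they contain no variables) while sending the subexpression $\sigma(x)$ to $\hat{\sigma}(\sigma(x)) = \sigma^2(x)$, I obtain
\[
\sigma^{i+2}(x) = (c^{i} \mult \sigma^2(x)) \add \big( d \mult \bigadd_{j=0}^{i-1} c^j \big).
\]
Substituting the base identity $\sigma^2(x) = (c \mult \sigma(x)) \add d$, distributing, and using commutativity of the semiring gives $(c^{i+1} \mult \sigma(x)) \add (d \mult c^{i}) \add (d \mult \bigadd_{j=0}^{i-1} c^j)$, and collecting the last two summands as $d \mult \bigadd_{j=0}^{i} c^j$ closes the induction.

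The proof is essentially routine once Lemma~\ref{lemma:easy_exp} is in place, so the main care-points are bookkeeping rather than a deep obstacle. The one thing I would be careful about is the direction of composition: I want to apply $\hat{\sigma}$ to the \emph{already-simplified} expression $\sigma^{i+1}(x)$ coming from the hypothesis, which relies on $\hat{\sigma}$ acting as a homomorphism that leaves ground subexpressions untouched and maps $\sigma(x)$ to $\sigma^2(x)$. The other subtlety is justifying the final regrouping $(d \mult c^{i}) \add (d \mult \bigadd_{j=0}^{i-1} c^j) = d \mult \bigadd_{j=0}^{i} c^j$, which uses distributivity of $\mult$ over $\add$, while commutativity of $\SR$ (assumed throughout the paper) is what lets me freely reorder the factors produced by distributing $c^{i}$ across $(c \mult \sigma(x)) \add d$. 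Notably, no separate argument about the stability of $\sigma^k$ is needed here, since everything reduces to algebraic manipulation of the single relation extracted at the start.
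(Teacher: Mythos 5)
Your proof is correct and follows essentially the same route as the paper: both extract the one-step recurrence $\sigma^{k+1}(x) = (c \mult \sigma^{k}(x)) \add d$ from Lemma~\ref{lemma:easy_exp} and unroll it by induction on $i$. The only cosmetic difference is that you apply $\hat{\sigma}$ to the outside of the inductive hypothesis and then substitute $\sigma^{2}(x) = (c \mult \sigma(x)) \add d$, whereas the paper packages $\sigma(x)$ as a one-variable expression $e^* = (c \mult x) \add d$ and composes with it on the right; both reduce to the same algebraic regrouping.
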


Before proving Proposition~\ref{prop:long_exp}, we must show a straightforward fact about copyless expressions.
\begin{lemma}
	\label{lemma:easy_exp}
	Suppose the semiring $\SR$ is a non-zero semiring and $\cA$ is a non-zero automaton.
	Let $\sigma, \tau \in \subs(\cA)$ where $\sigma$ is a reset substitution and let $x$ be a $\sigma$-stable and $\tau$-stable register.
% 	such that $\Var(e) = \{x\}$ and $e$ does not mention the $\zero$-value. 
	Then the expression $\sigma \circ \tau(x)$ is equivalent to an expression of the form $(c \mult \sigma(x)) \add d$, where $c, d \in \SR$ and $c \neq \zero$.
\end{lemma}

\begin{proof}
	We prove this by induction on the length of expression $\tau(x)$. For the base step take $\tau(x) = x$. Then $\sigma \circ \tau(x) = \sigma(x)$ is equivalent to $(\one \mult \sigma(x)) \add \zero$. Suppose we can write such an expression $(c \mult x) \add d$ for $\tau$ of length at most $n$. By the inductive assumption when $\tau$ is of length $n+1$ then $\sigma \circ \tau(x)$ can be written in the form $((c \mult \sigma(x)) \add d) \add \sigma(e)$ or $((c \mult \sigma(x)) \add d) \mult \sigma(e)$ for some expression $e$. 
% 	If $\var(f) = \emptyset$, then $\sigma(f) = f$ and since $f$ is a subexpression of $\tau$ then $f \neq \zero$.
	Since $\tau(x)$ is copyless then $\var(e) \subseteq \cX -\{x\}$.
	Therefore, every $y \in \var(e)$ is non-stable in $\cA$ and thus $\sigma(e) = d'$ for some constant $d' \neq \zero$ (by the non-zero assumptions on $\cA$ and $\SR$). 
% 	In either case we have that $\sigma(f) = f'$ for some constant $f' \neq \zero$ and, 
	Thus, $\sigma \circ \tau(x)$ can be rewritten as $(c \mult \sigma(x)) \add (d \add d')$ or $(((d' \mult c)\mult \sigma(x)) \add (d' \mult d))$, respectively.
\end{proof}

\begin{proof}[Proof (of Proposition~\ref{prop:long_exp})]
	Since $\sigma \in \subs(\cA)$ is a copyless and reset substitution, then for every $y \in \var(\sigma(x))$ either $y = x$ or $\var(\sigma(y)) = \emptyset$. 
	The expression $e = \sigma^2(x)$ can be seen as $\sigma(x)$ where every register $y \neq x$ is replaced with $\asem{\sigma(y)}$. This is a copyless expression with only one variable $x$ and it does not use $\zero$ since $\cA$ is non-zero. By Lemma~\ref{lemma:easy_exp} the expression $e = \sigma \circ \sigma(x)$ (setting $\tau = \sigma$) can be rewritten in the form $e^* = (c \mult \sigma(x)) \add d$ for some $c,d \in \sr$ and $c \neq \zero$.
	
	We prove the proposition by induction using the constants $c,d$ from $e^*$. For $i=0$ the claim is trivially true and for $i = 1$ the expression $e^*$ is of the desired form.
% 	The expression $\sigma \circ \sigma(x)$ can be rewritten as $\sigma \circ e^* = (c \mult \sigma(x)) \add d$, because it takes the expression $\sigma$ and substitutes $x$ with $\sigma(x)$ and all other registers $x$ with $\asem{\tau(x)}$.
		For the inductive step, we have $\sigma^{i+2}(x) = \sigma^i \circ \sigma^2(x) = \sigma^{i} \circ e^*$. %by the same argument used in the previous paragraph. 
	Then:
	\[
	\renewcommand{\arraystretch}{1.3}
	\begin{array}{rcll}
	\sigma^{i+2} (x) & \; = \;\; & \sigma^i \circ ((c \mult \sigma(x)) \add d) & \text{(because $e^* = (c \mult \sigma(x)) \add d$)}\\
	& \; = \;\; & (c \mult \sigma^{i+1}(x)) \add d \\
	 & \; = \;\; &  \big(c \mult \big( (c^{i} \mult \sigma(x)) \add \big( d \mult \bigadd_{j=0}^{i-1} c^j \big) \big) \big)  \add d & \text{(by induction)} \\
	& \; = \;\; &  \big((c^{i+1} \mult \sigma(x)) \add \big( d \mult \bigadd_{j=1}^{i} c^j \big) \big) \add d \\
	& \; = \;\; &  (c^{i+1} \mult \sigma(x)) \add \big( d \mult \bigadd_{j=0}^{i} c^j \big)  
	\end{array}
	\]
 \end{proof}

% In particular, for the semiring $\artic$ (i.e. $\mult = +$ and $\add = \max$) one can check that this is equivalent to $\sigma^{i+1}(x) = \max\left\{ i \cdot c + \sigma(x), \ (i-1) \cdot c + d \right\}$, which shows that, intuitively,  the $\sigma$-stable registers grow linearly when the substitution $\sigma$ is cycling. For instance, consider the copyless CRA in Example~\ref{ex:max-b-substrings} and let $\sigma$ be the substitution defined by the $b$-transition. 
% One can easily check that the register $x$ is $\sigma$-stable since $\sigma(x) = x+1$. Furthermore,  $\sigma^{i+1}(x) = x+i+1$, which is as in Proposition~\ref{prop:long_exp} (take $c=1,d=0$). This is precisely the expected behaviour of the copyless CRA on a long sequence of $b$'s.

Proposition~\ref{prop:long_exp} shows how $\sigma$-stable registers grow with respect of the number of times that a cycle is iterated.
In particular, when $\SR = \artic$ then $\sigma$-stable registers grows linearly because $\mult$ operation is $+$ in this semiring. 

The next result is a refinement of Proposition~\ref{prop:long_exp} for stable registers but in terms of the $\artic$-semiring. This result will be crucial in the proof of Theorem~\ref{theorem:counterexample}.
Recall that, by Proposition~\ref{prop:reset_register}, for any $q \in Q$ there exists a word $w^q$ such that $\delta^*(q,w^q) = (q, \sigma^q)$ and $\sigma^q$ is an $\cA$-reset substitution.

\begin{lemma}
\label{lemma:loops}
Let $\cA$ be a copyless, strongly connected and non-zero CRA in normal form over the $\artic$ semiring. Furthermore, let $q \in Q$ and $v \neq \epsilon$ be a cycle in $q$, namely, $\delta^*(q,v) = (q, \tau)$ for some reset substitution~$\tau$.
For every $j \in \nat$ let $\sigma_j$ be the substitution such that $\delta^*(q,w^q \cdot v^{j+1} \cdot w^q) = (q, \sigma_j)$. Then for every $x \in \cX$, $\lambda \in \subs(\cA)$ and $j$ big enough the following holds:
\begin{align*}
\sigma_j \circ \lambda (x) \;\; = \;\; 
\begin{cases}
 \Oo(1) & \text{if $x$ is non-stable}\\
 \max\{ \ j \cdot c + \sigma^q(x) + \Oo(1), \ j \cdot d + \Oo(1) \ \} & \text{otherwise,}
 \end{cases}
\end{align*}
where $c, d \in \nat$ are constants that depend only on $x$.
\end{lemma}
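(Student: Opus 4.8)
The plan is to analyse the substitution $\sigma_j$ directly. By the definition of $\delta^*$ and the fact that both $w^{q}$ and $v$ are loops on $q$, reading $w^{q}\cdot v^{j+1}\cdot w^{q}$ realizes the composition $\sigma_j = \sigma^{q} \circ \tau^{j+1} \circ \sigma^{q}$, where $\sigma^{q}$ is the collapse substitution from Proposition~\ref{prop:collapse_register} and $\tau$ is the given collapsing loop substitution. Since $(\sigma^{q} \circ \tau^{j+1} \circ \sigma^{q})(x) = \hat{\sigma^{q}}(\hat{\tau^{j+1}}(\sigma^{q}(x)))$, I would read the effect on a register $x$ from the inside out: first the inner $\sigma^{q}$, then $\tau^{j+1}$, then the outer $\sigma^{q}$. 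I would take $\lambda$ to be a substitution of size independent of $j$ (the identity, or one further collapse $\sigma^{q}$) whose only purpose is to present the result; the coefficients $c,d$ below are read off from $\tau$ and $x$ alone, so they depend neither on $j$ nor on $\lambda$.

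For a non-stable register $x$ the claim is immediate: since $\sigma^{q}$ is collapse, the innermost $\sigma^{q}(x)$ is already a ground expression, hence a constant; substituting constants through $\hat{\tau^{j+1}}$ and $\hat{\sigma^{q}}$ keeps it a constant, so $\sigma_j \circ \lambda(x) = \Oo(1)$.

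The core is the stable case, and the key structural observation I would isolate first is that, for any $\sigma \in \subs(\cA)$ and any stable register $x$, the expression $\sigma(x)$ mentions $x$ as its unique stable register: if a stable $z \neq x$ occurred in $\sigma(x)$, then $z \in \var(\sigma(x))$ and $z \in \var(\sigma(z))$ would place $z$ in two disjoint variable sets, contradicting copylessness. Consequently, after replacing the (constant) values of all non-stable registers, every relevant sub-expression becomes a copyless expression in the single variable $x$, which over $\artic$ is of the form $\max\{a + x, b\}$ by Lemma~\ref{lemma:copyless-expressions} (the $a+x$ term being present because $x$ is stable). I would then apply Proposition~\ref{prop:long_exp} to $\tau^{j+1}(x)$ and translate its conclusion into $\artic$ using $\mult = +$, $\add = \max$, $c^i = i\cdot c$ and $\bigadd_{k=0}^{i-1} c^k = (i-1)c$ (as $c \in \nat$), obtaining $\tau^{j+1}(x) = \max\{\, j c + \tau(x),\ (j-1)c + d \,\}$. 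Composing the inner $\sigma^{q}$, then this expression, then the outer $\sigma^{q}$ — each time collapsing non-stable registers to constants and reducing to the one-variable form above — yields, after collecting additive constants, $\sigma_j \circ \lambda(x) = \max\{\, j c + \sigma^{q}(x) + \Oo(1),\ j c + \Oo(1),\ \Oo(1)\,\}$. For $j$ large enough the pure constant term is dominated by the affine terms when $c>0$ (and when $c=0$ all three terms are $\Oo(1)$, so the statement holds with $d=0$), which gives exactly the claimed maximum of two affine-in-$j$ functions.

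The main obstacle I anticipate is the constant bookkeeping across the three-fold composition: one must check that each reduction to the one-variable form $\max\{a+x,b\}$ is legitimate (this is where the uniqueness-of-stable-register observation together with Lemma~\ref{lemma:copyless-expressions} does the real work) and that the register symbol surviving in the dominant branch is indeed the prescribed $\sigma^{q}(x)$ rather than some other expression. The remaining care is the ``$j$ big enough'' threshold, which is needed only to absorb the finitely many constant-rate terms into the $\Oo(1)$ of the affine branches.
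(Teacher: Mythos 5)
There is a genuine gap, and it sits exactly where the two distinct constants $c$ and $d$ in the statement come from. Your computation repeatedly "collapses non-stable registers to constants", but this conflates being non-stable \emph{in $\cA$} with being non-stable \emph{on $\tau$}. The loop substitution $\tau$ is only a collapsing substitution, i.e.\ it grounds the registers that are non-stable on $\tau$ itself; a register $y$ that is non-stable in $\cA$ (and hence grounded by $\sigma^q$) may perfectly well be $\tau$-stable, and then Proposition~\ref{prop:long_exp} makes $\tau^{j+1}(y)$, and hence $\sigma^q \circ \tau^{j+1}(y)$, grow like $j\cdot c_y + \Oo(1)$. Such registers $y$ do occur alongside $x$ inside $\sigma^q(x)$ and inside $\lambda(x)$ (your uniqueness-of-stable-register observation only excludes \emph{stable} companions), and their linear-in-$j$ contributions are combined with the $x$-branch via $\add$ and $\mult$ to produce the second slope $d$, which in general differs from $c = c_x$. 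Your final formula $\max\{jc+\sigma^q(x)+\Oo(1),\, jc+\Oo(1),\, \Oo(1)\}$ silently drops these terms and forces $d=c$ (or $d=0$); this is false in general and would break the use of the lemma in Theorem~\ref{theorem:counterexample}, where $c_a^x$ and $d_a^x$ must be tracked as independent constants. The paper's proof handles this by a three-way case split on $y$ ($\tau$-non-stable; $\tau$-stable but non-stable in $\cA$; equal to $x$) followed by an induction on the structure of the expression that accumulates the slopes of the non-stable branches into $d$ via $\max\{d, c_f\}$ and $d + c_f$.

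A second, smaller problem is your treatment of $\lambda$: the lemma quantifies over \emph{all} $\lambda \in \subs(\cA)$ of $j$-independent size (it is later instantiated with $\lambda = \sigma^{q_b,q_a}\circ\sigma^{q_a}$ and the like), and the content of the last clause is precisely that $c$ and $d$ do not depend on which $\lambda$ is composed on the right. You instead "take $\lambda$ to be the identity", so as written you prove only the $\lambda = \id$ case. This part is patchable with the machinery you already have: $\lambda(x)$ is a copyless expression whose only stable register is $x$, so after grounding the non-stable registers through $\sigma_j$ it reduces to $\max\{a+x,\,b\}$ with $a,b$ independent of $j$, and absorbing $a,b$ into the $\Oo(1)$'s leaves $c,d$ unchanged. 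But it must be said, and the first gap must be repaired before this reduction even yields the right shape.
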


The additional substitution $\lambda$ will be necessary in Section~\ref{sec:nonexpressibility}. Intuitively, we will compose many substitutions and $\lambda$ will represent a substitution from previous compositions. %Usually $\lambda \in \subs(\cA)$ and it trivially fulfills the assumptions.
% Intuitively, to use Lemma~\ref{lemma:loops} we will work with a bottom strongly connected component of a given copyless CRA. To reach a bottom strongly connected component the automaton needs to read a word $w_0$, and a substitution defined by $w_0$ will be $\lambda$.
\begin{proof}
Recall that we work with the $\artic$ semiring. Let $q \in Q$ and let $v \neq \epsilon$ be a cycle such that $\delta^*(q,v) = (q, \tau)$ for some reset substitution $\tau$.
By Proposition~\ref{prop:long_exp} we have that for every $j \geq 1$ it holds that:
\setlength{\jot}{4pt}
\begin{align}
\tau^{j+1}(x) & \;\; = \;\;  (c_x^{j} \mult \tau(x)) \add \big( d_x \mult \bigadd_{k=0}^{j-1} c_x^k \ \big) && \text{(by Prop.~\ref{prop:long_exp})}   \nonumber \\
& \;\; = \;\;  \max\left\{ \ j \cdot c_x + \tau(x), d_x + \max_{k=0}^{j-1}\{ k \cdot c_x \} \  \right\}  && \text{(by definition of $\artic$)} \nonumber \\
& \;\; = \;\;  \max\left\{ \ j \cdot c_x + \tau(x), \ d_x + (j-1) \cdot c_x \  \right\}  && \text{(by definition of $\max$)} \nonumber \\
& \;\; = \;\;  \max\left\{ \ j \cdot c_x + \tau(x) + \Oo(1), \ j \cdot c_x + \Oo(1) \ \right\}   &&  \text{(by using the $\Oo$-notation)} \label{eq:artic-version}
\end{align}
Recall that $\cA$ is a non-zero CRA and in this semiring $\zero = - \infty$, thus $c_x \geq 0$. If $d_x = - \infty$ then the last equality is not immediate. It holds because:
\begin{align*}
\max\left\{ \ j \cdot c_x + \tau(x), \ d_x + (j-1) \cdot c_x  \right\} = j \cdot c_x + \tau(x) = \max\left\{ j \cdot c_x + \tau(x), j \cdot c_x \ \right\} = \\
\max\left\{ \ j \cdot c_x + \tau(x) + \Oo(1), \ j \cdot c_x + \Oo(1) \ \right\}
\end{align*}

Fix now a substitution $\lambda$.
Recall that $w^{q}$ and $\sigma^q$ are the word and the substitution from Proposition~\ref{prop:reset_register} for~$q$. 
Then for any $j\geq 1$, consider the substitution $\sigma_j$ such that $\delta^*(q,w^q \cdot v^{j+1} \cdot w^q) = (q, \sigma_j)$.
We prove that $\sigma_j \circ \lambda(x)$ satisfies the lemma for $j$ big enough.
For a non-stable register $x$ in $\cA$ the result is straightforward. 
Indeed, $\cA$ is in normal form, and thus $\var(\lambda(x))$ contains just non-stable registers. 
This implies that:
\[
\renewcommand{\arraystretch}{1.3}
\begin{array}{rcll}
\sigma_j \circ \lambda(x) & \;\; = \;\; & \sigma^q \circ \tau^j \circ \sigma^q \circ \lambda (x) \;\;\;  & \text{(by definition)} \\
& \;\; = \;\; & \sigma^q \circ \lambda (x) & \text{($\var(\lambda(x))$ contains only non-stable registers)} \\ 
& \;\; = \;\; & \Oo(1) & \text{($\sigma^q$ and $\lambda$ do not depend on $j$)}
\end{array}
\]
Suppose now that $x$ is a stable register in $\cA$. We need to show that:
\begin{align}
\sigma_j \circ \lambda (x) & \;\; = \;\;  \max\{ \ j \cdot c  + \sigma^q(x)+ \Oo(1), \ j \cdot d + \Oo(1) \ \} \label{eq:stable-cases}
\end{align}
for $j$ big enough and some constants $c, d \in \nat$ such that $c,d$ do not depend on $\lambda$. 

The expression $\sigma_j \circ \lambda (x)$ is equivalent to $\sigma^q \circ \tau^{j+1} \circ \sigma^{q} \circ \lambda(x)$. In other words, it is equivalent to the expression $\sigma^{q} \circ \lambda(x)$ where all registers $y \in \var(\sigma^{q} \circ \lambda(x))$ are substituted with $\sigma^{q} \circ \tau^{j+1}(y)$. We start the proof by analyzing these expressions.

Let $y \in \var(\sigma^{q} \circ \lambda (x))$. If $y$ is not a $\tau$-stable register, then $\var(\tau(y)) = \emptyset$ and thus $\sigma^q \circ \tau^{j+1}(y) = \Oo(1)$. 
Otherwise, by (\ref{eq:artic-version}) we know that $\tau^{j+1}(y) =  \max\left\{ j \cdot c_y + \tau(x) + \Oo(1), \ j \cdot c_y + \Oo(1) \right\}$ for some $c_y \in \nat$ and, thus,  by applying $\sigma^q$ on $\tau^{j+1}(y)$ we get: 
\begin{align}
\sigma^{q} \circ \tau^{j+1}(y) & \;\; = \;\;  \max\left\{ \ j \cdot c_y + \sigma^{q} \circ \tau(x) + \Oo(1), \ j \cdot c_y + \Oo(1) \ \right\} \label{eq:artic-version2}
\end{align}
If $y$ is a $\tau$-stable register, but non-stable on $\cA$ (i.e. non-stable in general) then $\sigma^{q} \circ \tau(y)$ is equal to a constant and does not depend on $j$. Thus we can estimate $\sigma^{q} \circ \tau(y)$ by $\Oo(1)$ and then (\ref{eq:artic-version2}) becomes:
\[
\renewcommand{\arraystretch}{1.3}
\begin{array}{rcl}
\sigma^{q} \circ \tau^{j+1}(y) & \;\; = \;\; & \max\left\{ \ j \cdot c_y + \Oo(1), \ j \cdot c_y + \Oo(1) \ \right\} \\
 & \;\; = \;\; & j \cdot c_y + \Oo(1)
\end{array}
\]
% For $j$ big enough this is equivalent to $j \cdot c_{max} + \Oo(1)$, where $c_{max} = \max\{c,e\}$.
Otherwise, $y$ is a stable register and, moreover, $\sigma^q$ is a reset substitution. Then by Lemma~\ref{lemma:easy_exp} we know that $\sigma^q \circ \tau(y)$ can be represented as:
\setlength{\jot}{4pt}
\begin{align}
\sigma^{q} \circ \tau(y) & \;\; = \;\; \max\{ \ c' + \sigma^{q}(y), \  d' \ \}  \nonumber \\
& \;\; = \;\;  \max\{ \ \sigma^{q}(y) + \Oo(1), \ \Oo(1) \ \} \label{eq:stable-simple-version}
\end{align}
for some constants $c',d' \in \nat$ not depending on $j$ and where $c' \neq -\infty$. Thus, by combining~(\ref{eq:artic-version2}) and~(\ref{eq:stable-simple-version}) we~get:
\setlength{\jot}{4pt}
\begin{align}
\sigma^{q} \circ \tau^{j+1}(y) & \;\; = \;\;  \max\left\{ \ j \cdot c_y + \sigma^{q} \circ \tau(x) + \Oo(1), \ j \cdot c_y + \Oo(1) \ \right\} && \text{(by (\ref{eq:artic-version2}))} \nonumber \\
& \;\; = \;\;  \max\left\{ \ j \cdot c_y + \max\{ \ \sigma^{q}(y) + \Oo(1), \ \Oo(1) \ \} + \Oo(1), \ j \cdot c_y + \Oo(1) \ \right\} && \text{(by (\ref{eq:stable-simple-version}))} \nonumber \\
& \;\; = \;\;  \max\left\{ \ j \cdot c_y +  \sigma^{q}(y) + \Oo(1), \ j \cdot c_y + \Oo(1), \ j \cdot c_y + \Oo(1) \ \right\} && (\text{by distribution}) \nonumber \\
& \;\; = \;\;  \max\left\{ \ j \cdot c_y +  \sigma^{q}(y) + \Oo(1), \ j \cdot c_y + \Oo(1) \ \right\} \label{equation:stable_register}
\end{align}
Observe that a variable $y \in \var(\sigma^{q} \circ \lambda (x))$ is stable if, and only if, $y=x$. We summarize the three possible cases in the next equation:
\begin{align}
\sigma^{q} \circ \tau^{j+1}(y)  \;\; = \;\;
\begin{cases}
\;\; \Oo(1) & \text{if $y$ is not $\tau$-stable} \\
\;\; j \cdot c_y + \Oo(1) & \text{if $y$ is non-stable} \\
\;\; \max\left\{ j \cdot c_y +  \sigma^{q}(y) + \Oo(1), \ j \cdot c_y + \Oo(1) \right\}  & \text{if $x=y$}
\end{cases}
\label{eq:sigma-tau-equation} 
\end{align}
% Since $j \cdot e_i + \Oo(1), \Oo(1)$ are linear functions then for $j$ big enough this is equivalent to $\max(j \cdot c + \Oo(1) + \sigma^{q}(y), j \cdot e + \Oo(1))$, where $e = \max\{e_1, e_2\}$.
% \icristian{Here, it is not clear when you passed from studying the assignment $\sigma^{q} \circ \tau^j(x)$ to the assignment $\sigma^{q} \circ \tau^j \circ \sigma^{q}(x)$. If you order this, still this argument must be given carefully and with more arguments in between.}

Now we can prove (\ref{eq:stable-cases}). Recall that the expression $\sigma_j \circ \lambda (x)$ is the expression $\sigma^{q} \circ \lambda (x)$ where all registers $y \in \var(\sigma^{q} \circ \lambda(x))$ are substituted with $\sigma^q \circ \tau^{j+1}(y)$.
First notice that by Lemma \ref{lemma:easy_exp} the expression $\sigma^{q} \circ \lambda(x)$ is equivalent to $\max\{ c'' + \sigma^q(x), d''\} = \max\{\sigma^q(x) + \Oo(1), \Oo(1)\}$.  We can do these estimations because $\lambda$ does not depend on $j$.
By combining this equation with the definition of $\sigma_j$ we have:
\begin{align*}
\sigma_j \circ \lambda (x) & \;\; = \;\; \sigma^q \circ \tau^{j+1} \circ \sigma^q \circ \lambda (x) \\
& \;\; = \;\; \sigma^q \circ \tau^{j+1} \circ \max\{ \ \sigma^q(x) + \Oo(1), \ \Oo(1) \ \} \\
& \;\; = \;\; \max\{ \ \sigma_j(x) + \Oo(1), \ \Oo(1) \ \}
\end{align*}
Thus to finish the proof it suffices to show that:
\begin{align}
\sigma_j(x) & \;\; = \;\; \max\{ \ j \cdot c  + \sigma^q(x)+ \Oo(1), \ j \cdot d + \Oo(1) \ \} \label{equation:lemma_induction2}
\end{align}
for some constants $c, d \in \nat$. Indeed, if (\ref{equation:lemma_induction2}) holds, then $\sigma_j(x) + \Oo(1) = \max\{j \cdot c + \sigma^q(x)+ \Oo(1) , j \cdot d + \Oo(1)\}$. Similarly the value $\Oo(1)$ can be estimated by the expression  $j \cdot d + \Oo(1)$ so the last $\max$ operation is not needed. Notice that this does not change the constants $c$ and $d$, which proves that they do not depend on $\lambda$.

To conclude the proof, we show that (\ref{equation:lemma_induction2}) holds for a stable register $x$. 
Recall that $\sigma_j = \sigma^q \circ \tau^{j+1} \circ \sigma^q$. 
We will show that (\ref{equation:lemma_induction2}) holds even if we change $\sigma_j$ to $\sigma^q \circ \tau^{j+1} \circ \sigma'$, where $\sigma'$ is any copyless substitution in normal form and $x$ is $\sigma'$-stable. 
The proof is by induction on the size of $\sigma'(x)$. For the base step we must consider $\sigma'(x) = x$ (because $x$ is stable). Then
\begin{align*}
\sigma_j(x) & \;\; = \;\; \sigma^{q} \circ \tau^{j+1} \circ \sigma'(x) \\
& \;\; = \;\; \sigma^{q} \circ \tau^{j+1} (x) \\
& \;\; = \;\; \max\{ \ j \cdot c_x + \sigma^q(x) + \Oo(1), \ j \cdot c_x + \Oo(1) \ \} && \text{(by (\ref{equation:stable_register}))}.
\end{align*}
For the inductive step, assume that~(\ref{equation:lemma_induction2}) holds for expressions $\sigma'$ of length $n$. We want to show that~(\ref{equation:lemma_induction2}) holds for an expression $\sigma''(x)$ of the form $\sigma''(x) = \sigma'(x) \circledast f$ where $\circledast$ is either $+$ or $\max$, $x$ is $\sigma'$-stable and $f$ is an expression where all registers are non-stable (recall that $\sigma''$ is copyless).
By unraveling $\sigma^q \circ \tau^{j+1} \circ \sigma''(x)$, we get that:
\begin{align*}
\sigma^q \circ \tau^{j+1} \circ \sigma''(x) & \;\; = \;\;  \sigma^q \circ \tau^{j+1} \circ (\sigma'(x) \circledast f) \\
 & \;\; = \;\;  (\sigma^q \circ \tau^{j+1} \circ \sigma'(x)) \circledast (\sigma^q \circ \tau^{j+1} \circ f) \\
 & \;\; = \;\; \max\{ \ j \cdot c  + \sigma^q(x)+ \Oo(1), \ j \cdot d + \Oo(1) \ \} \circledast (\sigma^q \circ \tau^{j+1} \circ f) && (\text{by induction}) 
\end{align*}
In the final expression it is easy to check that $\sigma^q \circ \tau^{j+1} \circ f$ is equal to a constant or to $j \cdot c_f + \Oo(1)$ for some $c_f \neq -\infty$.
Indeed,  by (\ref{eq:sigma-tau-equation}) we know that $\sigma^{q} \circ \tau^{j+1}(y) =  j \cdot c_y + \Oo(1)$ for non-stable registers. Since $f$ is an expression over non-stable registers, one can show by induction over the size of $f$ that  $\sigma^q \circ \tau^{j+1} \circ f$ is of the form $j \cdot c_f + \Oo(1)$ for some constant $c_f \neq -\infty$ (but possibly $c_f = 0$) and $j$ big enough.

% We start with the case when $\sigma^q \circ \tau^{j+1} \circ f \in \nat$. Then $\sigma^q \circ \tau^{j+1} \circ f \in \Oo(1)$ and for  $\circledast$ equal to $+$ or $\max$ we have $\sigma^q \circ \tau^{j+1} \circ \sigma''(x) = \max\{ \ j \cdot c  + \sigma^q(x)+ \Oo(1), \ j \cdot d + \Oo(1) \ \}$ for $j$ big enough.
We assume that $\sigma^q \circ \tau^{j+1} \circ f = j \cdot c_f + \Oo(1)$. Then we must consider two cases whether $\circledast$ is $+$ or $\max$ operation. For the former we have:
\begin{align*}
\sigma^q \circ \tau^{j+1} \circ \sigma''(x) & \;\; = \;\;  \max\{ \ j \cdot c + \sigma^q(x) + \Oo(1), \ d \cdot j + \Oo(1) \ \} + j \cdot c_f + \Oo(1) \\
& \;\; = \;\; \max\{ \ j \cdot (c + c_f) + \sigma^q(x) + \Oo(1),  \ j \cdot (d + c_f)  + \Oo(1) \ \}, 
\end{align*}
and for the latter we have:
\begin{align*}
\sigma^q \circ \tau^{j+1} \circ \sigma''(x) & \;\; = \;\;  \max\{ \ \max\{ \ j \cdot c + \sigma^q(x) + \Oo(1), \ j \cdot d + \Oo(1) \ \}, \ j \cdot c_f + \Oo(1) \ \} \\
& \;\; = \;\; \max\{ \ j \cdot c + \sigma^q(x) + \Oo(1), \ j \cdot d + \Oo(1), \ j \cdot c_f + \Oo(1) \ \}. \\
& \;\; = \;\; \max\{ \ j \cdot c + \sigma^q(x) + \Oo(1), \ j \cdot \max\{d, c_f\}  + \Oo(1) \ \}. 
\end{align*}
The number of induction steps depends on the size of the expression $\sigma'(x)$ which for $\sigma'(x) = \sigma^q(x)$ does not depend on $j$. This concludes the proof. 
\end{proof}

\section{Inexpressibility of copyless CRA over the natural semiring}
\label{sec:natural_semiring}
% \section{Natural semiring}
In this section we show that there exists a function definable by WA (or linear CRA, see Section~\ref{sec:preliminaries}), which is not definable by copyless CRA over the semiring $\natsemiring$.
For this, we use the structural results of copyless CRA introduced in the previous section.  Consider the class of functions over the one-letter alphabet $\{a\}$ over the semiring $\natsemiring$. Since all words are of the form $a^n$ for $n \in \nat$ we identify the domain of the functions with $\nat$. 
For WA this class of functions is equivalent to linear recurrence systems (c.f. the matrix characterization of weighted automata~\cite{DrosteHWA09}). In particular, one can define the Fibonacci sequence with a linear and non-copyless CRA as follows.
Let $\cF$ be the linear CRA with one state and two registers $x,y$ presented in Figure~\ref{fig:fibonnaci-CRA}. The only transition updates the registers by $x := y$, $y := x+y$ with the initial values $x = y = 1$ and $x$ defined as the output. Then one can easily see that $\asem{\cF}(a^n) = F_n$, where $F_n$ is the $n$-th element in the Fibonacci sequence.

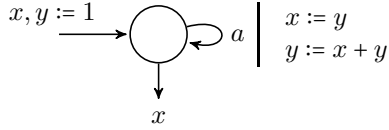
\begin{figure}%{r}{0.4\textwidth}
	% \vspace{-30pt}
	\begin{center}
		\begin{tikzpicture}[-\string>,\string>=stealth',shorten \string>=1pt,auto,node distance=1.8cm,semithick,initial text={},every node/.style={scale=1}]
		\tikzstyle{every state}=[fill=white,draw=black,text=black]
		
		\node[state, draw=white,minimum size=2ex] 		(p0) {};
		\node[state, right of=p0, node distance=1.5cm,minimum size=5ex] 		(p) {};
		
		\draw (p0) edge node {$x,y:=1$ \;\;\;\; \;\;\;\;\;} (p);
		
		\path
		(p)   edge	[loop right]	node {$a \;\;   \begin{array}{|l}
			\;\; x:=y \\
			\;\; y:=x+y
			\end{array}$}  (p);
		
		\node[below of=p, node distance=1.1cm]	(out)  {$x$};
		\draw (p) edge (out);	
		\end{tikzpicture}
	\end{center}
	% 	\vspace{-15pt}
	\caption{A CRA recognizing the Fibonacci sequence $F_n$.} \label{fig:fibonnaci-CRA}
	% 	\vspace{-20pt}
\end{figure}

\begin{theorem}
	\label{theorem:counterexample-nat}
	The Fibonnaci sequence $F_n$ is not recognizable by any copyless CRA.
\end{theorem}
\begin{proof}
We will use some simple facts about Fibonacci numbers that are well known or can be easily extracted from the definitions, for a broader discussion see e.g.~\cite{knuth}. Let $\varphi = \frac{1+ \sqrt{5}}{2}$ be the golden ratio. Then $F_n = \frac{\varphi^n + (1 - \varphi)^n}{\sqrt{5}}$, and the following two properties hold:
\begin{align}
\lim_{n \to +\infty}\frac{F_{n+k}}{F_n} = \varphi^k \text{ for every fixed $k$.}\label{fib1} \\
\varphi^n \not \in \nat \text{ for any $n > 0$.}\label{fib2}
\end{align}

Combining these two properties with results in Section~\ref{sec:structure_copyless} we show that $\asem{\cF}$ cannot be defined by a copyless CRA. For a contradiction assume that there exists a copyless CRA $\cA$ such that $\asem{\cA} = \asem{\cF}$. Since $\cA$ is deterministic and the alphabet has one letter in every state there is always only one transition. Let $q_0$ be the initial state of $\cA$; $\nu_0$ the initial valuation; and $\mu$ its final output function. Then there exists $s,t \in \nat$ and a state $q$ such that $\cA$ loops in state $q$ when reading $a^s$; and $\cA$ moves from state $q_0$ to state $q$ when reading $a^t$. Let $\sigma$ be the substitution defined by reading $a^s$ from state $q$; and $\rho$ the substitution defined by reading $a^t$ from $q_0$. Then $\asem{\cA}(a^{t + i\cdot s}) = \nu_0 \circ \rho \circ \sigma^i \circ \mu(q)$. We will analyze the values of $\sigma^i(x)$ for $x \in \var(\mu(q))$.

By Proposition~\ref{proposition:normal_form} we can assume $\cA$ is in normal form. Then by Lemma~\ref{lemma:reseting-loop} there exists $N$ such that $\sigma^N$ is a reset substitution. Then by Proposition~\ref{prop:long_exp} applied to $\sigma^N$ and a $\sigma^N$-stable register $x$ we have that:
$$
\sigma^{N(i+1)}(x) \;\; = \;\; c^{i} \cdot \sigma^N(x) + d \cdot \sum_{j=0}^{i-1} c^j
$$
for some $c,d \in \nat$.
Now $\sigma^N$, $\nu_0$ and $\rho$ do not depend on $i$ so we can write that for every $\nu_0 \circ \rho \circ \sigma^N(x) = A_x$, where $A_x$ is a constant not depending on $i$. Hence we define the sequence $x(i)$ of values of register $x$
\begin{align}
\label{eq1}
x(i) \;\; = \;\; \nu_0 \circ \rho \circ \sigma^{N(i+1)}(x) \;\; = \;\; c^{i} \cdot A_x + d \cdot \sum_{j=0}^{i-1} c^j.
\end{align}

If $c = 0$ then $g_i = 0$ for all $i$. Otherwise, if $c = 1$ we have $\lim_{i \to + \infty} \frac{x(i+1)}{x(i)} = 1$. In the remaining cases:
$$
x(i) \;\; = \;\; c^{i} \cdot A_x + d \cdot \frac{c^i - 1}{c - 1}
$$
and by taking the limit of $x(i)$ when $i$ tends to infinity we have:
$$
\lim_{i \to + \infty} \frac{x(i+1)}{x(i)} \;\; = \;\; \lim_{i \to + \infty} \frac{c^{i+1} \cdot A_x + d \cdot \frac{c^{i+1} - 1}{c - 1}}{c^{i} \cdot A_x + d \cdot \frac{c^i - 1}{c - 1}} \;\; = \;\; \lim_{i \to + \infty} \frac{c \cdot A_x + c \cdot d \cdot \frac{1 - \frac{1}{c^{i+1}}}{c - 1}}{A_x + d \cdot \frac{1 - \frac{1}{c^i}}{c-1}} = c.
$$

The goal now is to arrive to a contradiction using properties~\eqref{fib1} and~\eqref{fib2}. Stipulating that $n = t + (i+1)\cdot Ns$ (the number of $a$'s to reach $q$ and make $i+1$ many $\sigma^N$ loops)  and $k = Ns$ (the number of $a$'s to make a single $\sigma^N$ loop) we analyze the value of $\lim_{i \to +\infty} \frac{\asem{\cA}(a^{n+k})}{\asem{\cA}(a^{n})}$. By definition $\asem{\cA}(a^{n+ k}) = \nu_0 \circ \rho \circ \sigma^{N(i+1)} \circ \mu(q)$. We can assume $\mu(q) = \sum_l c_l \cdot \prod_j x_{j,l}$ for some constants $c_l \in \nat$ (possibly in this form the registers $x_{j,l}$ can repeat and the expression is not copyless).

Now, $\nu_0 \circ \rho \circ \sigma^{N(i+1)} \circ \mu(q) = \sum_l c_l \cdot \prod_j \nu_0 \circ \rho \circ \sigma^{N(i+1)}(x_{j,l})$. For every $x \in \var(\mu(q))$ if $x$ is stable then its value is $\nu_0 \circ \rho \circ \sigma^{N(i+1)}(x) = x(i)$ as in~\eqref{eq1}. Otherwise, it is a constant not depending on $i$. We substitute all such registers with constants and also all registers such that $x(i) = 0$ are replaced with $0$. Now for all registers $x$ their sequences $x(i)$ have the property that $\lim_{i \to + \infty} \frac{x(i+1)}{x(i)}$ is a natural number. Consider two sequences $x(i)$ and $y(i)$ with such a property. The following are immediate:
\begin{align*}
\lim_{i \to + \infty} \frac{x(i+1)\cdot y(i+1)}{x(i) \cdot y(i)} \;\; = \;\; \lim_{i \to + \infty} \frac{x(i+1)}{x(i)} \cdot \lim_{i \to + \infty} \frac{y(i+1)}{y(i)} \\
\lim_{i \to + \infty} \frac{x(i+1) + y(i+1)}{x(i) + y(i)} \;\; = \;\; \max\left(\lim_{i \to + \infty} \frac{x(i+1)}{x(i)}, \lim_{i \to + \infty} \frac{y(i+1)}{y(i)} \right).
\end{align*}
It follows that $\lim_{i \to +\infty} \frac{\asem{\cA}(a^{n+k})}{\asem{\cA}(a^{n})}$ is a natural number, which is a contradiction with~\eqref{fib1} and~\eqref{fib2}.
\end{proof}

In~\cite{kickasspaper} it is shown that copyless CRA are contained in WA. Given that the Fibonnaci sequence $F_n$ can be defined by a linear CRA and, thus, by a weighted automata, then Theorem~\ref{theorem:counterexample-nat} delimit the expressiveness of copyless CRA over the natural semiring: they are strictly less expressive than WA.

\begin{corollary} 
	\label{corollary:weighted}
	The class of functions defined by copyless CRA over the natural semiring is strictly contained in the class of functions defined by WA over the natural semiring.
\end{corollary}

\section{Inexpressibility of copyless CRA over the max-plus semiring}
\label{sec:nonexpressibility}
%I CHANGED IT TO FIGURE FROM WRAPFIGURE BUT WE CAN SWITCH BACK

Similar as in the previous section, we use here the techniques introduced in Section~\ref{sec:structure_copyless} to show a function definable by a weighted automaton over the max-plus semiring, that is not definable by any copyless CRA over the same semiring.
Apart of delimiting the expressiveness of copyless CRA over the max-plus semiring, this result will show that copyless CRA are not closed under reverse given that the ``reverse'' of this function is definable by copyless CRA.
Before we proceed, notice that the line of reasoning used in the previous section, where functions over one-letter was used, is not possible for automata over the $\artic$ semiring. Indeed, it is known by~\cite{lombardy} that over one-letter alphabets weighted automata over $\artic$ semiring can be turned into unambiguous automata. Since by~\cite{alur2013regular,kickasspaper} we know that unambiguous weighted automata are contained in copyless CRA over any semiring it follows that over one-letter alphabets copyless CRA and weighted automata are equally expressive over the semiring $\artic$.
Therefore, to show inexpressibility of copyless CRA over the semiring $\artic$ a more involved proof is needed over a non-unary alphabet.

\begin{figure}%{r}{0.4\textwidth}
	% \vspace{-30pt}
	\begin{center}
		\begin{tikzpicture}[-\string>,\string>=stealth',shorten \string>=1pt,auto,node distance=1.8cm,semithick,initial text={},every node/.style={scale=1}]
		\tikzstyle{every state}=[fill=white,draw=black,text=black]
		
		\node[state, draw=white,minimum size=2ex] 		(p0) {};
		\node[state, right of=p0, node distance=1.5cm,minimum size=5ex] 		(p) {};
		
		\draw (p0) edge node {$x,y:=0$ \;\;\;\; \;\;\;\;\;} (p);
		
		\path (p)   edge	[loop above]	node {
			$a \;\;   \begin{array}{|l}
			\;\; x:=x+1 \\
			\;\; y:=y
			\end{array}$
		}  (p)
		(p)   edge	[loop right]	node {$\# \;\;   \begin{array}{|l}
			\;\; x:=\max\{x,y\} \\
			\;\; y:=0
			\end{array}$}  (p)
		(p)   edge	[loop below]	node {
			$b \;\;   \begin{array}{|l}
			\;\; x:=x \\
			\;\; y:=y+1
			\end{array}$}  (p);
		
		\node[inner sep=0mm]	(out) at ($(p) + (2,-1.2)$) {$\max\{x,y\}$};
		\draw (p) edge (out.north west);

		\node[state, draw=white,minimum size=2ex, right of=p,node distance=6cm] 		(p0) {};
		\node[state, right of=p0, node distance=1.5cm,minimum size=5ex] 		(p) {};
		
		\draw (p0) edge node {$x,y,z:=0$ \;\;\;\; \;\;\;\;\;} (p);
		
		\path (p)   edge	[loop above]	node {
			$a \;\;   \begin{array}{|l}
			\;\; x:=x+1 \\
			\;\; y:=y \\
			\;\; z:=z
			\end{array}$
		}  (p)
		(p)   edge	[loop right]	node {$\# \;\;   \begin{array}{|l}
			\;\; x:=x \\
			\;\; y:=x \\
			\;\; z:=\max\{z,y\}
			\end{array}$}  (p)
		(p)   edge	[loop below]	node {
			$b \;\;   \begin{array}{|l}
			\;\; x:=x \\
			\;\; y:=y+1 \\
			\;\; z:=z
			\end{array}$}  (p);
		
		\node[inner sep=0mm]	(out) at ($(p) + (2,-1.2)$) {$\max\{z,y\}$};
		\draw (p) edge (out.north west);
		
		\end{tikzpicture}
	\end{center}
	% 	\vspace{-15pt}
	\caption{On the left a copyless cost register automaton~$\cB$ recognizing~$f_\cB$. On the right a cost-register automaton recognizing~$f_\cB^R$. Notice that the latter automaton is not copyless because it uses~$x$ twice when reading~$\#$. On the other hand both automata are linear CRA.} \label{fig:automaton_reverse}
	% 	\vspace{-20pt}
\end{figure}
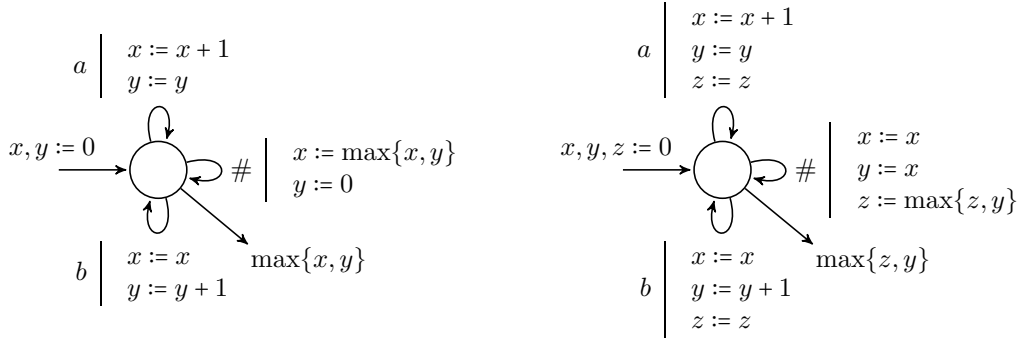

Consider the function $f_\cB$ given by the copyless CRA $\cB$ over $\Sigma = \{a,b,\#\}$ and $\artic$ in Figure~\ref{fig:automaton_reverse}.
To understand $f_\cB$, let us define the output of $\cB$ formally. 
For any $w \in \Sigma^*$, let $k$ be the number of $\#$-symbols in $w$. 
Furthermore, for $0 < i < k$ let $n_i^a$ and $n_i^b$ be the number of $a$'s and $b$'s, respectively, between the $i$-th and ($i+1$)-th occurrence of $\#$ in $w$. 
Additionally, let $n_0^a, n_0^b, n_k^a, n_k^b$ be the numbers of $a$'s and $b$'s before the first and after the last $\#$ in $w$. By the definition of $\cB$ in Figure \ref{fig:automaton_reverse}, one can easily check that $f_\cB$ is defined by $f_\cB(\epsilon) = 0$, for the empty word $\epsilon$, and
\begin{equation} \label{eq:fb-definition}
f_\cB(w) \;\;\; = \;\; \max_{ j \in \{-1, 0, \dots, k\}} \left\{ \ n_j^b + \sum_{i=j+1}^{k} n_i^a \ \right\}
\end{equation}
for $w \neq \epsilon$, where $n_{-1}^b = 0$. From the above definition, one can also give a formal definition of $f_\cB^R$, the reverse function of $f_\cB$, which is given by changing the interval of the index $i$ from $[j+1,k]$ to $[0, j-1]$ in  (\ref{eq:fb-definition}). Formally, one can easily check that $f_\cB^R$ is defined by $f_\cB^R(\epsilon) = 0$, and
\begin{equation} \label{eq:fbR-definition}
f_\cB^R(w) \;\;\; = \;\; \max_{j \in \{0,\dots, k, k+1\}} \left\{ \ \sum_{i=0}^{j-1}n_i^a + n_j^b \ \right\},
\end{equation}
for $w \neq \epsilon$, where $n_{k+1}^b = 0$. The following theorem is the main result of this section.
% 
% In Theorem~\ref{theorem:counterexample} we prove that $f_\cB^R$ is not definable by any copyless CRA. Intuitively, a copyless CRA $\cA$ that computes $f_\cB^R$ will have to store the number of $a$'s in a register $x_a$ and compare $x_a$ with the last sequence of $b$'s between $\#$-letters. Since a copyless CRA can used the content of a register just once per transition, then $\cA$ will loose the content of $x_a$ and forget the number of $a$'s seen so far. The previous argument is formalized in the next result.

\begin{theorem}
\label{theorem:counterexample}
 The function $f_\cB^R$ is not recognizable by any copyless CRA.
\end{theorem}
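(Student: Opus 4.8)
The plan is to argue by contradiction: suppose a copyless CRA $\cA$ over $\artic$ satisfies $\asem{\cA}=f_\cB^R$. First I would put $\cA$ into the shape provided by Section~\ref{sec:structure_copyless}. Since $f_\cB^R(w)\ge 0$ for every $w$, it never takes the value $\zero=-\infty$, so it is a non-zero function and Proposition~\ref{proposition:non-zero} lets me assume $\cA$ uses no $\zero$-constant; the normal-form proposition then lets me fix a linear order $\preceq$ under which every substitution of $\cA$ is in normal form. As $f_\cB^R$ is unbounded, $\cA$ has cycles, and I would drive it by a fixed prefix into a reachable bottom strongly connected component $Q'$ and argue inside $Q'$, where $\cA$ is strongly connected so that both Proposition~\ref{prop:collapse_register} (collapse words $w^q$, $\sigma^q$) and Lemma~\ref{lemma:loops} (the affine max-plus growth of stable registers under pumping) are available.

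The conceptual obstruction I would target is visible in the closed form $f_\cB^R(w)=\max_t\big(S_t+m_t\big)$, where $S_t=\sum_{i<t} n_i$ is the number of $a$'s strictly before the $t$-th block. Reading left to right, the only apparent ways to maintain the running maximum of the candidates $S_t+m_t$ are (i) to keep a register $C=\max_{t'\le t}(S_{t'}+m_{t'})$ and a register $S=S_t$, updating $C:=\max\{C,\,S+T\}$ at the end of each block; or (ii) to keep the relative value $C-S$. Option (i) uses $S$ twice in one transition --- once to form the new candidate and once to let $S$ persist --- which the copyless restriction forbids; option (ii) updates by $\max\{(C-S)-N_{t+1},\,m_{t+1}\}$, which needs the subtraction $-N_{t+1}$, an operation that does not exist in $\artic$ (where $\odot=+$ has no inverse). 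The proof must upgrade this informal dichotomy into an impossibility for every copyless $\cA$, no matter how many registers it has.

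To turn this into a contradiction I would feed $\cA$ long inputs of the hard shape, with many short $a$-blocks interleaved with $b$-blocks so that the partial sums $S_t$ genuinely grow and each is needed by a different candidate. By pigeonhole the run repeats a state, exposing a loop $v$ that I pump; Lemma~\ref{lemma:loops} then pins every stable register to the form $\max\{\,j\cdot c_x+\sigma^q(x)+\Oo(1),\ j\cdot d_x+\Oo(1)\,\}$ and every non-stable register to $\Oo(1)$, with coefficients $c_x,d_x$ that do not depend on the context $\lambda$ preceding the loop. Meanwhile Lemma~\ref{lemma:copyless-expressions} tells me that each register value and the output are, in $\artic$, a maximum of sums over pairwise distinct register-sets in which every register occurs once. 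Forming the candidate $S_t+m_t$ in a register therefore consumes the register holding that partial $a$-sum, so keeping the same $a$-sum available --- uncontaminated by the later $m_{t'}$ --- for a subsequent candidate requires a distinct register carrying the same growth coefficient. Choosing the block sizes $n_t,m_t$ adversarially so that for each $t$ the $t$-th candidate is the unique maximum on some input, I would force $\cA$ to hold more mutually incompatible carriers of the growing $a$-sum than it has registers, or to reach, on two inputs with different values of $f_\cB^R$, the same state with identical register valuations (a valuation collision made possible precisely by the bounded max-plus form of Lemma~\ref{lemma:loops}), contradicting that $\cA$ is deterministic.

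The hard part is making the register-counting rigorous in the presence of $\oplus=\max$: a single copyless expression can already aggregate several partial sums through $\max$, so I must rule out that $\cA$ packs all the required $S_t+m_t$ into a bounded number of max-plus registers. This is exactly where the absence of an additive inverse in $\artic$ does the work, preventing $\cA$ from recovering $S_t$ from an aggregate that has already absorbed it; I expect the cleanest closure is the fooling-set step of the previous paragraph, where Lemma~\ref{lemma:loops} guarantees the two chosen inputs are indistinguishable to $\cA$ while $f_\cB^R$ separates them.
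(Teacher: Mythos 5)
Your setup matches the paper's: reduce to a non-zero, normal-form, strongly connected automaton via Proposition~\ref{proposition:non-zero}, and bring in Proposition~\ref{prop:collapse_register}, Lemma~\ref{lemma:loops} and Lemma~\ref{lemma:copyless-expressions}. You also correctly identify the informal obstruction (copying the running $a$-sum versus subtracting in $\artic$). But the decisive step is missing, and the two endgames you sketch each have a genuine gap. The register-counting endgame rests on the claim that every candidate $S_t+m_t$ consumes a distinct carrier of the growing $a$-sum; as you yourself note, a single max-plus register can absorb arbitrarily many candidates through $\oplus=\max$ over the course of a run, and nothing in your sketch rules this out --- ``the absence of an additive inverse does the work'' is a slogan here, not an argument. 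The fooling-set endgame needs two inputs with different $f_\cB^R$-values on which $\cA$ reaches \emph{identical} configurations; Lemma~\ref{lemma:loops} only controls register values up to additive $\Oo(1)$ terms, which is far too coarse to manufacture an exact valuation collision in $\natninf$.

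The paper closes the argument by a purely quantitative mechanism that your proposal does not contain. It feeds the single family $w(s,j) = w_0 \cdot w_a(s,j)\cdot w^{q_a,q_b}\cdot w_b(j^2)\cdot w^{q_b,q_a}\cdot w_a(j)$, in which the $a$-loop is pumped $s$ times before a $b$-block of size $\Theta(j^2)$ and once more after it. The $j^2$-sized $b$-block overshadows the trailing $a$'s, so the correct output is $n_b\, j^2 + n_a\, s\, j + \Oo(s)$: the last $a$-loop must \emph{not} be counted. But Lemma~\ref{lemma:loops} guarantees that the growth coefficient $c_a^x$ of a stable register is independent of the surrounding context $\lambda$, so any register whose coefficient supplies the dominant $j$-linear term necessarily picks up the trailing loop as well, producing a factor $(s+1)$ where $s$ is required. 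This forces the dichotomy $\sum_x c_a^x \ge n_a$ (the linear coefficient exceeds $n_a s$ by at least $n_a$, an overcount of order $j$) or $\sum_x c_a^x \le n_a-1$ (the coefficient falls short by at least $s-\Oo(1)$ once $s$ is fixed larger than a constant depending only on $\cA$), and either branch contradicts correctness for $j$ large. You would need to find and carry out something playing the role of this coefficient dichotomy before your proof is complete.
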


% In the proof of Theorem~\ref{theorem:counterexample} we will need the following technical result.
\begin{proof}
Suppose there exists a copyless CRA $\cA \; = \; (Q, \AL, \cX, \delta, q_0, \nu_0, \mu)$ which computes the function $f_\cB^R$. 
Since the function $f_\cB^R$ is non-zero (i.e. $f_\cB^R(w) \neq -\infty$ for every $w \in \Sigma^*$) then by Lemma~\ref{proposition:non-zero} we can assume that the transitions, initial and final functions in $\cA$ do not use $-\infty$. Moreover, since the $\artic$ semiring is also non-zero then additionally we can assume that the registers never store $-\infty$.

Most of the time we will assume that $\cA$ is a strongly connected automaton (see Section~\ref{subsection:stable_reset}).
If $\cA$ is not strongly connected, we change our analysis by adding a word $w_0$ from the initial state to a BSCC and then construct a counterexample word from there. 
Formally, let $\delta^*(q_0, w_0) = (q_0', \sigma_0)$ where $q_0'$ is a state inside a BSCC $Q' \subseteq Q$ of $\cA$. We can always redefine $\cA$ as follows: $q_0'$ is the new initial state, and the new initialization function $\nu_0'$ is defined as $\nu_0'(x) = \nu_0 \circ \sigma_0(x)$. 
It is straightforward to check that for every word $w$ the new automaton constructed from $\cA$ will return the output value of $\cA$ over $w_0 \cdot w$. Thus, for the rest of the proof we assume that $\cA$ is strongly connected, and we will include $w_0$ and $\sigma_0$ in the final analysis.
Finally, given that the automaton is trimmed then we can assume that the BSCC contains final states.
% and, therefore, all the results from Section~\ref{sec:structure_copyless} hold.

We start the proof by analyzing the behavior of $\cA$ on words containing cycles of just one letter.
Since $Q$ is strongly connected, then there exists a state $q_a$, a word $v_a = a^{n_a}$ with $n_a > 0$ and a substitution $\tau_a$ such that $\delta^{*}(q_a, v_a) = (q_a, \tau_a)$. 
We can assume by Lemma~\ref{lemma:reseting-loop} that $\tau_a$ is a reset substitution, that is, $\var(\tau_a(x)) = \emptyset$ for all non-stable registers $x$ in $\tau_a$. 
In addition, define a sequence of words:
\[
w_a(j) \;\; = \;\; w^{q_a} \cdot v_a^{j+1} \cdot w^{q_a}
\]
such that $\delta^*(q_a, w_a(j)) = (q_a, \sigma_a^j)$ for some substitution $\sigma_a^j$ (i.e. $\sigma_a^j$ depends on $j$). 
Recall that $w^{q_a}$ is the reset word defined in Proposition~\ref{prop:reset_register} for the state $q_a$ and $\sigma^{q_a}$ is an $\cA$-reset substitution such that $\delta^*(q_a,w^{q_a}) = (q_a, \sigma^{q_a})$.
By Lemma~\ref{lemma:loops} we know that there exist constants $c_a^x$ and $d_a^x$ such that for $j$ big enough:
\begin{align}
\label{equation:loop_a}
\sigma_a^j \circ \lambda(x) & \;\; = \;\; 
\begin{cases}
 \Oo(1) & \text{if $x$ is non-stable}\\
 \max\{ \ j \cdot c_a^x + \sigma^{q_a}(x) + \Oo(1), \ j \cdot d_a^x + \Oo(1) \ \} & \text{otherwise,}
 \end{cases}
\end{align}
for any $\lambda \in \subs(\cA)$ whose size does not depend on $j$.
Analogously, by the definition of $\cA$ we can find a state $q_b$ in $\cA$, a word $v_b = b^{n_b}$ for some $n_b \geq 0$ and a reset substitution $\tau_b$ such that $\delta^*(q_b, v_b) = (q_b, \tau_b)$.
Then similar to the sequence $w_a(j)$, we can define the sequence of words:
$$w_b(j) = w^{q_b} \cdot v_b^{j+1} w^{q_b}$$
such that $\delta^*(q_b, w_b(j)) = (q_b, \sigma_b^j)$ for a substitution $\sigma_b^j$ that satisfies:
\begin{align}
\label{equation:loop_b}
\sigma_b^j \circ \lambda(x) & \;\; = \;\; 
\begin{cases}
 \Oo(1) & \text{if $x$ is non-stable}\\
 \max\{ \ j \cdot c_b^x + \sigma^{q_b}(x) + \Oo(1), \ j \cdot d_b^x + \Oo(1) \ \} & \text{otherwise,}
 \end{cases}
\end{align}
for any $\lambda \in \subs(\cA)$ whose size does not depend on $j$.
% (\ref{equation:loop_a}) but with new constants $c_b^x$ and $d_b^x$.

The next step is to understand the growth of stable registers when we repeat the loop $w_a(j)$ several times. % i.e., when we compose $(\sigma_a^j)$.
For any $j \geq 1$ and $s \geq 1$ we define the sequence of words:
\[
w_a(s,j) \;\; = \;\; (w^{q_a} \cdot v_a^{j+1})^s \cdot w^{q_a}.
\]
Let $\delta^*(q_a,w_a(s,j)) = (q_a, \sigma_a^{s,j})$, then by definition $\sigma_a^{s,j} = (\sigma^{q_a} \circ \tau^{j+1} )^s \circ \sigma^{q_a}$.
For any natural number $s \geq 1$ if $x$ is a stable register we prove by induction that
\begin{align}
 \label{equality1}
 \sigma_a^{s,j} \circ \lambda(x) &  \;\; = \;\; 
 \max\{ \ s \cdot j \cdot c_a^x + \sigma^{q_a}(x) + \Oo(s), \   (s -1) \cdot j \cdot c_a^x + j \cdot d_a^x + \Oo(s) \ \},
\end{align}
for any $\lambda \in \subs(\cA)$ whose size does not depend on $j$, and the component hidden in $\Oo(s)$ depends on $s$ but does not depend on $j$.
For $s = 1$~(\ref{equality1}) follows from~(\ref{equation:loop_a}). For $s>1$ we show:
\setlength{\jot}{4pt}
\begin{align*}
& \sigma_a^{s+1,j} \circ \lambda (x)
% & \;\; = \;\; \sigma^{q_a}  \circ \tau^{j+1} \circ \sigma^{q_a} \circ \tau^{j+1} \circ \sigma^{q_a}(x) \\
= \;\; \sigma^{q_a}  \circ \tau^{j+1} \circ \sigma_a^{s,j} \circ \lambda(x) && (\text{by definition})\\
& = \;\; \sigma^{q_a}  \circ \tau^{j+1} \circ \big( \max\{ \ s \cdot j \cdot c_a^x + \sigma^{q_a}(x) + \Oo(s), \ (s-1) \cdot j \cdot c_a^x + j \cdot d_a^x + \Oo(s) \ \}  \big)  && (\text{by induction})\\
& = \;\; \max\{ \ s \cdot j \cdot c_a^x + \sigma_a^j(x) + \Oo(s), \ (s-1) \cdot j \cdot c_a^x + j \cdot d_a^x + \Oo(s) \ \}  \\
& = \;\; \max\{ \ s \cdot j \cdot c_a^x + \max\{ \ j \cdot c_a^x + \sigma^{q_a}(x) + \Oo(1), \ j \cdot d_a^x + \Oo(1) \ \}  + \Oo(s), \\
& \hspace{8cm} (s-1) \cdot j \cdot c_a^x + j \cdot d_a^x + \Oo(s) \ \}&& (\text{by (\ref{equation:loop_a}}))\\
& = \;\; \max\{ \ (s+1) \cdot j \cdot c_a^x + \sigma^{q_a}(x) + \Oo(s), \ s \cdot  j \cdot c_a^x + j \cdot d_a^x + \Oo(s), \\
& \hspace{8cm} (s-1) \cdot j \cdot c_a^x + j \cdot d_a^x + \Oo(s) \ \}  \\
& = \;\; \max\{ \ (s+1) \cdot j \cdot c_a^x + \sigma^{q_a}(x) + \Oo(s), \ s \cdot  j \cdot c_a^x + j \cdot d_a^x + \Oo(s) \ \}  && (\text{by dominance})\\
\end{align*}

We are ready to define the word for which we prove that $\cA$ gives the wrong output.
Recall that $w_0$ is a word such that $\delta^*(q_0, w_0) = (q_0', \sigma_0)$, where $q_0'$ is in the BSCC of $\cA$. Without loss of generality we assume that $q_0' = q_a$.
Let also $w^{q_a, q_b}$ and $w^{q_b, q_a}$ be the reset words between $q_a$ and $q_b$ from Proposition~\ref{prop:reset_register} (given that we are inside a BSCC of $\cA$, we can assume that $q_b$ is accessible from $q_a$ and viceversa).
For all $j \geq 0$ we define the sequence of words:
\begin{align}\label{eq:final_word}
w(s,j) \;\; = \;\; w_0 \cdot w_a(s,j) \cdot w^{q_a, q_b} \cdot w_b(j^2) \cdot w^{q_b, q_a} \cdot w_a(j)
\end{align}
To understand the construction of $w(s,j)$, consider the following diagram, which is a fragment of $\cA$.
% \vspace{-1ex}
\begin{center}
% \label{component}
\begin{tikzpicture}[-\string>,\string>=stealth',shorten \string>=1pt,auto,node distance=2.3cm,semithick,initial text={},scale=0.5,every node/.style={scale=1}]
\tikzstyle{every state}=[fill=white,draw=black,text=black,style={font=}]

\node[state,minimum size=2ex] 		(p) {$q_a$};
\node[state,minimum size=2ex] 		[right of=p](s) {$q_b$};

\path (p)   edge	[loop left]		node {$w_a(s,j)$}  (p)
(p)   edge	[bend left]	node {$w^{q_a,q_b}$}  (s)
(s)   edge	[bend left]	node {$w^{q_b,q_a}$}  (p)
(s)   edge	[loop right]	node [align=center]{$w_b(j^2)$}(s);
\end{tikzpicture}
\end{center}
% \vspace{-2ex}
The automaton $\cA$ starts from reading $w_0$ to reach the state $q_a$. Then it cycles in state $q_a$ reading $w_a(s,j)$. After that it moves to $q_b$ reading the reset word $w^{q_a,q_b}$. Then it cycles in state $q_b$ reading $w_b(j^2)$. Finally, it comes back to the state $q_a$, where it cycles again.
% It is straightforward to check that for any $s,j \geq 0$ the word $w(s,j)$ has a run that goes to $q_a$ with $a$, then it cycles in $q_a$ with $w_a(s, j)$, next it goes from $q_a$ to $q_b$ with $w^{q_a, q_b}$, cycles again $j^2$ times in $q_b$ with $w_b(j^2)$, then again back from $q_b$ to $q_a$ with $w^{q_b, q_a}$ and finally it cycles $j$-times with the word $w_a(j)$.
% In other words, $w(s,j)$ behaves as expected by cycling and jumping between $q_a$ and $q_b$.
We show that $\cA$ outputs wrong value over $w(s,j)$ for some fixed $s$ (to be determined later) and for $j$ big enough. 

First we estimate the correct value $f_\cB^R(w(s,j))$ for any $s,j \geq 0$. By definition the words $w_a(s, j)$, $w_a(j)$ and $w_b(j^2)$ contain blocks of $a$'s and $b$'s separated by reset words.
By Proposition~\ref{prop:reset_register} each reset word contains  the letter $\#$, thus the cycles of $a$'s and $b$'s in the words $w_a(s, j)$, $w_a(j)$, and $ w_b(j)$ are all separated by $\#$. From the definition of these cycles, one can easily check that the number of $a$'s in $w_a(j)$ is $j \cdot n_a + \Oo(1)$ and the number of $b$'s in $w_b(j^2)$ is $j^2 \cdot n_b + \Oo(1)$. 
Furthermore, the number of $a$'s in $w_a(s,j)$ is equal to $s \cdot j \cdot n_a + \Oo(1)$. 
Notice that the only fragments in $w(s,j)$ whose size depends on $j$ are these cycles. The reset words between the cycles are of constant size and there is $\Oo(s)$ of them.
Recall that by~(\ref{eq:fbR-definition})
$f_\cB^R(w) = \max_{j} \left\{ \ \sum_{i=0}^{j-1}n_i^a + n_j^b \ \right\}$,
where $n_i^a$ and $n_i^b$ are the numbers of $a$'s and $b$'s, respectively, between the $\#$-letters. It is easy to see that for $j$ big enough:
\begin{align}
\label{eq:real-output-function}
f_\cB^R(w(s,j)) & \;\; = \;\; n_b \cdot j^2 + n_a \cdot s \cdot j  + \Oo(s)
\end{align}
where $\Oo(s)$ represents the fixed number of $a$'s (their number does not depend on $j$) that are presented in $w_0$, $w^{q_a}$, $w^{q_a,q_b}$ or $w^{q_b}$. 
Notice that the last suffix $w_a(j)$ of $a$'s is not contributing into the sum.
This is because the sequence $w_b(j^2)$ is overshadowing the last sequence $w_a(j)$, i.e., the max-operator considers the number of $b$'s in $w_b(j^2)$ instead of the number of $a$'s in $w_a(j)$.
The rest of the proof is to show that $\cA$ does not output the right value on $w(s,j)$. 
The intuition behind this misbehavior of $\cA$ with respect to $w(s,j)$ is that if $\cA$ is summing the sequence of $a$'s before $w_b(j^2)$ then it will also add the sequence of $a$'s after $w_b(j^2)$, which by the previous calculations should not happen.

We estimate now the values in the registers of $\cA$ after reading the word $w(s,j)$ for $j$ big enough.
By the construction of $w(s,j)$ we know that $\delta^*(q_0, w(s,j)) = (q_a,\sigma_{w(s,j)})$, and the final value in the registers of $\cA$ after reading $w(s,j)$ is given by composing substitutions corresponding to the composition of the words in~(\ref{eq:final_word}):
\[
\nu_0 \circ \sigma_{w(s,j)} = \nu_0 \circ \sigma_0 \circ \sigma_a^{s,j} \circ \sigma^{q_a,q_b} \circ \sigma_b^{j^2} \circ \sigma^{q_b,q_a} \circ \sigma_a^j.
\]
For all non-stable registers $y$ this expression is equivalent to $\sigma_a^j(y)$, which is estimated by $\Oo(1)$ by (\ref{equation:loop_a}) (where $\lambda$ is the identity substitution). 
For a stable register $x$, the story is much more complicated. We evaluate the expression $\sigma_{w(s,j)}(x)$ step-by-step to estimate its value. 
First, by (\ref{equation:loop_a}):
\[
\sigma_a^j(x) \;\; =\;\; \max\{ \ j \cdot c_a^x + \sigma^{q_a}(x) + \Oo(1), \ j \cdot d_a^x + \Oo(1) \ \}.
\]
Then by composing $\sigma_b^{j^2} \circ \sigma^{q_b, q_a}$ with  $\sigma_a^j(x)$  we get:
\begin{align*}
\sigma_b^{j^2} \circ \sigma^{q_b, q_a} \circ \sigma_a^j(x) & \;\; = \;\; \sigma_b^{j^2} \circ \sigma^{q_b, q_a} \circ \big( \max\{ \ j \cdot c_a^x + \sigma^{q_a}(x) + \Oo(1), \ j \cdot d_a^x + \Oo(1) \ \} \big) \\
& \;\; = \;\; \max\{ \ j \cdot c_a^x + \sigma_b^{j^2} \circ \sigma^{q_b, q_a} \circ \sigma^{q_a}(x) + \Oo(1), \ j \cdot d_a^x + \Oo(1) \ \}\\
& \;\; = \;\; \max\{ \ j \cdot c_a^x + \max\{ \ j^2 \cdot c_b^x + \sigma^{q_b}(x) + \Oo(1), \ j^2 \cdot d_b^x + \Oo(1) \ \} + \Oo(1), \ j \cdot d_a^x + \Oo(1) \ \}
\tag{$\heartsuit$}\label{eq:sigmab}\\
& \;\; = \;\; \max\big\{ \ j \cdot c_a^x + j^2 \cdot c_b^x + \sigma^{q_b}(x) + \Oo(1), \ j \cdot c_a^x + j^2 \cdot d_b^x + \Oo(1), \ j \cdot d_a^x + \Oo(1) \big\} 
\end{align*}
There is only one nontrivial equality~(\ref{eq:sigmab}). It holds by~(\ref{equation:loop_b}) and by considering $\lambda = \sigma^{q_b, q_a} \circ \sigma^{q_a}$.
The next step is to compose $\sigma_a^{s,j} \circ \sigma^{q_a,q_b}$ with $\sigma_b^{j^2} \circ \sigma^{q_b,q_a} \circ \sigma_a^j$.
We denote this composition by $\sigma_{w(s,j)}^{-w_0}$:
\setlength{\jot}{4pt}
\begin{align*}
\sigma_{w(s,j)}^{-w_0}(x) & \;\; = \;\; \sigma_a^{s,j} \circ \sigma^{q_a,q_b} \circ \big(\sigma_b^{j^2} \circ \sigma^{q_b,q_a} \circ \sigma_a^j(x) \big) \\¿
& \;\; = \;\; \sigma_a^{s,j} \circ \sigma^{q_a,q_b} \circ \max\big\{ \ j \cdot c_a^x + j^2 \cdot c_b^x + \sigma^{q_b}(x) + \Oo(1), \\
& \hspace{3.75cm} j \cdot c_a^x + j^2 \cdot d_b^x + \Oo(1), \ j \cdot d_a^x + \Oo(1) \big\} \\¿
& \;\; = \;\;  \max\big\{ \ j \cdot c_a^x + j^2 \cdot c_b^x + \sigma_a^{s,j} \circ \sigma^{q_a,q_b} \circ \sigma^{q_b}(x) + \Oo(1), \\
& \hspace{1.7cm} j \cdot c_a^x + j^2 \cdot d_b^x + \Oo(1), \ j \cdot d_a^x + \Oo(1) \big\} \\¿ 
& \;\; = \;\;  \max\big\{ \ j \cdot c_a^x + j^2 \cdot c_b^x + \max\big\{ \ s \cdot j \cdot c_a^x + \sigma^{q_a}(x) + \Oo(s), \   (s -1) \cdot j \cdot c_a^x + j \cdot d_a^x + \Oo(s) \ \big\} + \Oo(1), \tag{$\Diamond$}\label{eq:sigmaqaqb} \\
& \hspace{1.7cm} j \cdot c_a^x + j^2 \cdot d_b^x + \Oo(1), \ j \cdot d_a^x + \Oo(1) \big\} \\¿ 
& \;\; = \;\;  \max\big\{ \ j \cdot c_a^x + j^2 \cdot c_b^x + s \cdot j \cdot c_a^x + \sigma^{q_a}(x) + \Oo(s), \  j \cdot c_a^x + j^2 \cdot c_b^x +   (s -1) \cdot j \cdot c_a^x + j \cdot d_a^x + \Oo(s), \\
& \hspace{1.7cm} j \cdot c_a^x + j^2 \cdot d_b^x + \Oo(1), \ j \cdot d_a^x + \Oo(1) \big\} \\
& \;\; = \;\;  \max\big\{ \ (s+1) \cdot j \cdot c_a^x + j^2 \cdot c_b^x + \sigma^{q_a}(x) + \Oo(s), \ j \cdot (s \cdot c_a^x + d_a^x) + j^2 \cdot c_b^x  + \Oo(s), \\
& \hspace{1.7cm} j \cdot c_a^x + j^2 \cdot d_b^x + \Oo(1), \ j \cdot d_a^x + \Oo(1) \big\} 
\end{align*}
Like before all equalities are routine except for~(\ref{eq:sigmaqaqb}). It holds by~(\ref{equality1}) and by considering $\lambda = \sigma^{q_a, q_b} \circ \sigma^{q_b}$. We compose $\sigma_{w(s,j)}^{-w_0}$ with $\nu_0 \circ \sigma_0(x)$,
Given that $\nu_0 \circ \sigma_0(x)$ is a constant with respect to $j$ (i.e. $\nu_0 \circ \sigma_0(x) = \Oo(1)$), we have that $\nu_0 \circ \sigma_{w(s,j)}$ over a stable register $x$ in $\cA$ is equal to:
\begin{align}
\nu_0 \circ \sigma_{w(s,j)}(x)
& \; = \;  \max\big\{ \ (s+1) \cdot j \cdot c_a^x + j^2 \cdot c_b^x  + \Oo(s), \ j \cdot (s \cdot c_a^x + d_a^x) + j^2 \cdot c_b^x  + \Oo(s), \nonumber \\
& \hspace{1.7cm} j \cdot c_a^x + j^2 \cdot d_b^x + \Oo(1), \ j \cdot d_a^x + \Oo(1) \big\}  \label{equation:final}
\end{align}
For the rest of the proof, let us fix the value for $s$ that does not depend on $j$ (the exact value of $s$ will be defined at the very end of the proof). 
% We study now the behavior of the function $\asem{\cA}(w(s,j))$ for $j$ big enough and we start with the form of $\nu_0 \circ \sigma_{w(s,j)}(x)$ for a stable register $x$.
For a fixed $s$ we denote the quadratic functions on $j$ by:
\begin{align*}
g_1^x(j) & \;\; = \;\; c_b^x \cdot j^2  +  c_a^x \cdot (s+1) \cdot j + \Oo(1) \\ 
g_2^x(j) & \;\; = \;\; c_b^x \cdot j^2 + (c_a^x \cdot s + d_a^x) \cdot j + \Oo(1) \\
g_3^x(j) & \;\; = \;\; d_b^x \cdot j^2 + c_a^x \cdot j +  \Oo(1) \\
g_4^x(j) & \;\; = \;\; d_a^x \cdot j + \Oo(1)
\end{align*}
where $\Oo(1)$ denotes some constant that does not depend on $j$ (recall that $s$ is fixed).
Then for every stable register $x$ we have
$$
\nu_0 \circ \sigma_{w(s,j)}(x) = \max_{i=1}^4\{g_i^x(j)\}.
$$
Notice that if $d_a^x < c_a^x$ then for $j$ big enough it holds that $g_2^x(j) \leq g_1^x(j)$ and $g_2^x(j)$  is dominated by $g_1^x(j)$.
This motivates the definition of $e_a^x$, given by
$e_a^x = d_a^x - c_a^x$ when $d_a^x \geq c_a^x$ and $e_a^x = 0$ otherwise.
Furthermore, redefine $g_2^x(j)$ by:
\[
g_2^x(j) \;\; = \;\;  c_b^x \cdot j^2  + ( c_a^x \cdot (s+1) + e_a^x) \cdot j + \Oo(1) 
\]
It is easy to see that $\nu_0 \circ \sigma_{w(s,j)}(x) = \max_{i=1}^4\{g_i^x(j)\}$ holds for the new definition of $g_2^x(j)$. 

Towards the end of the proof, we study the output function $\asem{\cA}(w(s,j)) = \nu_0 \circ \sigma_{w(s,j)} \circ \mu(q_a)$. 
By Lemma~\ref{lemma:copyless-expressions}, we know that the copyless expression $\mu(q_a)$ can be presented as an expression of the form:
\begin{align*}
\mu(q_a) & \;\; = \;\; \max_{i = 1}^{k} \, \left\{ \ l_i + \sum_{x \in X_i} x \ \right\},
\end{align*}
where $X_1, \ldots, X_k$ is a sequence of different sets over $\cX$ and $l_1, \ldots, l_k$ is a sequence of values over $\nat$ for $k \ge 0$. Notice that in this representation the expression $\mu(q_a)$ might be not copyless.
By the previous analysis we know that for $j$ big enough $\nu_0 \circ \sigma_{w(s,j)}(x)$ is either estimated by $\Oo(1)$ or equal to $\max_{i=1}^4\{g_i^x(j)\}$.
Then by composing $\nu_0 \circ \sigma_{w(s,j)}$ with $\mu(q_a)$ we get:
\begin{align}
\asem{\cA}(w(s,j))& \;\; = \;\;  \nu_0 \circ \sigma_{w(s,j)} \circ \mu(q_a) \nonumber \\
& \;\; = \;\;  \nu_0 \circ \sigma_{w(s,j)} \circ \max_{i = 1}^{k} \, \big\{ l_i +  \sum_{x \in X_i} x \ \big\} \nonumber \\
& \;\; = \;\; \max_{i = 1}^{k} \, \big\{ \ l_i + \sum_{x \in X_i} \nu_0 \circ \sigma_{w(s,j)}(x) \ \big\} \nonumber  \\
& \;\; = \;\; \max_{i = 1}^{h} \, \big\{ \ n_i^b + \sum_{x \in Y_i} \max_{i=1}^4\{ \ g_i^x(j) \ \} \ \big\} \label{eq:output-in-terms-of-max-plus} 
\end{align}
where $Y_1, \ldots, Y_h$ is a new sequence of subsets of stable registers and $m_1, \ldots, m_h$ is a sequence of values over $\nat$ for $k \ge 0$.
The last equality holds since $\nu_0 \circ \sigma_{w(s,j)}(x)$ are constants for non-stable registers. Thus we can sum the constants with $l_i$ and get new constants $n_i^b$; and regroup the sets $X_i$ to form new sets of stable registers $Y_i$. Notice that we might need multiple copies of the same $Y_i$.

The next step is to further simplify the output $\asem{\cA}(w(s,j))$.
For this, note that $\asem{\cA}(w(s,j))$ is the sum and maximization of quadratic functions over $j$. 
Then for $j$ big enough there exists a constant $i^* \leq h$ and a partition of $Y_{i^*} = Z_1 \uplus Z_2 \uplus Z_3 \uplus Z_4$ such that:
\begin{align*}
\asem{\cA}(w(s,j)) & \;\; = m_{i^*} + \sum_{x \in Z_1} g_1^x(j) + \sum_{x \in Z_2} g_2^x(j) + \sum_{x \in Z_3} g_3^x(j) + \sum_{x \in Z_4} g_4^x(j).
\end{align*}
The index $i^*$ is the one where (\ref{eq:output-in-terms-of-max-plus}) is maximized for $j$ big enough and the partition $Y_{i^*} = Z_1 \uplus Z_2 \uplus Z_3 \uplus Z_4$ is the division of $Y_{i^*}$ where each function $g_i^x$ dominates for $j$ big enough.
Thus, $\asem{\cA}(w(s,j))$ is a sum of quadratic functions and by summing common $j$-terms we can reduce $\asem{\cA}(w(s,j))$ to a polynomial of the form $B \cdot j^2 + A \cdot j + C$. 
Intuitively, the value $B \cdot j^2$ should correspond to the number of $b$'s in $w(s,j)$ and $A \cdot j$ to the number of $a$'s in $w(s,j)$. 
In the last part of this proof, we analyze the $A$-coefficient and compare it with the corresponding coefficient in $f_\cB^R(w(s,j))$. 

Recall from (\ref{eq:real-output-function}) that the output of $f_\cB^R$ over $w(s,j)$ is equal to $n_b \cdot j^2 + n_a \cdot s \cdot j  + \Oo(s)$.
If the output $\asem{\cA}(w(s,j))$ is correct then we should have $A = n_a \cdot s$.
By adding the linear coefficients of $g_i^x(j)$ for $i \in\{1,2,3,4\}$ and $x \in Z_i$ we get that
\setlength{\jot}{7pt}
\begin{align*}
A & \;\; = \;\; \sum_{x \in Z_1} c_a^x \cdot (s+1) \  + \ \sum_{x \in Z_2} \big( c_a^x \cdot (s+1) \  + \ e_a^x \big) \  + \ \sum_{x \in Z_3} c_a^x \  + \ \sum_{x \in Z_4} d_a^x \\
& \;\; = \;\;  (s+1) \cdot \sum_{x \in Z_1 \cup Z_2} c_a^x  \  + \ \sum_{x \in Z_2} e_a^x  \  + \ \sum_{x \in Z_3} c_a^x \  + \ \sum_{x \in Z_4} d_a^x.
\end{align*}
Given that $A$ should be equal to $n_a \cdot s$ this implies that $\sum_{x \in Z_1 \cup Z_2} c_a^x  < n_a$.
Otherwise, $A > n_a \cdot s$. 
Therefore, $\sum_{x \in Z_1 \cup Z_2} c_a^x \leq n_a - 1$ and thus
\[
(s+1) \cdot \sum_{x \in Z_1 \cup Z_2} c_a^x \;\; \leq \;\; (s+1) \cdot (n_a - 1)
\]
By replacing this fact in the definition of $A$, we can over-approximate the $A$-coefficient as follows
\begin{align}
A & \;\; \leq \;\; (s+1) \cdot (n_a - 1) \  + \ \sum_{x \in Z_2} e_a^x  \  + \ \sum_{x \in Z_3} c_a^x \  + \ \sum_{x \in Z_4} d_a^x \nonumber \\
&  \;\; \leq \;\; s\cdot n_a \ - \  s \ + \ n_a \ - \ 1 \ + \ \sum_{x \in \cX} (e_a^x + c_a^x + d_a^x). \label{eq:A-bound} 
\end{align}
For the last bound, recall that $Z_i$ are pairwise disjoint subsets.
% all constants $e_a^x$, $c_a^x$ and $d_a^x$ are non-negative and, thus, $\sum_{x \in \cX} (e_a^x + c_a^x + d_a^x)$ is greater or equal than the sum of a subset of $\cX$. 
It is important to notice that the inequality~(\ref{eq:A-bound}) does not depend on how we choose $s$.

We are ready to show that, by suitably choosing a fix value for $s$, $A$ will not be equal to $n_a \cdot s$.
It suffices to choose $s > n_a - 1 + \sum_{x \in \cX} (e_a^x + c_a^x + d_a^x)$. This with (\ref{eq:A-bound}) implies that $A < n_a \cdot s$ which is a contradiction.
In other words, $\cA$ is not counting all the $a$'s in $w(s,j)$ for $j$ big enough.
% This proves that $\cA$ does not output the right function.
\end{proof}

From Theorem~\ref{theorem:counterexample} we get that copyless CRA over max-plus semiring is strictly less expressive than WA: we know that copyless CRA are contained in WA~\cite{kickasspaper} and by Figure~\ref{fig:automaton_reverse} the function $f_{\cB}^R$ is definable by a linear CRA and thus by a WA. It is also well-known~\cite{Sakarovitch09,DrosteHWA09} that the class of functions defined by WA is closed under reverse. 

\begin{corollary} 
	\label{corollary:weighted-maxplus}
	The class of functions defined by copyless CRA over the max-plus semiring is strictly contained in the class of functions defined by WA over the max-plus semiring.
\end{corollary}

Moreover, from Theorem~\ref{theorem:counterexample} we immediately get the following corollary.

\begin{corollary}
	\label{corollary:reverse_closure}
	There exists a semiring $\SR$ such that the class of functions recognizable by copyless CRA over $\SR$ is not closed under reverse.
\end{corollary}

%\section{Expressiveness of copyless CRA}
%\label{sec:weighted}
%\input{JCSS2_weighted}

\section{Bounded alternation copyless CRA}
\label{sec:bounded_alternation}
Given that copyless CRA are not closed under reverse operation we look for a robust subclass of copyless CRA. 
The proof of Theorem~\ref{theorem:counterexample} suggests that the alternation between semiring operations is the reason why copyless CRA cannot replicate the behavior of the CRA $\cB$ in backward mode:
$\cB$ can sum the number of $a$- and $b$-symbols to maximize each time that a $\#$-symbol is read, but it cannot do the same alternation of operations when the word is read in the other direction.
This fact inspires the definition of \emph{bounded alternation} copyless CRA, a strict subclass of copyless CRA where the output is restricted to expressions with bounded alternation.
This class was proposed in~\cite{kickasspaper} and characterized in terms of the so-called \emph{Maximal Partition logic}. 
In this section, we show that bounded alternation copyless CRA has also good closure properties; this class is closed under unambiguous non-determinism, regular look-ahead and, moreover, under reverse.

The \emph{alternation} of $e \in \expr(\cX)$ is defined as the maximum number of switches between $\add$ and $\mult$ operations over all branches of the parse-tree of $e$. 
Formally, let $\otimes \in \{\add, \mult\}$ and  $\bar{\otimes}$ be the dual operation of $\otimes$ in $\SR$.
We define the set of expressions $\expr_0^\otimes(\cX)$ with $0$-alternation by $\expr_0^\otimes = \cX \cup \SR$.
For any $N \geq 1$, we define the set of expressions $\expr_N^\otimes(\cX)$ as the $\otimes$-closure of $\expr_{N-1}^{\bar \otimes}(\cX)$, namely, $\expr_N^\otimes(\cX)$ is the minimal set of expressions that contains $\expr_{N-1}^{\bar \otimes}(\cX)$ and satisfies that $e_1 \otimes e_2 \in \expr_N^\otimes(\cX)$ whenever $e_1, e_2 \in \expr_N^\otimes(\cX)$.
We define $\expr_N(\cX) = \expr_N^\add(\cX) \cup \expr_N^\mult(\cX)$.%, namely, the set of all expressions with alternation bounded by~$N$.

We say that a copyless CRA $\cA$ has \emph{bounded alternation} if there exists~$N$ such that $\gsem{\cA}(w) \in \expr_N(\cX)$ for every $w \in \Sigma^*$.
A copyless CRA $\cA$ is called a bounded alternation copyless CRA (BAC) if $\cA$ has bounded alternation.
All the examples of copyless CRA presented in Section~\ref{sec:preliminaries} have bounded alternation. For example, one can easily check that the alternation of the copyless CRA in Example~\ref{ex:max-b-substrings} is bounded by 2.

The alternation of any expression can be easily derived just by counting what is the maximum number of alternation between $\add$ and $\mult$. 
However, it is not directly clear how to check if a copyless CRA has bounded alternation from its definition. 
We show that this semantical property can be verified in \NLogSpace{} in the size of a copyless CRA. 
%We prove that a copyless CRA $\cA$ has unbounded alternation iff there exists a loop that alternates between $\mult$ and $\add$ in an extended transition graph of $\cA$ (see the appendix). 
%Of course, the existence of such loops can be determined by standard reachability tests in \NLogSpace~\cite{papadimitriou1993computational}.
\begin{proposition} \label{prop:verifying-bounded-alternation}
The problem of deciding whether a copyless CRA has bounded alternation can be computed in \NLogSpace. Furthermore, if a copyless CRA has bounded alternation, the alternation is bounded by $|Q|\cdot \max\{\, \operatorname{alt}(\sigma) \, \mid \, \exists. p, q \in Q. \ \delta(q, a) = (p, \sigma) \,\}$ where $\operatorname{alt}(\sigma)$ is the alternation of $\sigma$.
\end{proposition}
\begin{proof}
	\newcommand{\GcA}{\cG_{\cA}}
	\newcommand{\VcA}{V_{\cA}}
	\newcommand{\EcA}{E_{\cA}}
	
	Let $\cA \; = \; (Q, \AL, \cX, \Delta, q_0, \nu_0, \mu)$ be a copyless CRA. 
	We define the graph $\GcA = (\VcA, \EcA)$ in which we look for cycles that produce unbounded alternation in $\cA$.
	The set of vertices $\VcA$ of $\GcA$ are triples in $Q \times \cX \times \{\add, \mult\}$. 
	Each vertex $(q, x, \otimes) \in \VcA$ keeps track of the current state $q$, a register $x$, and the last operation~$\otimes$ seen in the last transition.
	We define $\EcA \subseteq (Q \times \cX \times \{\add, \mult\})^2$ such that $((q_1, x_1, \otimes_1), (q_2, x_2, \otimes_2)) \in \EcA$ if, and only if:
	(1) $\delta(q_1, a) = (q_2, \sigma)$ for some $a \in \Sigma$, (2) $x_1 \in \var(\sigma(x_2))$, and (3) $\otimes_2$ is equal to $\otimes$ whenever $\sigma(x_2) = e_1 \otimes e_2$ for some expressions $e_1$ and $e_2$, or equal to $\otimes_1$ otherwise. 
	Intuitively, $\otimes_2$ keeps track of the last operation seen when $x_1$ passes its value to $x_2$ in the expression $\sigma(x_2)$.
	% We use $\otimes_2$ in $\VcA$ for checking if there exists an alternation in a run of $\cA$. 
	
% 	We want to find loops in $\GcA$ that alternate between operations. For this reason,
	Furthermore, we define $\EcA^* \subseteq \EcA$ that indicates edges that produce an alternation of operations. 
	Precisely, $((q_1, x_1, \otimes_1), (q_2, x_2, \otimes_2)) \in \EcA^*$ if, and only if, $((q_1, x_1, \otimes_1), (q_2, x_2, \otimes_2)) \in \EcA$ and there exists a subexpression $e_1 \otimes e_2$ in $\sigma(x_2)$ with $x_1 \in \var(e_1 \otimes e_2)$ and $\otimes \neq \otimes_1$.
	%and if $p_{\sigma(x_2)}$ is the parse tree of $\sigma(x_2)$ and $w \in \nodes(p_{\sigma(x_2)})$ is the leaf with $p_{\sigma(x_2)}(w) = x_1$, then there exists a node $u \neq w$ in the path from the root to $w$ with $p_{\sigma(x_2)}(u) \neq \otimes_1$. 
	In other words, there exists an alternation with respect to the last operation $\otimes_1$ in the transition from $q_1$ to $q_2$. 
	
	Let $\VcA^*$ be the set of all $(q, x, \otimes) \in \VcA$ such that $x \in \var(\mu(q))$ and let $\GcA' = (\VcA', \EcA')$ be the subgraph of $\GcA$ induced by vertices that can reach $\VcA^*$ in $\GcA$. 
	It is straightforward to prove that if $\GcA'$ has a cycle with an $\EcA^*$-edge then $\cA$ has unbounded alternation.
	Conversely, note that if $\GcA'$ do not have a cycle with an $\EcA^*$-edge, then any path can cross at most $|Q|$ $\EcA^*$-edges, each with at most $\max\{\, \operatorname{alt}(\sigma) \, \mid \, \exists. p, q \in Q. \ \delta(q, a) = (p, \sigma) \,\}$ alternations.
	These properties can be checked in $\NLogSpace$ since $\GcA$ can be generated on the fly in logarithmic space.	
\end{proof}

% The following subsections consider three variants and extensions of the BAC model: unambiguous nondeterminism; regular look-ahead; and reverse.

\subsection{Closure under unambiguous non-determinism}
We first extend the model of bounded alternation copyless CRA  with non-determinism.
The class CRA was designed as a deterministic model in contrast to weighted automata, where non-determinism plays a crucial role. Thus, we restrict non-determinism to be unambiguous, namely, we allow many runs over a word but at most one accepting run, which defines the output.
Formally, a non-deterministic CRA is a tuple $\cA \; = \; (Q, \AL, \cX, \Delta, I_0, V_0, F, \mu)$ where $Q$, $\AL$, $\cX$, and $\mu$ are defined as before, $\Delta \subseteq Q \times \Sigma \times Q \times \subs(\cX)$ is a finite transition relation, $I_0 \subseteq Q$ is a set of initial states, $V_0: I_0 \rightarrow \val(\cX)$ assigns an initial valuation for each initial state, and $F$ is the set of final states. 
Additionally, we assume that for every $q, q' \in Q$ and $a \in \Sigma$ there exists at most one $\sigma \in \subs(\cX)$ such that $(q, a, q', \sigma) \in \Delta$.
Given a string $w = a_1 \ldots a_n \in \AL^*$, a run of $\cA$ over $w$ is a sequence of configurations:
$
(q_0, \nu_0) \:\trans{a_1} \: (q_1, \nu_1) \: \trans{a_2} \: \ldots \: \trans{a_n} \: (q_n, \nu_n) 
$
such that $q_0 \in I_0$, $\nu_0 = V_0(q_0)$, and for $1 \leq i \leq n$,  $(q_{i-1}, a_i, q_i, \sigma_i) \in \Delta$ and $\nu_i(x) = \asem{\nu_{i-1} \comp \sigma_i(x)}$ for each $x \in \cX$.
Furthermore, a run of $\cA$ over $w$ is an accepting run if $q_n \in F$. 
We say that $\cA$ is unambiguous if for every $w \in \Sigma^*$ there exists exactly one accepting run of $\cA$ over $w$. 
The output of $\cA$ over $w$ is defined as $\asem{\cA}(w) = \asem{\nu_n \circ \mu(q_n)}$ where $(q_n, \nu_n)$ is the final configuration of the only accepting run of $\cA$ over $w$.
The definitions of unambiguous copyless CRA and unambiguous BAC are straightforward restrictions of this definition.

\begin{example}\label{ex:unambiguous}
Fix $\Sigma = \{a,b\}$. For every $w \in \Sigma^*$ let $\#_a(w)$ denote the number of $a$'s in $w$ and $\#_b^{last}(w)$ denote the number of $b$'s in the last block (e.g. $\#_b^{last}(ab^2ab^4ab^3aa) = 3$). Consider the function $f : \Sigma^* \to \nat$ defined as $f(w) = \max(\#_a(w), \#_b^{last}(w))$. This function can be easily defined by a nondeterministic unambiguous BAC in Figure~\ref{fig:unambiguous}. The automaton keeps the number of $a$'s in register $x$ and the number of $b$'s in the last block in $y$. To do that the automaton guesses that the word reached the last block of $b$'s (i.e. state $q_2$) and then it continues only if the remaining letters are $a$ (i.e. state $q_3$). For simplicity, assignments keeping the value ($x:=x$) were omitted in the picture. All transitions are labeled with the alphabet letter and the register assignments. If both assignments keep the value of the register this is denoted by ``$-$''. 
\end{example}

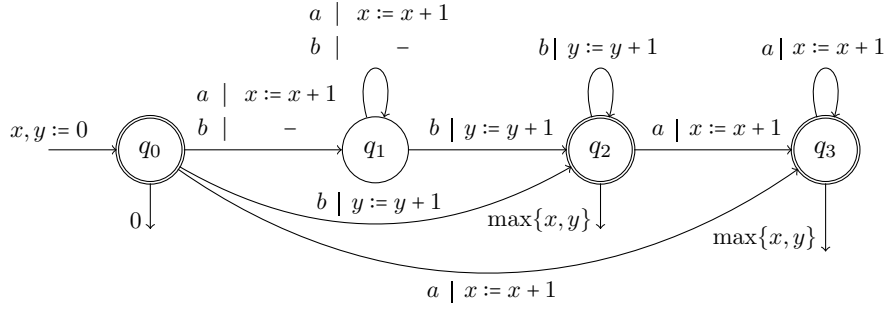
\begin{figure}
\centering
% \scalebox{.9}{
 \begin{tikzpicture}
\node[state, draw=white] 		(p0) at (0,0) {};
\node[state, right of=p0, node distance=1.8cm] 		(p) {$q_1$};
\node[state,  below of=p, node distance=1.5cm, draw=white] (pf) {};

\node[state,double, draw=white, right = 2.1cm of p0] 		(1p0){};
\node[state,double, right of=1p0, node distance=1.8cm] 		(1p) {$q_2$};
\node[state,double,  below of=1p, node distance=1.5cm, draw=white] (1pf) {};

\node[state,double, draw=white, right = 2.1cm of 1p0] 		(2p0){};
\node[state,double, right of=2p0, node distance=1.8cm] 		(2p) {$q_3$};
\node[state,double,  below of=2p, node distance=1.8cm, draw=white] (2pf) {};

\draw (1p) edge[pos=0.8,->] node[left,scale=0.9] {$\; \max\{x,y\}$} (1pf);
\draw (2p) edge[pos=0.8,->] node[left,scale=0.9] {$\; \max\{x,y\}$} (2pf);

\path (p)   edge	[loop above,looseness=10,->]	node[scale=0.9] {
	\renewcommand{\arraystretch}{0.7}
	$ 
	\begin{array}{ccc}
	a & \!\!\! \mid \!\!\! & x := x+1 \vspace{1.5mm} \\
	b & \!\!\! \mid \!\!\! & -
	\end{array}$
}  (p)
(p)   edge	[->]	node[above,scale=0.9] {
	\renewcommand{\arraystretch}{0.7}
	$
	\begin{array}{c|c}
	b & y := y+1
	\end{array}$}  (1p)

	(1p)   edge	[loop above,looseness=10,->]	node[scale=0.9] {
	\renewcommand{\arraystretch}{0.7}
	$  b \;  \begin{array}{|c}
	 y := y+1
	\end{array}$
}  (1p)

(1p)   edge	[->]	node[above,scale=0.9] {
	\renewcommand{\arraystretch}{0.7}
	$
	\begin{array}{c|c}
	a & x := x+1
	\end{array}$}  (2p)
	(2p)   edge	[loop above,looseness=10,->]	node[scale=0.9] {
	\renewcommand{\arraystretch}{0.7}
	$  a \;   \begin{array}{|c}
	 x := x+1
	\end{array}$
}  (2p);

\node[state,double, draw=white, left = 2.1cm of p0] 		(3p0){};
\node[state,double, right of=3p0, node distance=1.8cm] 		(3p) {$q_0$};
\node[state,double,  below of=3p, node distance=1.5cm, draw=white] (3pf) {};

\draw (3p0) edge[pos=0.01,->] node[above,scale=0.9] {$x,y:=0$} (3p);
\draw (3p) edge[pos=0.8,->] node[left,scale=0.9] {$\; 0$} (3pf);

\path
(3p)   edge	[->]	node[above,scale=0.9] {
	\renewcommand{\arraystretch}{0.7}
	$
	\begin{array}{ccc}
	a & \!\!\! \mid \!\!\! & x := x+1 \vspace{1.5mm} \\
	b & \!\!\! \mid \!\!\! & -
	\end{array}$}  (p)
	
(3p)   edge[bend right,->]	node[above,scale=0.9] {
	\renewcommand{\arraystretch}{0.7}
	$
	\begin{array}{c|c}
	b & y := y+1
	\end{array}$}  (1p)
(3p)
edge [->,bend right=35] node[below,scale=0.9] {
\renewcommand{\arraystretch}{0.7}
	$ 
	\begin{array}{c|c}
	a &  x := x+1
	\end{array}$
}
(2p);

\end{tikzpicture}
%  }
\caption{Unambiguous BAC recognizing $f$ from Example~\ref{ex:unambiguous}.}
\label{fig:unambiguous}
\end{figure}

We do not know whether for each unambiguous copyless CRA there is an equivalent deterministic copyless CRA. %(see Section~\ref{sec:conclusions}). 
However, this is true when we assume bounded alternation. In particular, one can define $f$ from Example~\ref{ex:unambiguous} without unambiguous nondeterminism.
% We show that unambiguous non-determinism does not extend the computational power of bounded alternation copyless CRA.

\begin{theorem} \label{theo:unambiguous-theorem}
	Let $\cA = (Q, \AL, \cX, \delta, I_0, V_0, F, \mu)$ be an unambiguous BAC whose alternation is bounded by $N$.
	There exists a deterministic BAC $\cA'$ that computes the same function as $\cA$, that is, $\asem{\cA}(w) = \asem{\cA'}(w)$ for every $w \in \Sigma^*$.
	Furthermore, the number of states of $\cA'$ is of size $2^{\Oo(|Q|^3 \cdot |\cX|^5 \cdot N^2)}$ and the number of registers in $\cA'$ is of size $\Oo(|Q| \cdot |\cX|^2 \cdot N)$.
\end{theorem}

\begin{proof}[Proof of Theorem~\ref{theo:unambiguous-theorem}]
To start we need to introduce some notation for trees, expressions, and substitutions that will be the main objects during the proof. 

\noindent \textbf{Trees.} Let $\Sigma$ be a set of labels. An (unordered) labeled $\Sigma$-tree $t$ is a finite function $t: \nodes(t) \rightarrow \Sigma$ such that $\nodes(t)$ is a finite prefix-closed subset of $\nat^*$ (i.e. $w \in \nodes(t)$ whenever $w\cdot i \in \nodes(t)$ for some $i\in \nat$). 
We say that $\epsilon$ is the root of $t$ and $w \cdot i \in \nodes(t)$ is a child of $w$.
% Further, we write $\tlabels(t)$ to denote the set of labels $t$.
For any $w \in \nodes(t)$, we denote by $t[w]$ the subtree rooted at $w$, i.e. $t[w](i) = t(w \cdot i)$ for every $i \in \nodes(t[w])$.
For every $a \in \Sigma$ we write $a\{t_1, \ldots, t_k\}$ to denote a tree whose root is labeled by $a$ and $t_1, \ldots, t_k$ are the subtrees hanging from the root.
We say that $w \in \nodes(t)$ is an internal node of $t$ if $w \cdot i \in \nodes(t)$ for some $i \in \nat$.
Otherwise, $w$ is called a leaf of $t$ and the set of all leaves of $t$ is denoted by $\tleaves(t)$.
We say that a tree is \emph{complete} if every internal nodes has at least two children. 
One can easily check that if $t$ is a complete tree, then $|\nodes(t)| \leq 2 \cdot |\tleaves(t)|$. 
%The depth of $t$ is defined by $\tdepth(t) = \max\{|w| \mid w \in \nodes(t)\}$.
Finally, we denote by $\trees(\Sigma)$  the set of all $\Sigma$-trees.

\noindent \textbf{Expressions and substitutions.} From now on, we see expressions as unordered labeled trees by exploiting the commutativity and associativity of the semiring. 
Formally, for $\otimes \in \{\add, \mult\}$ we define $\expr^{\otimes}(\cX)$ as the minimal set such that $S \cup \cX \subseteq \expr^{\otimes}(\cX)$ and $\otimes\{e_1, \ldots, e_k\} \in \expr^{\otimes}(\cX)$ for every  $e_1, \ldots, e_k \in \expr^{\bar{\otimes}}(\cX)$ with $k \geq 1$ (recall that $\bar{\otimes}$ is the dual operation of $\otimes$ in $\SR$).
Indeed, any expression can be represented by a unique unordered tree in $\expr^{\add}(\cX)$ or $\expr^{\add}(\cX)$. 
For example, the expression $((x \mult (y \mult 2)) \add 3) \add (z \mult 4)$ can be represented by:
\begin{equation*} \label{ex:tree-formula}
\add\big\{ \mult\{x,y,2\},\, 3,\, \mult\{z,4\}\big\}
\end{equation*}
Intuitively, the unordered tree encodes the expression by removing the parenthesis and the order of multiplication and addition. 
For the sake of simplification, in the sequel we represent every expression with its canonical representation in $\expr^{\add}(\cX) \cup \expr^{\mult}(\cX)$ and we associate $\expr(\cX)$ with $\expr^{\add}(\cX) \cup \expr^{\mult}(\cX)$.

For two disjoint set of variables $\cX_1$ and $\cX_2$, we define $\subs(\cX_1, \cX_2)$ to be the set of all copyless substitutions $\sigma: \cX_1 \rightarrow \expr(\cX_1 \cup \cX_2)$, that is, copyless substitutions where the domain contains only variables in $\cX_1$.
Here, composition between substitutions in $\subs(\cX_1, \cX_2)$ is defined in a straightforward way where $\cX_2$-variables are treated as constants.
Notice that the composition of two substitutions is copyless for the registers $\cX_1$ but not necessarily for the registers $\cX_2$. This is because the registers $\cX_2$ are not in the domain of these substitutions.

\noindent \textbf{Substitution trees.} We denote by $\trees(\cX_1, \cX_2)$ the set of all trees labeled by substitutions in $\subs(\cX_1, \cX_2)$.
Furthermore, we say that $t \in \trees(\cX_1, \cX_2)$ is \emph{copyless} if $t(u)$ is copyless for every $u \in \nodes(t)$ and:
\begin{eqnarray} \label{eq:substitution-trees}
\var(t(u)) \cap \var(t(v)) & \subseteq & \cX_1
\end{eqnarray}
for every $u, v \in \nodes(t)$.
In other words, each $\cX_2$-variable is used at most once in a substitution tree $t$ (note that there is no restriction in $\cX_1$). With the condition (\ref{eq:substitution-trees}), the composition of substitutions between different nodes is also copyless for the set $\cX_2$.
We say that $t \in \trees(\cX_1, \cX_2)$ is \emph{constant-free} if, for every $u \in \nodes(t)$ the substitution $t(u)$ does not use elements from $\cS$.
Finally, for any node $u \in \nodes(t)$ we define the \emph{collapse operation}  $t^\collapse(u)$ such that:
\[
t^\collapse(u) \; = \;  t(\epsilon) \circ t(u[\cdot, 1]) \circ \ldots \circ  t(u[\cdot, k])
\]
where $k = |u|$ and $u[\cdot, i]$ is the prefix of $u$ until position $i$. In other words, $t^\collapse(u)$ is the composition of all substitution along the branch from the root until $u$. Note that, by condition (\ref{eq:substitution-trees}), this composition always produces a copyless substitution.

Before going into the details of the proof we recall the following lemma which is a well-known property of unambiguous finite automata and, in particular, of unambiguous CRA. 
\begin{lemma}\cite{Sakarovitch09,weber1991degree} \label{lemma-unambiguous}
	For every different runs $\rho$ and $\rho'$ of an unambiguous finite automaton $\cA$ over $w \in \Sigma^*$, the last states of $\rho$ and $\rho'$ are different, that is, $\rho(|w|) \neq \rho'(|w|)$.
\end{lemma}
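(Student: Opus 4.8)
The plan is to argue by contradiction, using the \emph{trim assumption} from the preliminaries, which guarantees that every state is co-accessible. So suppose $\rho$ and $\rho'$ are two \emph{distinct} runs of $\cA$ over $w$ (if they coincide the statement is vacuous), and assume for contradiction that they end in the same state, i.e. that $q := \rho(|w|) = \rho'(|w|)$.

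First I would invoke co-accessibility of $q$: since $\cA$ is trim, $q$ can reach some final state, so there exist a word $u \in \Sigma^*$ and a run $\pi$ from $q$ to some $q_f \in F$ labeled by $u$ (possibly $u = \epsilon$, in the case where $q$ is already final). Appending $\pi$ after each of $\rho$ and $\rho'$ then produces two runs $\rho \cdot \pi$ and $\rho' \cdot \pi$ of $\cA$ over the word $w u$, both of which end in $q_f \in F$ and are therefore accepting.

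The key observation is that these two extensions remain distinct: $\rho$ and $\rho'$ already differ at some position $i \leq |w|$, and appending the common suffix $\pi$ does not alter any position $\leq |w|$, so $\rho \cdot \pi \neq \rho' \cdot \pi$. Hence $\Run_\cA(w u)$ contains two different accepting runs, i.e. $|\Run_\cA(w u)| \geq 2$, contradicting the assumption that $\cA$ is unambiguous. This contradiction yields $\rho(|w|) \neq \rho'(|w|)$, as claimed.

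I do not expect a genuine obstacle here; the whole argument is the standard fact that in an unambiguous (and trim) automaton two distinct partial runs over the same prefix cannot merge, since a merge followed by any common continuation to a final state would manufacture two accepting runs on a single word. The only point requiring care is to make explicit that co-accessibility, i.e. precisely the trim assumption stated in Section~\ref{sec:preliminaries}, is what guarantees that the continuation $u$ exists: without trimness the shared endpoint $q$ could be a dead end from which no final state is reachable, and the statement would genuinely fail.
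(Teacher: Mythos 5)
Your proof is correct. The paper does not actually prove this lemma---it is stated with citations to the literature---so there is nothing to compare against, but your argument is exactly the standard one: if two distinct runs over $w$ merged at a common state $q$, co-accessibility of $q$ (guaranteed by the trim assumption of Section~\ref{sec:preliminaries}) would let you extend both by a common suffix to a final state, producing two distinct accepting runs over some word $wu$ and contradicting $|\Run_\cA(wu)| \leq 1$. You are also right to flag that trimness is genuinely needed here and that the lemma fails without it; that is the only delicate point, and you handled it explicitly.
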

%The previous lemma implies that the number of active runs of an unambiguous automaton until any position of the input is always bounded by the number of states. 
%This lemma will be crucial during this proof.

Let $\cA = (Q, \AL, \cX, \delta, I_0, V_0, F, \mu)$ be an unambiguous BAC whose alternation is bounded by $N$.
To construct a deterministic BAC $\cA'$ from $\cA$, the idea is to execute all runs of $\cA$ over a word in parallel, simulating some sort of subset construction~\cite{HopcroftU79}.
The problem here is that we cannot make arbitrary copies of registers (recall that $\cA'$ must be copyless) and then we cannot simulate directly with a deterministic automaton the arbitrary branching of runs, namely, non-deterministic transitions. 
To solve this, we keep in states a tree of runs that encodes how runs are branching when the word is read. 
Of course, if we keep the branching of all runs in memory, the tree would be unbounded. 
A characteristic property of unambiguous CRA (Lemma~\ref{lemma-unambiguous}) is that, for each prefix, there is at most $|Q|$ partial runs and, moreover, all leading states are different. 
These two facts suggest a tree structure of the runs where each branch is a run and where the number of branches is bounded by~$|Q|$. 
Although we can keep in finite memory the branching structure of the partial runs of $\cA$, we cannot do the same trick for registers and naively keep copies of registers for each run (recall again that $\cA'$ must be copyless).
To overcome this problem, $\cA'$ will postpone the evaluation of registers by keeping substitutions inside the internal tree structure of runs (i.e. substitutions trees).
Clearly, $\cA'$ cannot postpone these substitutions forever and store a long sequence of these objects with finite memory. 
The key idea here is to prune and reduce the tree structure by doing partial evaluation of the substitutions whenever is possible. 
We show that by exploiting the bounded alternation of the output and the copyless restriction of $\cA$, we need only a finite amount of memory to remember the tree structure and the substitutions of all runs.  
Finally, for the sake of simplification we present the main construction for the restricted case when $\cA$ has only one initial state, namely, $\cA = (Q, \AL, \cX, \delta, q_0, \nu_0, F, \mu)$ where $q_0$ is the only initial state and the initialization valuation is just $\nu_0: \cX \rightarrow \SR$.
%The determinization of $\cA$ (see below) can easily generalized to more initial states, but the notation will be more complicated since one has to keep a set of tree structure instead of just one tree. 

Let $\cX$ be the set of registers in $\cA$ and $\hX = \{\hat{x} \mid x \in \cX\}$ a disjoint copy of $\cX$.
We construct a deterministic BAC $\cA' = (Q', \AL, \cY, \delta', q_0', \nu_0', \mu')$ as follows.
\begin{itemize} \itemsep2mm
	
	\item $Q'$ is the set of all pairs $(t, B)$ where $t \in \trees(\cX, \cY)$ is a complete, copyless, constant-free substitution tree and $B: \tleaves(t) \rightarrow Q$ is an injective function.
	%(towards the end of the proof, we will restrict $Q'$ to a finite subset of reachable states).
	%Additionally we assume that $|range(B) \cap F| = 1$.
	
	\item $\cY$ is a set of registers of size $|\cY| = 2 \cdot |Q| \cdot |\cX|\cdot(|\cX| \cdot N +1)$ satisfying $\hX \subseteq \cY$ and $\cX \cap \cY = \emptyset$. %(recall that $\hX$ is a copy of $\cX$).
	
	\item $q_0' = (t_0, B_0)$ is the initial state of $\cA'$ where $t_0$ is a single-node tree labeled with $\sigma_0 \in \subs(\hX, \cY)$ and $B_0(\epsilon) = q_0$ such that $\sigma_0(x) = \hat{x}$ for every $x \in \cX$. %(i.e. $\hat{x}$ is the copy of $x$ in $\hX$).
	
	\item $\nu_0': \cY \rightarrow \SR$ is the initial substitution such that $\nu_0'(\hat{x}) = \nu_0(x)$ for all $x \in \cX$ and $\nu_0'(y) = \zero$ for all $y \in \cY \ \backslash \ \hX$.
	
	\item $\mu'$ is the final substitution such that $\mu'((t,B)) = t^\collapse(u) \circ \mu(q)$ for every $(t,B) \in Q'$ whenever $u \in \tleaves(t)$ is the only leaf satisfying $B(u) \in F$ and $q = B(u)$. 
	
	\item $\delta'$ is the transition function defined below.
\end{itemize}

	We start explaining the relation between the set of registers $\hX$, $\cX$ and~$\cY$. 
	% Recall first that the set $\hX$ is a disjoint copy of $\cX$.
	The $\cX$-variables in the construction are used inside the internal structure of $\cA'$ (i.e. states). 
	In fact, $\cX$-variables will never be used as real variables by $\cA'$.
	They will just be used to keep track of temporary substitutions.
	Regarding $\cY$, the decision of taking the size of $\cY$ equal to $2 \cdot |Q| \cdot |\cX|\cdot(|\cX| \cdot N +1)$ is technical and will be clear later in the proof.
	Regarding $\hX$, the property $\hX \subseteq \cY$ is needed for the definition of the initial substitution. Besides that we will not use the fact that $\hX \subseteq \cY$.
	
	Recall that each state in $Q'$ is composed by a complete, copyless, and constants free substitution tree $t$ and an injective function $B: \tleaves(t) \rightarrow Q$. 
	As was suggested before, $t$ keeps track of the branching history of the partial runs of $\cA$ and $B$ labels leaves of $t$ with states in $Q$.
	The plan here is that each partial run of $\cA$ is represented by a different branch of $t$ and the states assigned by $B$ represent the current state of each~run. 
	
	Next, we show that $Q'$ is a finite set. 
	Indeed, we can bound the size of any state $(t, B) \in Q'$ in terms of $|Q|$, $|\cX|$, and $|\cY|$.
	First, the number of nodes in $t$ is bounded by $2\cdot |Q|$, that is, $|\nodes(t)| \leq 2\cdot|\tleaves(t)| \leq 2 \cdot |Q|$ since $t$ is a complete tree and $B$ is an injective function over $\tleaves(t)$.
	Second, each copyless and constants-free expression that one can write with $|\cX \cup \cY|$ variables is of size at most $2\cdot (|\cX| + |\cY|)$ (i.e. the expression has no constant, and each variable can be used at most once). Third, the size of any copyless and constants-free substitution $\sigma \in \subs(\cX, \cY)$ is bounded by $|\sigma| = \Oo(2 \cdot |\cX| \cdot (|\cX| + |\cY|))$ and thus the number of possible labels for $t$ is finite, which implies that $|t|$ is of bounded size.
	Fourth, $B$ is an injective function from $\tleaves(t)$ into~$Q$ and, given that $t$ is of bounded size, then $B$ is also bounded.
	Finally, given that the size of $t$ and $B$ is bounded by $|Q|$, $|\cX|$, and $|\cY|$, then we can conclude that $Q'$ is a finite set. 
	Furthermore, by composing and counting the previous arguments, one can show that $|Q'| = 2^{\Oo(|Q|\cdot |\cX| \cdot |Y|^2)} = 2^{\Oo(|Q|^3 \cdot |\cX|^5 \cdot N^2)}$.
	
%	\cristian{Filip's version}
%	
%	Observe that $Q'$ is a finite set. 
%	Indeed, we can bound the size of any state $(t, B) \in Q'$ in terms of the values $|Q|$, $|\cX|$, and $N$.
%	First, the number of nodes in $t$ is bounded by $2\cdot |Q|$, that is, $|\nodes(t)| \leq 2\cdot|\tleaves(t)| \leq 2 \cdot |Q|$ since $t$ is a complete tree and $B$ is an injective function over $\tleaves(t)$.
%	Similarly, each copyless and constants-free expression that one can write with $|\cX \cup \cY|$ variables is of size at most $2\cdot (|\cX| + |\cY|)$. Second, recall that expressions in $\expr^{\add}(\cX) \cup \expr^{\mult}(\cX)$ are complete trees, which means that the size of any copyless and constants-free substitutions $\sigma \in \subs(\cX, \cY)$ satisfies $|\sigma| \leq 2 \cdot |\cX| \cdot (|\cX| + |\cY|)$ and thus the number of possible labels for $t$ is $|\Sigma|^{2 \cdot |\cX| \cdot (|\cX| + |\cY|)}$.
%	Finally, $B$ is a function from $\tleaves(t)$ into~$Q$. 
%	Given that $B$ is injective, the number of all such functions is bounded by $\Oo(|Q|^{|Q|})$.
%	A straightforward computation gives the bound $|Q'| \leq |\Sigma|^{\Oo(|Q|^2\cdot |\cX| \cdot |Y|^2)} = |\Sigma|^{\Oo(|Q|^4 \cdot |\cX|^5 \cdot N^2)}$\filip{I made some changes with the constants etc. Check if this is ok, and then change the constants in the formulation of the lemma}.
	
	%Recall that $Q_r' \subseteq Q'$ is the set of reachable states in $\cA'$. 
	For a given word $w$, consider the set of partial runs in $\cA$. By Lemma~\ref{lemma-unambiguous}  every partial run on $w$ is determined by its last state. Then, we define $Q(w) \subseteq Q$ as the set of states in the partial runs of $\cA$ over $w$. To understand the main purpose of $(t, B)$, we state the following Lemma. Its proof is postponed to the end of this section.
	
	\begin{lemma} \label{lemma: branch-collapsing}
		Let $\rho$ be a run of $\cA$ over $w$ and $(q, \nu)$ be the last configuration of $\rho$.
		Then there exists a run of $\cA'$ over $w$ reaching configuration $((t, B), \xi)$, where $\xi$ is a valuation over $\cY$.
		Moreover $Q(w) = range(B)$ and for every $q \in Q(w)$
		there exists $u \in \tleaves(t)$ such that $B(u) = q$ and
		$
		\nu \; = \;  \xi \circ t^\collapse(u).
		$
	\end{lemma}
	% Consider one of the runs of $\cA$ over a word $w\in \Sigma^*$ that reaches the configuration $(q, \nu)$.
	% Then there exists a run of $\cA'$ over $w$ that reaches a configuration $((t, B), \xi)$. Moreover there exists $u \in \tleaves(t)$ such that $B(u) = q$ and
	% $
	% \nu \; = \;  \xi \circ t^\collapse(u)
	% $.
	In other words, from a state $(t, B)$ we can recover the configuration $(q, \nu)$ from a leaf $u$, by applying $B$ over $u$ and composing the substitutions from the root to $u$.
	
	A key notion for defining the transition function $\delta'$ is the notion of an $\cX$-reduction.
	An $\cX$-reduction of an expression $e \in \expr(\cX \cup \cY)$ is a tuple $(r, \sigma)$ where $r \in \expr(\cX \cup \cY)$ is a copyless expression without constants and $\sigma: \cY \rightharpoonup \expr(\cY)$ is a partial substitution (i.e. partial function) such that $e = \hat{\sigma}(r)$. Notice that $\dom(\sigma) \cap \cX = \emptyset$ since we assumed $\cX \cap \cY = \emptyset$. 
	The goal of an $\cX$-reduction is to factorize constants into $\cY$-variables.
	Of course, an $\cX$-reduction could increment the number of $\cY$-variables by trying to remove constants.
	The trick is to define for every expression an $\cX$-reduction $(r, \sigma)$ such that the size of $\dom(\sigma)$ depends only on $N$ and $|\cX|$.
	
	We define an $\cX$-reduction by induction over expressions by using a function $\xred(\cdot)$. 
	For this definition we assume that $\cY$ is possibly infinite. Later, in Lemma \ref{lemma:reduction-variables}, we show that we only need a finite number of variables to define $\xred(\cdot)$. % proving that the follow definition is correct.
	For the base case, if $e = x \in \cX$, then we define $\xred(e) = (x, \sigma_\emptyset)$ where $\sigma_\emptyset$ is the empty function. Otherwise, if $e = s \in \cS$ or $e = y \in \cY$  then we choose a fresh variable $y' \in \cY$ and define $\xred(e) = (y', \sigma)$, where $\sigma(y') = e$ and $\dom(\sigma) = \{y'\}$.
	For the inductive step, suppose that $e$ is of the form:
	\begin{equation} \label{eq:xred-form}
	e \; = \;  \otimes\{e_1, \ldots, e_n, f_1, \ldots, f_m\}
	\end{equation}
	where each $e_i \in \expr(\cX \cup \cY)$ contains at least one variable in $\cX$ and each $f_i \in \expr(\cY)$ contains no variables in $\cX$.  
	Furthermore, suppose that $\xred(e_1) = (r_1, \sigma_1), \ldots, \xred(e_n) = (r_n, \sigma_n)$ are already defined and $\dom(\sigma_i) \cap \dom(\sigma_j) = \emptyset$ (without lost of generality, we can relabel the variables). 
	Then we define $\xred(e) = (r, \sigma)$ recursively as follows:
	\begin{eqnarray}
	r & = & \otimes\{r_1, \ldots, r_n, y\} \label{eq:def-x0-reduction} \\
	\sigma & = & \big(\sigma_1 \cup \ldots \cup \sigma_n\big)[y \rightarrow \otimes\{f_1, \ldots, f_m\}] \nonumber
	\end{eqnarray}
	where $y \in \cY$ is a fresh variable not used in $\sigma_1, \ldots, \sigma_n$ and $\big(\sigma_1 \cup \ldots \cup \sigma_n\big)$.
% 	corresponds to the disjoint union of the substitutions.
	If $m=0$, then we do not add the additional variable~$y$.
	
	In the recursive definition of $\xred(e)$, the subexpression $f_1, \ldots, f_m$ that do not use $\cX$ registers are replaced by a new fresh variable $y$ and its content $\otimes\{f_1, \ldots, f_m\}$ is assigned into $y$.
	It is clear from the definition that $(r, \sigma)$ is an $\cX$-reduction for $e$.
	Since $\sigma_1, \ldots, \sigma_n$ have disjoint domains and $y$ was chosen as a fresh variable then $r$ is copyless and the domain of $\sigma$ is equal to $\bigcup_i \dom(\sigma_i) \cup \{y\} = \var(r) \cap \cY$.
	For example, recall the expression $e = \add\{ \mult\{x,y,2\}, 3, \mult\{z,4\}\}$ in (\ref{ex:tree-formula}). Suppose that $x \in \cX$ and $y,z \in \cY$. Then the $\cX$-reduction $\xred(e) = (r, \sigma)$ is equal to:
	\[
	\begin{array}[t]{ll}
		r \; := \; \add\big\{ \mult\{x,u\},\, v \, \big\} \;\;\;\; & 
		\begin{array}[t]{rrcl}
			\sigma: \;\;  & u & := & y \mult 2 \\
			& v & := & 3 \add (z \mult 4)
		\end{array}
	\end{array}
	\]	
	where $u$ and $v$ are fresh variables in $\cY$. One can easily check that $e = \hat{\sigma}(r)$ and, thus, the expression defined by $e$ is preserved in $(r, \sigma)$. In the next lemma, we show that we need at most $|\cX| \cdot N + 1$ fresh variables from $\cY$ for the above definition of $\xred(\cdot)$.
%	We say that a partial substitution $\sigma: \cY \rightharpoonup \expr(\cY)$ is copyless if it is copyless with respect to the variables in its domain. 
%	In the next lemma, we prove that the size of $r$ depends on the size of $\cX$ and $N$.
	
	\begin{lemma} \label{lemma:reduction-variables}
		Let $e \in \expr(\cX \cup \cY)$ be  copyless and with alternation bounded by $N$. If $\xred(e) = (r, \sigma)$, then $r$ is copyless with alternation bounded by $N$, $\sigma$ is copyless with respect to $\dom(\sigma)$, and $|\var(r) \cap \cY| \leq |\cX| \cdot N + 1$.
	\end{lemma}
	
	\begin{proof}
		Suppose that $\xred(e) = (r, \sigma)$.
		By definition it is straightforward to check that $r$ is copyless and has alternation bounded by $N$.
		In fact, each time that a subexpression is replaced, we use a new fresh variable and since $e$ is copyless this proves that $r$ and $\sigma$ are also copyless.
		% 	Note that the new fresh variable is assigned to a copyless subexpression of $\pt$.
		The most interesting part is to bound the number of $\cY$-variables in $r$. For this, we prove a slightly stronger bound: $|\var(r) \cap \cY| \leq |\var(r) \cap \cX| \cdot N +1$. Since $|\var(r) \cap \cX| \leq |\cX|$ this will prove the lemma.
		
		The proof goes by induction over the alternation of $e$. 
		For the base case we have $|\var(r)| \leq 1$ thus the bound is trivially true.
		For the inductive step suppose that the statement holds for expression with alternation at most $N$ and consider an expression $e$ like (\ref{eq:xred-form}) with alternation $N+1$.
		By definition of $\expr^{\add}(\cX)$ and $\expr^{\add}(\cX)$ each $e_i$ has alternation at most $N$ and the inductive hypothesis applies: $|\var(r_i) \cap \cY| \leq |\var(r_i) \cap \cX| \cdot N +1$ where $(r_i, \sigma_i) = \xred(e_i)$.
		Given that $\var(r_i) \subseteq \var(r)$
		\[
		\sum_{i = 1}^n |\var(r_i) \cap \cY| \;\;\; \leq \;\;\; \sum_{i=1}^n (|\var(r_i) \cap \cX| \cdot N +1) \;\;\; \leq \;\;\; |\var(r) \cap \cX| \cdot N + n.
		\]
		Also, the set $\var(r) \cap \cY$ is partitioned by the sets $\var(r_i) \cap \cY$ and the fresh variable $y$. Therefore, we derive the following bound:
		$$
		|\var(r) \cap \cY| \;\;\; \leq \;\;\; \sum_{i = 1}^n |\var(r_i) \cap \cY| + 1 \;\;\; \leq \;\;\; |\var(r) \cap \cX| \cdot N + n + 1
		$$ 
		Finally, given that $r$ is copyless, we cannot have more subexpressions $e_i$ than variables in $\cX$ and, thus, $n \leq |\var(r) \cap \cX|$ which leads to the desire conclusion:
		$|\var(r) \cap \cX| \cdot N + n + 1 \; \leq \; |\var(r) \cap \cX| \cdot (N+1) +1$.
	\end{proof}

	We can naturally extend the function $\xred(\cdot)$ from $\expr(\cX \cup \cY)$ to $\subs(\cX, \cY)$.
	More precisely, for any $\alpha \in \subs(\cX, \cY)$ define $\xred(\alpha) = (\beta, \sigma_\beta)$  such that $\xred(\alpha(x)) = (\beta(x), \sigma_{x})$ for every $x \in \cX$ and $\sigma_\beta = \bigcup_{x \in \cX} \sigma_x$ (similar to $\cX$-reduction of expressions we can assume that $\dom(\sigma_x) \cap \dom(\sigma_y) = \emptyset$ by relabeling $\sigma_x$ and $\sigma_y$ if necessary). 
	% We define $\sigma_\alpha \circ \alpha' \in \subs(\hX, \cY)$ in the natural way, i.e., $\sigma_\alpha \circ \alpha' = \alpha$.
	%Here, we are assuming that fresh $\cY$-variables are used on demand: each time that an $\hX$-reduction is applied recursively over $\alpha$, we pick a new fresh variable $x \in \cY$ that has not been used in $\sigma_{\alpha}$ (recall that in Equation (\ref{eq:def-x0-reduction}) we choose a new fresh variable each time that a group of constants and $\cY$-variables is replaced).
	Further, we extend $\xred(\cdot)$ from substitutions to substitution trees such that $\xred(t) = (t', \sigma_{t'})$ where $\nodes(t) = \nodes(t')$, $\xred(t(u)) = (t'(u), \sigma_u)$ for each $u \in \nodes(t)$, and $\sigma_{t'}$ is the disjoint union of all substitutions $\sigma_u$ (again, we assume that domains of $\sigma_u$ are disjoint). 
	The following lemma, similar to Lemma~\ref{lemma:reduction-variables} but for substitution trees, will be useful later in the correctness proof of $\delta'$. We omit the proof since it is straightforward from Lemma~\ref{lemma:reduction-variables} and the copyless restriction.
	
	\begin{lemma} \label{lemma:reduction-correctness}
		For any copyless substitution tree $t \in \trees(\cX, \cY)$, if $\xred(t) = (t', \sigma_{t'})$, then $t'$ is a copyless and constants-free substitution tree and $\sigma_{t'}$ is a copyless partial substitution.
		Furthermore, the number of $\cY$-variables used in $t'$ is bounded by $|\nodes(t)| \cdot |\cX|\cdot(|\cX| \cdot N +1)$.
	\end{lemma}
	Recall that states of $\cA'$ are of the form $(t, B)$ where $t$ is a complete, copyless, and constants-free substitution tree and $B: \tleaves(t) \rightarrow Q$ is an injective function.
	To define the transition $\delta'(t, B) = ((t', B'), \sigma)$ we show how to convert $t$ into $t'$ and how to update $B$ into $B'$ through the composition of four different processes: $\textend$, $\tprune$, $\tshrink$, and $\treduce$.
	In the sequel, we explain each procedure in detail.
	
	\noindent {\bf Extend.} The first step is to extend branches in $(t, B)$ to the next states when reading $a \in \Sigma^*$. We define this process formally by the function $\textend((t, B), a) = (t_1, B_1)$ that receives a state $(t, B) \in Q'$ and a letter $a \in \Sigma^*$ and outputs the pair $(t_1, B_1)$, where $t_1$ is a substitution tree and $B_1: \tleaves(t_1) \rightharpoonup Q$ is a partial injective function. 
	The substitution tree $t_1$ is defined as an extension of $t$ (i.e., $\nodes(t) \subseteq \nodes(t_1)$) such that $t_1(u) = t(u)$ whenever $u \in \nodes(t)$ and for every $v \in \tleaves(t)$, if there exists a transition $(B(v), a, q, \sigma) \in \delta$, then there exists $i \in \nat$ such that  $t_1(v \cdot i) = \sigma$. The function $B_1$ is defined only on the new leaves such that $B_1(v \cdot i) = q$.	
	Intuitively, $\textend((t, B), a) = (t_1, B_1)$ extends $t$ whenever the state on a leaf of $t$ can evolve to a new state by reading~$a$.
	Notice that $\delta$ is not deterministic and a leaf $v \in \tleaves(t)$ could be extended with more than one nodes. Since trees are unordered, $\textend$ is a deterministic procedure.
	
	\noindent {\bf Prune.} The problem with $(t_1, B_1)$ is that there could exist leaves in $t_1$ that are not marked by the function $B_1$ and therefore $(t_1, B_1) \notin Q'$. This happens when for a leaf $v$ there is no transition $(B(v), a, q, \sigma)$ and this branch of the tree becomes a ``dead run''.
	The purpose of the function $\tprune(t_1, B_1) = (t_2, B_2)$ is to prune branches that are dead and to update $B_1$ into a total function~$B_2$.
	Formally, $t_2$ is a subset of $t_1$ (i.e., $\nodes(t_2) \subseteq \nodes(t_1)$ and $t_2(u) = t_1(u)$ for every $u \in \nodes(t_2)$) such that $u \in \nodes(t_2)$ iff $v \cdot u \in \dom(B_1)$ for some $u \in \nat^*$. 
	In other words, we keep only nodes that are ancestors of leaves that are marked by $B_1$.
	Finally, we define $B_2 = B_1$. Note that $\dom(B_2) = \tleaves(t_2)$ since we did not remove any node from the domain of $B_1$. Moreover, paths from the root to leaves were not modified and, in particular, it holds that $t_2^\collapse(u) = t_1^\collapse(u)$ for every $u \in \tleaves(t_2)$.
	
	\noindent {\bf Shrink.} By adding and removing branches with the procedures $\textend$ and $\tprune$ it could happen that $t_2$ is not a complete tree and $(t_2, B_2) \notin Q'$ (i.e., $t_2$ could contain internal nodes with just one child). 
	These nodes are redundant and they can be easily removed by shrinking the tree.
	For this purpose, we define the procedure $\tshrink(t_2, B_2) = (t_3, B_3)$ recursively.
	We define $\tshrink$ by induction on the depth of $t_2$, maintaining the following properties: $t_3$ is a complete tree; $B_3$ is an injective function from $\tleaves(t_3)$ into $Q$; $range(B_3) = range(B_2)$ and that for every $u \in \tleaves(t_2)$ there is $u' \in \tleaves(t_3)$ such that $B_2(u) = B_3(u')$ and $t_2^\collapse(u) = t_3^\collapse(u')$.
	In particular this means that the size of $t_3$ is bounded by $2 \cdot |Q|$.
	
	For trees that have only one node we define $\tshrink$ as the identity function and the properties are kept trivially. Suppose $t_2 = \sigma\{r_1, \dots, r_n\}$ for some $\sigma \in \subs(\cX, \cY)$. For every $j \in \{1,\dots,n\}$ let $i_j \in \nat$ be the node in $t_2$ corresponding to the root of $r_i$. Then for every leaf $u \in \tleaves(r_j)$ the node $i_j\cdot u$ is a leaf in $t_2$. Moreover $\tleaves(t_2) = \bigcup_j\{i_j\cdot u \mid u \in \tleaves(r_j)\}$. We define the functions $C_j : \tleaves(r_j) \to Q$ by $C_j(u) = B_2(i_j\cdot u)$. Notice that $range(B_2) = \bigcup_j range(C_j)$.
	
	If $n > 1$ then $\tshrink (t_2, B_2) = (\sigma\{r_1', \dots, r_n'\}, B_3)$, where $\tshrink(r_j, C_j) = (r_j', C_j')$ and $B_3(i_j \cdot u) = C_j'(u)$ for every $j$ and $u \in C_j$. By induction the properties are kept in $\tshrink(r_j, C_j)$ for all $j$. Then it is easy to see that they are also kept for $\tshrink (t_2, B_2)$.
	The remaining case is for $n = 1$, for simplicity we skip the indexes and write $t_2 = \sigma\{r\}$ and $C : \tleaves(r) \to Q$. Let $r'$ be a tree such that $\nodes(r') = \nodes(r)$, $r'(u) = r(u)$ for every node $u \neq \epsilon$ and $r'(\epsilon) = \sigma \circ r(\epsilon)$.
	Then we define $\tshrink(t_2, B_2) = \tshrink(r', C)$. That is, the edge between the root of $\sigma\{r\}$ and its unique child $r$ is removed.
	By induction the properties are kept in the step from $(r', C)$ to $(t_3, B_3) = \tshrink((r', C))$. Thus we only have to prove that $range(B_3) = range(B_2)$ and that for every $u \in \tleaves(t_2)$ there is $u' \in \tleaves(t_3)$ such that $t_2^\collapse(u) = t_3^\collapse(u')$. The first property follows from the fact that $range(B_2) = range(C) = range(B_3)$. To prove the second property let $u \in \tleaves(t_2)$. By definition there exists an $i$ such that $u = i \cdot v$ and $B_2(i \cdot v) = C(v)$. Let $|u| = k$ then:
	\setlength{\jot}{9pt}
	\begin{align*}
	t_2^\collapse(u) & \;\; = \;\; t_2(\epsilon) \ \circ \ t_2(u[\cdot, 1])  \ \circ  \ t_2(u[\cdot,2])  \ \circ \  \ldots  \ \circ  \  t_2(u[\cdot, k]) \\
	& \;\; = \;\; \underbrace{\sigma  \  \circ  \  t_2(u[\cdot, 1])}_{r'(\epsilon)}  \  \circ  \  \underbrace{t_2(u[\cdot, 2])}_{r'(v[\cdot,1])}  \  \circ  \  \ldots \ \circ  \  \underbrace{t_2(u[\cdot, k])}_{r'(v[\cdot, k-1])} \\
	& \;\; = \;\; r'^\collapse(v).
	\end{align*}
	By the induction assumption there is $u' \in \tleaves(t_3)$ such that $r'^\collapse(v) = t_3^\collapse(u')$.
	
	\noindent {\bf Reduce.} The pair $(t_3, B_3)$ is almost ready after shrinking single-child internal nodes, except that substitutions inside $t_3$ could have constants (e.g. $\textend(\cdot)$ could have introduced constants). 
	To solve this issue, we apply the $\cX$-reduction procedure $\xred(\cdot)$ to reduce all substitutions in $t_3$.
	Formally, we define the procedure $\treduce(t_3, B_3) = ((t_4, B_4), \sigma)$ where $\xred(t_3) = (t_4, \sigma)$ and $B_3 = B_4$. 
	The function $\xred$ changes only the labels (i.e. substitutions of the nodes) and, in particular, it does not change its tree structure, namely, $\nodes(t_3) = \nodes(t_4)$. 
	Thus $|\nodes(t_4)| = |\nodes(t_3)| \leq 2 \cdot |Q|$.
	By Lemma~\ref{lemma:reduction-correctness}, this implies that the number of fresh $\cY$-variables needed to apply the procedure $\xred(\cdot)$ is at most $|\nodes(t_3)| \cdot |\cX|\cdot(|\cX| \cdot N +1) =  2 \cdot |Q| \cdot |\cX|\cdot(|\cX| \cdot N +1)$ which is exactly the size of $\cY$.
	% This explains why we choose at the beginning the size of $\cY$ equal to $2 \cdot |Q| \cdot |\cX|^2 \cdot N$.
	The remaining issue is that the function $\xred$ is not deterministic since it has to choose ``fresh variables'' to apply the $\cX$-reduction. To make it deterministic we can fix a deterministic choice of every fresh variable inside $\xred(\cdot)$ (e.g. by following an arbitrary total order over $\cY$). With this change the procedure $\treduce$ is deterministic.
	
	With the definitions of the procedures extend, prune, shrink and reduce, we are ready to define formally the transition function $\delta'$ of $\cA'$. Specifically, for every state $(t, B) \in Q'$ and every $a \in \Sigma^*$ we define:
	\[
	\delta'((t, B), a) \; = \; \treduce(\ \tshrink(\ \tprune(\ \textend((t,B),a) \ )\ ) \ ) \; = \; ((t_4, B_4), \sigma)
	\]
	Note that all procedures ($\treduce, \tshrink, \tprune, \textend$) are deterministic so $\delta'$ is also deterministic.
	In the next lemma, we show that the definition of $\delta'$ is correct. %, that is, the pair $(t_4, B_4) \in Q'$ and $\sigma$ is a copyless substitution.
	
	\begin{lemma}
		\label{lemma:transition}
		For any $(t, B) \in Q'$ and $a \in \Sigma$ let:
		\[
		\treduce(\ \tshrink(\ \tprune(\ \textend((t,B),a) \ )\ ) \ ) \; = \; ((t_4, B_4), \sigma).
		\]
		Then $(t_4, B_4) \in Q'$ and $\sigma$ is a copyless substitution over $\cY$.
	\end{lemma}
	\begin{proof}
		Assume that we have $(t_1, B_1) = \textend((t, B), a)$, $(t_2, B_2) = \tprune(t_1, B_1)$, $(t_3, B_3) = \tshrink(t_2, B_2)$ and $(t_4, B_4) = \treduce(t_3, B_3)$.
		We first check that $t_4$ is a complete, copyless, and constants-free substitution tree in $\trees(\cX, \cY)$. 
		We start from showing that $t_4$ is complete. The tree $t_3$ is a result of $\tprune(t_2)$, which by definition is a complete tree. This proves completeness because $t_4$ has the same structure as $t_3$.
		
		We show that $t_4$ is copyless and constant-free.
		The procedure $\textend(\cdot)$ introduces new variables in $t$ only from $\cX$ in new nodes. We label the new nodes with substitutions $\sigma$ from $\cA$, which by definition are substitutions from $\subs(\cX, \cY)$. Moreover it does not introduce new variables from $\cY$ thus the tree $t_1$ is copyless.
		The procedures $\tprune(\cdot)$ and $\tshrink(\cdot)$ do not introduce new variables in $t_1$, and $t_2$. This shows that the tree $t_3$ is copyless.
		By Lemma~\ref{lemma:reduction-correctness} we conclude that $t_4$ is copyless, constants-free, and $\sigma$ is a copyless substitution.
		% To show that $B_4: \tleaves(t_4) \rightarrow Q$ is an injective function \filip{if it is not injective then this is not a reachable configuration}, one can check that, after applying $\tprune$, all leaves that are not mapped by $B_1$ are pruned (this is the main motivation of $\tprune(\cdot)$). 
		% This implies that $B_2$ is an injective function.
		% Similarly, one can check that $B_3$ is an injective function and, in particular, $B_4$.
		% This proves that $(t_4, B_4) \in Q'$ and $\sigma$ is copyless.
	\end{proof}

	Now we have all the ingredients to prove Lemma \ref{lemma: branch-collapsing}.
	
	\begin{proof}[Proof of Lemma \ref{lemma: branch-collapsing}]
		The lemma is proved by induction over the size of $w \in \Sigma^*$. % and the definitions of $\textend(\cdot)$, $\tprune(\cdot)$, $\tshrink(\cdot)$, and $\treduce(\cdot)$.
		For the base case $w = \epsilon$ by definition $q_0' = (t_0, B_0)$ where $t_0$ is a single-node tree labeled with $\sigma_0 \in \subs(\cX, \cY)$ and $B_0(\epsilon) = q_0$, where $\sigma_0(x) = \hat{x}$ for every $x \in \cX$, where $\hat{x}$ is the copy of $x$ in $\hX$.
		The initial function $\nu_0'$ is defined as $\nu_0'(\hat{x}) = \nu_0(x)$ for every  $\hat{x} \in \hX$ and $\nu_0'(y) = \zero$ for every $y \in \cY \ \backslash \ \hX$.
		In this setting
		$$
		\xi \circ t^\collapse(u)(x) = \nu_0' \circ \sigma_0(x) = \nu_0'(\hat{x}) = \nu_0(x).
		$$
		In the initial configuration there is only one run which ends in $q_0$ and $range(B_0) = \{q_0\}$, which finishes the proof of the base case.
		
		For the induction step, assume that the lemma holds for $w \in \Sigma^*$ and we show that it also holds for $w \cdot a$ for any $a \in \Sigma$. Let $(q', \nu')$ be a configuration reached by a run of $\cA$ over $w\cdot a$. By Lemma \ref{lemma-unambiguous}, there is a unique run of $\cA$ on $w$ that ends in a configuration $(q, \nu)$ such that $(q,a,q',\sigma) \in \delta$ and $\nu' = \nu \circ \sigma$. By the induction assumption we have that there is a run of $\cA'$ on $w$ reaching configuration $((t, B), \xi)$ such that
		$Q(w) = range(B)$ and for every $x \in \cX$
		that there exists $u \in \nodes(t)$ such that $B(u) = q$ and
		\begin{eqnarray}
		\nu & = &  \xi \circ t^\collapse(u). \label{eq:collapsing-1}
		\end{eqnarray}
		Fix $(t_1, B_1) = \textend((t, B), a)$, $(t_2, B_2) = \tprune(t_1, B_1)$, $(t_3, B_3) = \tshrink(t_2, B_2)$ and $(t_4, B_4, \tau) = \treduce(t_3, B_3)$.
		By definition of $\textend$ the set $range(B_1)$ is the set of all $p$ such that $(B(v), a, p, \sigma_p) \in \delta$ for some $\sigma_p$ and $v \in \tleaves(t)$. Since $Q(w) = range(B)$ then $Q(w \cdot a) = range(B_1)$. Clearly by definition of the procedures $\tprune, \tshrink, \treduce$ we have $range(B_1) = range(B_2) = range(B_3) = range(B_4)$. By the unambiguity of $\cA$ we get that $|range(B_4) \cap F| = |Q(w \cdot a) \cap F| \leq 1$. Similarly to prove that $B_4$ is injective it suffices to show that $B_1$ is injective. Suppose contrary that $B_1$ maps two different leaves to the same state $p$. Then by definition of $B_1$ there are at least two runs of $\cA$ on $w \cdot a$ that end in $q$, which contradicts Lemma~\ref{lemma-unambiguous}.
		We conclude by~Lemma \ref{lemma:transition} that $\delta'((t,B), a) = (t_4, B_4)$ and $\cA'$ is in configuration $((t_4, B_4), \xi \circ \tau)$ after reading $w \cdot a$. Since $range(B_4) = Q(w \cdot a)$ then there exists $i \in \nat$ such that $ui \in \tleaves(t_1)$, $B_1(u \cdot i) = q'$, and $t_1(ui) = \sigma$. 
		If we compose both sides of~\eqref{eq:collapsing-1} with $\sigma$ then we get:
		\[
		\underbrace{\nu \circ \sigma}_{\nu'} \; = \; \xi \circ \underbrace{t^\collapse(u) \circ \sigma}_{t_1^\collapse(u \cdot i)} \;\;\;\; \text{and} \;\;\;\; \nu' = \xi \circ t_1^\collapse(u \cdot i).
		\]
		We know that the procedure $\tprune(\cdot)$ and $\tshrink(\cdot)$ preserve the outputs of the collapse operation. Thus there exists $v \in \nodes(t_3)$ such that $B_3(v) = B'(v) = q'$ and
		$
		\nu' \; = \; \xi \circ t_3^\collapse(v).
		$
		Given that $(t_4, B_4, \tau) = \treduce(t_3, B_3)$ is an $\cX$-reduction, we know that $\tau \circ t_4^\collapse(v) = t_3^\collapse(v)$.
		By replacing the last equation in the above formula, we get
		$
		\nu' \; = \; \xi \circ \tau \circ t_4^\collapse(v)
		$,
		which concludes the induction step.
	\end{proof}
	
	Lemma \ref{lemma: branch-collapsing} in particular proves that $\delta'$ is a total function when $Q'$ is restricted to the reachable states $Q_r'$.
	For this reason we restrict the set of states to $Q_r'$. With $Q_r'$ as the set of states the automaton $\cA'$ is a deterministic CRA.
	
	To conclude the proof, we show that $\cA$ and $\cA'$ have the same output on every word $w$. Let $(q, \nu)$ be the configuration of the unique accepting run of $\cA$ on $w$ and let $((t, B), \xi)$ be the configuration of $\cA'$ of the run on $w$. By Lemma \ref{lemma: branch-collapsing} there exists a node $u$ such that
	$u \in \tleaves(t)$ such that $B(u) = q$ and
	$
	\nu = \xi \circ t^\collapse(u).
	$
	The output of $\cA'$ on $w$ is defined as $\xi \circ \mu'(t,B)$. By definition $q \in F$ and $|range(B) \cap F| = 1$ thus $q$ is the unique accepting state in $\tleaves(t)$. By definition $\mu'(t,B)  = t^\collapse(u) \circ \mu(q)$. Thus we get the following equalities for the output of $\cA'$ on $w$:
	\begin{align*}
	\xi \circ \mu'(t,B) = \xi \circ t^\collapse(u) \circ \mu(q) = \nu  \circ \mu(q).
	\end{align*}
	This finishes the proof given that $\nu \circ \mu(q)$ is exactly the output of $\cA$ on $w$.
\end{proof}

\subsection{Closure under regular look-ahead} Our next extension of the CRA model is based on regular look-ahead, namely, transitions that can check regular properties over the input. 
Regular look-ahead has been extensively studied for finite automata~\cite{engelfriet1976top,engelfriet1999macro} and has been stated as a key property of a model for computing non-boolean functions~\cite{alur2012streaming,alur2013regular}.
%A CRA with \emph{regular look-ahead} (CRA-RLA)~\cite{alur2013regular} is an extension of CRA where the machine can take decisions by reading the remaining suffix of the string in a regular manner. 
Let $\REG$ be the set of all regular languages over $\Sigma$.
A CRA with regular look-ahead (CRA-RLA) is a tuple $\cA \; = \; (Q, \AL, \cX, \Delta, q_0, \nu_0, \mu)$ where $Q$, $\AL$, $\cX$, $q_0$, $\nu_0$, and $\mu$ are defined as before and $\Delta: Q \times \REG \pmap Q \times \subs(\cX)$ is a partial transition function.
Given a string $w = a_1 \ldots a_n \in \AL^*$, the run of $\cA$ over $w$ is a sequence of configurations:
$
(q_0, \nu_0) \:\trans{L_1} \: (q_1, \nu_1) \: \trans{L_2} \: \ldots \: \trans{L_n} \: (q_n, \nu_n) 
$
such that for $1 \leq i \leq n$,  $\Delta(q_{i-1}, L_i) = (q_i, \sigma_i)$, $a_i \ldots a_n \in L_i$ and $\nu_i(x) = \asem{\nu_{i-1} \comp \sigma_i(x)}$ for each $x \in \cX$.
The output of $\cA$ over $w$ is defined as usual, i.e. $\asem{\cA}(w) = \asem{\nu_n \circ \mu(q_n)}$.
To keep determinism, we also restrict $\Delta$ as follows: for a fixed state $q$ let $\Delta(q, L_1) = (q_1, \sigma_1), \Delta(q, L_2) = (q_2, \sigma_2), \dots, \Delta(q, L_k) = (q_k, \sigma_k)$ be all transitions with $q$ in the first coordinate and $L_1,\ldots, L_k \in \REG$. Then the languages $L_1,\ldots, L_k$ are pairwise disjoint (i.e. $L_i \cap L_j = \emptyset$).
Note that $\cA$ is always deterministic, i.e. after reading the remaining suffix the automaton is forced to take at most one available transition.
%From the restriction on the transition function it is easy to see each word $w$ has exactly one run.
The restrictions to copyless and bounded alternation CRA-RLA are defined as expected.

\begin{example}\label{ex:reglook}
Consider the function $f : \Sigma^* \to \nat$ from Example~\ref{ex:unambiguous}. This function can be defined by a BAC extended with regular look-ahead in Figure~\ref{fig:reglook}. 
The automaton has two registers $x$ and $y$ keeping the number of all $a$'s and the number of $b$'s in the last block, respectively. When updating register $y$, to verify if the automaton is in the last block of $b$'s it checks whether the suffix (denoted $v$) belongs to $b^*a^*$.
For simplicity, the trivial assignment when $v \not \in b^*a^*$ is omitted.
\end{example}

\begin{figure}
 \centering
 \begin{tikzpicture}
\coordinate (cA) at (0,0);
		
\node[state, draw=white] 		(p0) at (cA) {};
\node[state, right of=p0, node distance=1.8cm] 		(p) {};
\node[state,  below of=p, node distance=1.5cm, draw=white] (pf) {};

\draw (p0) edge[pos=0.01,->] node[above] {$x,y:=0$} (p);
\draw (p) edge[pos=0.8,->] node[left] {$\; \max\{x,y\}$} (pf);

\path (p)   edge	[loop above]	node {
	\renewcommand{\arraystretch}{0.7}
	$  a \;   \begin{array}{|c}
	 x := x+1
	\end{array}$
}  (p)
(p)   edge	[loop right]	node {
	\renewcommand{\arraystretch}{0.7}
	$\; b, \text{ if } v \in b^*a^* \;
	\begin{array}{|c}
	\;\; y := y+1
	\end{array}$}  (p); 
\end{tikzpicture}
 \caption{BAC extended with regular look-ahead recognizing $f$ from Example~\ref{ex:reglook}.}
 \label{fig:reglook}
\end{figure}
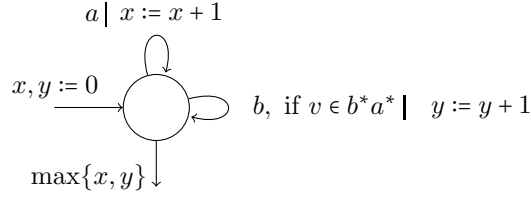

% In the next result, we prove that enriching BAC with regular look-ahead does not add expressibility to the model.
Like for unambiguous copyless CRA, we do not know if extending copyless CRA with regular look-ahead results in a more expressive model. Assuming bounded alternation we prove that the resulting class of functions is the same.

\begin{theorem} \label{theo:regular-lookahead}
	For every BAC $\cA$ with regular look-ahead there exists a BAC $\cA'$ without regular look-ahead that computes the same function, that is, $\asem{\cA}(w) = \asem{\cA'}(w)$ for every~$w \in~\Sigma^*$.
	Furthermore, the number of states and the number of registers of $\cA'$ is double-exponential and polynomial, respectively, in the size of $\cA$.
\end{theorem}

\begin{proof}
	Let $\cA = (Q, \AL, \cX, \Delta, q_0, \nu_0, \mu)$ be a BAC with regular look-ahead where $\Delta: Q \times \REG \pmap Q \times \subs(\cX)$ is the partial transition function (recall that $\Delta$ has finite domain). 
	To represent $\Delta$ finitely, let $L_1, \ldots, L_N$ be all the regular languages in the finite domain of $\Delta$ and, for each $i \leq N$, let $\cA_i = (P_i, \Sigma, \delta_i, p_i^0, F_i)$ be a finite state automaton such that $L_i = \cL(\cA_i)$.
	To unify the structure of each $\cA_i$, define the set of states $P = \biguplus_{i=1}^N P_i$, the transition function $\delta = \biguplus_{i=1}^N \delta_i$, and the set of final states $F = \biguplus_{i=1}^N, F_i$, that is, $P$, $\delta$ and $F$ are the disjoint union of states, transitions and final states, respectively, of the automata $\cA_1, \ldots, \cA_N$.
	Then for every $i \leq N$, define the automaton $\cR_i = (P, \Sigma, \delta, p_i^0, F)$. By definition of $P$, $\delta$ and $F$, one can easily see that $L_i = \cL(\cR_i)$ for every $i\leq N$.
	
	We are ready to show an unambiguous BAC equivalent to $\cA$. By Theorem~\ref{theo:unambiguous-theorem} this will show that $\cA$ can be defined by a deterministic BAC. 
	Let $\cA' = (Q', \AL, \cX, \Delta', I_0', \nu_0, F', \mu')$ be a BAC such that:
	\begin{itemize}
		\item $Q' = Q \times 2^P$ is the set of states,
		\item $\Delta' \subseteq Q' \times \Sigma \times Q' \times \subs(\cX)$ where $((q, S), a, (q', S'), \sigma) \in \Delta'$ iff there exist $(q, L_i, q', \sigma) \in \Delta$ for some $i \leq N$, 
		and a surjective function $f: S \cup \{p^0_i\} \rightarrow S'$ such that $\delta(s, a) = f(s)$ for every $s \in S \cup \{p^0_i\}$,
		\item $I_0' = \{(q_0, \emptyset)\}$, 
		\item $F' = \{(q, S) \mid S \subseteq F\}$, and
		\item $\mu':Q' \rightarrow \expr(\cX)$ where $\mu'(q, S) = \mu(q)$ for every $(q, S) \in Q'$.
	\end{itemize}
	The idea behind $\cA'$ is to guess, at each letter, all transitions in $\cA$ that are satisfied by the remaining suffix.
	In each state $(q, S)$ of $\cA'$ we keep the current state $q$ of a run and a subset $S$ of $P$ that includes all regular transitions that have been taken so far by the run on $q$. 
	Then we have a transition $((q, S), a, (q', S'), \sigma) \in \Delta'$ if there exist a transition $(q, L_i, q', \sigma) \in \Delta$ (i.e. a transition from $q$ to $q'$) such that we can extend each state in $S \cup \{p_i^0\}$ to a state in $S'$. Note that, in order to start simulating the finite automaton $\cR_i$ over the suffix, $p^0_i$ is also included on the set of states that are updated. 
	Finally, if the last state $(q, S)$ of a run satisfies $S \subseteq F$, then we know that all suffixes during the run satisfies the regular look-ahead of the transitions and, then, the state $(q, S)$ is final.
	
	We show first that $\cA'$ is unambiguous. By contradiction, suppose that $\cA'$ is not unambiguous, that is, there exist $w =  a_1 \ldots a_n  \in \Sigma^*$ and two different accepting runs $\rho$ and $\rho'$ of $\cA'$ over $w$. 
	Let $i \leq n$ be the least position such that $\rho(i) = \rho'(i) = (q, S)$ but $\rho(i+1) = (q_1, S_1) \neq (q_2, S_2) = \rho'(i+1)$.
	We know that this position exists since, by construction, it holds that $\rho(0) = \rho'(0)$.
	Let $(q, L, q_1, \sigma_1) \in \Delta$ and $(q, L', q_2, \sigma_2) \in \Delta$ be the transitions that witness the transitions  $((q,S), a_{i+1}, (q_1, S_1), \sigma_1) \in \Delta'$ and $((q,S), a_{i+1}, (q_2, S_2), \sigma_2) \in \Delta'$.
	Since both runs are accepting, then it is straightforward to show by induction that $w[i, \cdot] \in L$ and $w[i, \cdot] \in L'$ for the suffix $w[i, \cdot]$ of $w$ at position $i$.
	Then we have a contradiction since, by definition of CRA-RLA, we know that $L \cap L'= \emptyset$.
	We conclude that $\cA'$ must be unambiguous. 
	Note that during the construction we did not change the assignments $\sigma$ in the transitions.
	For this reason, we can also conclude that $\cA'$ is a copyless CRA with bounded alternation.
	
	For the last part of the proof, we have to show that $\asem{\cA}(w) = \asem{\cA'}(w)$ for every $w \in \Sigma^*$.
	It is easy to verify that for every run 
	$(q_0, \nu_0) \:\trans{L_1} \: \ldots \: \trans{L_n} \: (q_n, \nu_n)$ of $\cA$ over $w \in \Sigma^*$, there exist sets $S_i$ such that the sequence
	$((q_0, S_0), \nu_0) \:\trans{a_1} \: \ldots \: \trans{a_n} \: ((q_n, S_n), \nu_n)$ is an accepting run of $\cA'$ over $w$. For a given word $w$ the sets $S_i$ are uniquely determined by the transitions $\Delta$ and $\delta$.
	Thus, $\asem{\cA}(w) = \asem{\hat{\nu}_n(\mu(q_n))} = \asem{\cA'}(w)$ for every $w \in \Sigma^*$.
\end{proof}
% Similar than unambiguous copyless CRA, we do not know whether copyless CRA is closed under regular look-ahead or not.

\subsection{Closure under reverse} We finish this section proving that, in contrast to copyless CRA (see~Section~\ref{sec:nonexpressibility}), BAC are closed under reverse.
Recall that a subclass $\mathcal{C}$ of CRA is closed under reverse if for every $\cA \in \mathcal{C}$ there exists $\cA' \in \mathcal{C}$ such that $\asem{\cA}(w) = \asem{\cA'}(w^r)$ for every~$w \in \Sigma^*$. First, we show an example that sometimes it is more convenient to define the reverse of the function.

\begin{example}\label{ex:reverse}
Consider the function $f : \Sigma^* \to \nat$ from Example~\ref{ex:unambiguous}. We define the function $f^r$ by a BAC in Figure~\ref{fig:reverse}. 
The automaton has two registers $x$ and $y$ keeping the number of all $a$'s and the number of $b$'s in the last block, respectively. By adding extra two states the automaton updates the register $y$ only in the first block of $b$'s, and then it keeps its value.
\end{example}

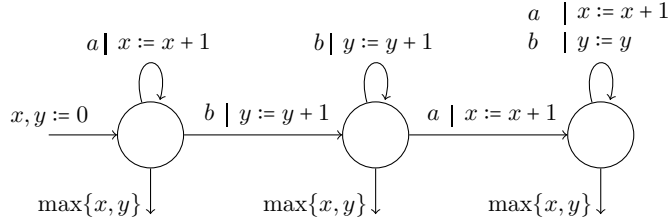
\begin{figure}
 \centering
 \begin{tikzpicture}
 \coordinate (cA) at (0,0);
		
\node[state, draw=white] 		(p0) at (cA) {};
\node[state, right of=p0, node distance=1.8cm] 		(p) {};
\node[state,  below of=p, node distance=1.5cm, draw=white] (pf) {};

\node[state, draw=white, right = 2.1cm of p0] 		(1p0){};
\node[state, right of=1p0, node distance=1.8cm] 		(1p) {};
\node[state,  below of=1p, node distance=1.5cm, draw=white] (1pf) {};

\node[state, draw=white, right = 2.1cm of 1p0] 		(2p0){};
\node[state, right of=2p0, node distance=1.8cm] 		(2p) {};
\node[state,  below of=2p, node distance=1.5cm, draw=white] (2pf) {};

\draw (p0) edge[pos=0.01,->] node[above,scale=0.9] {$x,y:=0$} (p);
\draw (p) edge[pos=0.8,->] node[left,scale=0.9] {$\; \max\{x,y\}$} (pf);
\draw (1p) edge[pos=0.8,->] node[left,scale=0.9] {$\; \max\{x,y\}$} (1pf);
\draw (2p) edge[pos=0.8,->] node[left,scale=0.9] {$\; \max\{x,y\}$} (2pf);

\path (p)   edge	[loop above]	node[scale=0.9] {
	\renewcommand{\arraystretch}{0.7}
	$  a \;   \begin{array}{|c}
	 x := x+1
	\end{array}$
}  (p)
(p)   edge	[->]	node[above,scale=0.9] {
	\renewcommand{\arraystretch}{0.7}
	$
	\begin{array}{c|c}
	b& y := y+1
	\end{array}$}  (1p)
	
	(1p)   edge	[loop above]	node[scale=0.9] {
	\renewcommand{\arraystretch}{0.7}
	$  b \;  \begin{array}{|c}
	 y := y+1
	\end{array}$
}  (1p)
(1p)   edge	[->]	node[above,scale=0.9] {
	\renewcommand{\arraystretch}{0.7}
	$
	\begin{array}{c|c}
	a & x := x+1
	\end{array}$}  (2p)
	(2p)   edge	[loop above]	node[scale=0.9] {
	\renewcommand{\arraystretch}{0.7}
	$ 
	\begin{array}{ll}
		 a \;  & \begin{array}{|c}
		 x := x+1
		 \end{array} \vspace{1mm} \\
		 b \; & \begin{array}{|c}
		 y := y
		 \end{array}
	\end{array}
	$
}  (2p)
	;
\end{tikzpicture}
 \caption{BAC recognizing $f^r$ from Example~\ref{ex:reverse}.}
 \label{fig:reverse}
\end{figure}

%In other words, for every automaton in $\mathcal{C}$ there exists an automaton in $\mathcal{C}$ that reads the input backwardly and produce the same output. 

\begin{theorem} \label{theo:closed-under-reverse}
	For every BAC $\cA$ there exists a BAC $\cA'$ that computes the reverse function of $\cA$, that is, $\asem{\cA}(w) = \asem{\cA'}(w^r)$ for every $w \in \Sigma^*$.
	Furthermore, the number of states is double-exponential and the number of registers is polynomial in the size~of~$\cA$.
\end{theorem} 

% \begin{proof}[Proof (sketch)]
% We construct a non-deterministic BAC $\cA''$ that for each $w \in \Sigma^*$ guesses the run of $\cA$ starting from some final state and ending in the initial state of $\cA$. 
% Given that $\cA$ is deterministic, there is at most one run over $w$ that starts from the initial state of $\cA$ and, thus, $\cA''$ will be unambiguous. 
% The main technical issue is that $\cA''$ has to construct the output expression $|\cA|(w)$ backwardly, that is, in a top-down fashion with respect to the parse tree of $|\cA|(w)$.
% \filip{I removed Xreductions from here too.}
% % The key idea here is to use $\cX$-reductions (introduced in the proof of Theorem~\ref{theo:unambiguous-theorem}) for building $|\cA|(w)$ backwardly and reducing it whenever it is possible. 
% \end{proof}

%In Theorems~\ref{theo:unambiguous-theorem},~\ref{theo:regular-lookahead},~\ref{theo:closed-under-reverse} we write the bounds on the size of the set of states and set of registers for the equivalent automata. This is not important for the results presented in this paper, however, we believe it might of some use in future research.

\begin{proof}
	Let $\cA \; = \; (Q, \AL, \cX, \Delta, q_0, \nu_0, \mu)$ be a (deterministic) BAC with alternation bounded by~$N$. 
	The main idea of this construction is to run $\cA$ backwardly over an input. 
	For this purpose, we start from any state in $Q$ where $\cA$ could have potentially finished and follow the transitions in the other direction. 
	Of course, by starting from any state and going back there are many non-deterministic choices that the reverse automaton must take to find the run of $\cA$ over the input. 
	Fortunately, $\cA$ is a deterministic automaton which implies that there is at most one run that starts in $q_0$ and ends in some state in $Q$. 
	That is, the reverse automaton constructed from $\cA$ will be unambiguous and, by Theorem~\ref{theo:unambiguous-theorem}, we know that there is an equivalent deterministic BAC cost-register automaton. 
	The only remaining issue is to construct the output expression of $\cA$ over the input from the back. 
	For this, we exploit the bounded alternation of $\cA$ and $\cX$-reductions (see below) to extend and reduce expressions starting from the final output. 
	
	Let $\cY$ be a non-empty set of variables such that $\cX \cap \cY = \emptyset$. We recall here the idea of an $\cX$-reduction introduced in the proof of Theorem~\ref{theo:unambiguous-theorem}.
	An $\cX$-reduction for $e \in \expr(\cX \cup \cY)$ is a pair $\xred(e) = (r, \sigma)$ where $r \in \expr(\cX \cup \cY)$ is an expression without constants and $\sigma: \cY \to \expr(\cY)$ is a substitution such that $e = \hat{\sigma}(r)$. Moreover, $r$ is copyless whenever $e$ is copyless. 
	By Lemma~\ref{lemma:reduction-variables}, we know that if $e$ has alternation bounded by $N$, then $r$ has alternation bounded by $N$, $\sigma$ is a copyless substitution, and the number of $\cY$-variables in $r$ is bounded by $|\cX| \cdot N + 1$.
	%Finally, given a partial substitution $\sigma: \cX \cup \cY \hookrightarrow \expr(\cX \cup \cY)$ we denote by $\sigma_{id}: \cX \cup \cY \rightarrow \expr(\cX \cup \cY)$ the extention of $\sigma$ such that $\sigma_{id}(y) = \sigma(y)$ for every $y \in \dom(\sigma)$ and $\bar{\sigma}(y) = y$ otherwise. 
	%Furthermore, given a constant $s \in \SR$ we denote by $\sigma_s: \cX \cup \cY \rightarrow \expr(\cX \cup \cY)$ the extension of $\sigma$ such that $\sigma_s(y) = \sigma(y)$ for every $y \in \dom(\sigma)$ and $\sigma_s(y) = s$ otherwise. 
	
	We have now all the ingredients to define a reverse BAC cost register automaton from $\cA$. Formally, we construct an unambiguous BAC $\cA' = (Q', \AL, \cY, \Delta', I_0', V_0', F', \mu')$ from $\cA$ as follows.
	\begin{itemize} \itemsep2mm
		
		\item $Q'$ is the set of all pairs $(q, r)$ where $q \in Q$ and $r \in \expr(\cX \cup \cY)$ is a copyless and constants-free expression, in particular, the size of $r$ is bounded by $|\cX| + |\cY|$.
		
		\item $\cY$ is a set of registers such that $\cX \cap \cY = \emptyset$ and $|\cY| = |\cX| \cdot N + 1$.
		
		\item $I_0'$ is the set of all pairs $(q, r)$ such that $\xred(\mu(q)) = (r, \sigma)$ for some substitution~$\sigma$.
		
		\item $V_0': I_0' \rightarrow \val(\cY)$ is the initial substitution function such that $V_0'((q, r)) = \sigma$ for every $(q, r) \in I_0'$ and $\xred(\mu(q)) = (r, \sigma)$. 
		
		\item $F'$ is the set of all pairs $(q_0, r) \in Q'$ such that $q_0$ is the initial state of $\cA$. 
		
		\item $\mu'$ is the final substitution such that 
		$\mu'((q,r)) = \nu_0 \circ r$ for every $(q,r) \in Q'$. 
		
		\item $\Delta'$ is the transition relation where $((q,r), a, (q',r'), \sigma') \in \Delta'$ if, and only if, $\delta(q', a) = (q, \sigma)$ and $\xred(\sigma \circ r) = (r', \sigma')$.
	\end{itemize}
	We start by explaining the definition of the initial substitution function $V_0'$ and the transition relation $\Delta'$. 
	First, note that $V_0'$ is well defined, namely, $V_0'((q, r)) = \sigma$ is a valuation over $\cY$ for every $(q, r) \in I_0$ satisfying $\xred(\mu(q)) = (r, \sigma)$.
	Indeed, the expressions $\mu(q)$ contains $\cX$-variables and constants which implies
	% (i.e. by following the definition of $\xred$ in the proof of Theorem~\ref{theo:unambiguous-theorem})
	that $\sigma$ is a substitution over $\cY$, where $\sigma(y)$ is a ground expression for every $y \in \dom(\sigma)$. 
	We consider $\sigma(y)$ as a constant because as an expression without variables it can be evaluated into a constant. 
	Regarding the transition relation $\Delta'$, notice that for $((q,r), a, (q',r'), \sigma') \in \Delta'$ we are following the transition $\delta(q', a) = (q, \sigma)$ backwardly and ``extending and reducing'' $r$ by applying $\sigma$.
	This is similar to the $\textend$ function used in the transition function of Theorem~\ref{theo:unambiguous-theorem}.
	The result of this ``extension'' is reduced by the $\xred$ procedure into an expression that has bounded size by Lemma~\ref{lemma:reduction-variables}.
	In other words, we are constructing the output of $\cA$ over the input backwardly by extending the final output $\mu(q)$, reducing the tree whenever it is possible and storing this in the states.  
	Finally, one can easily show that $\cA'$ is unambiguous given that for each transition $((q,r), a, (q',r'), \sigma') \in \Delta'$ we are following the transition $\delta(q', a) = (q, \sigma)$ backwardly and given $(q,t)$, $a$, and $q'$ one can easily check by definition that $r'$ and $\sigma'$ are uniquely determined.
	
	It is left to show that for every word $w = a_1 a_2 \ldots a_n \in \Sigma^*$ we have $\asem{\cA}(a_1 a_2 \ldots a_n) = \asem{\cA'}(a_n a_{n-1} \ldots a_1)$. 
	Let 
	$\rho: \, (q_0, \nu_0) \:\trans{a_1} \: \ldots \: \trans{a_n} \: (q_n, \nu_n)$
	be a run of $\cA$ over $w$ such that, for every $1 \leq i \leq n$,  $\delta(q_{i-1}, a_i) = (q_i, \sigma_i)$ and $\nu_i(x) = \asem{\nu_{i-1} \comp \sigma_i(x)}$ for each $x \in \cX$.
	Recall that the output of $\cA$ over $w$ is defined by $\asem{\cA}(w) = \asem{\nu_n \circ \mu(q_n)}$.
	By the construction of $\cA'$ and $\Delta'$, let
	$$
	\rho': \ ((q_n, r_n), \chi_n) \:\trans{a_n} \: ((q_{n-1}, r_{n-1}), \chi_{n-1}) \: \trans{a_{n-1}} \: \ldots \: \trans{a_1} \: ((q_0, r_0), \chi_0)
	$$
	be a run of $\cA'$ over $a_n a_{n-1} \ldots a_1$ such that $\xred(\mu(q_n)) = (r_n, \chi_n)$ and, for every $1 \leq i \leq n$, it holds that $((q_{i},r_{i}), a_{i}, (q_{i-1},r_{i-1}), \tau_{i}) \in \Delta'$ where $\xred(\sigma_{i} \circ r_{i}) = (r_{i-1}, \tau_i)$ and $\chi_{i-1}(y) =  \asem{\chi_{i} \comp \tau_i(y)}$ for each $y \in \cY$.
	One can easily check that $\rho'$ is the only run of $\cA'$ over $a_n a_{n-1} \ldots a_1$.
	Indeed, $r_n$ is determined by $q_n$ and each pair $(q_{i-1}, r_{i-1})$ is determined by $q_i$, $r_i$ and $\rho$.
	Therefore, it is left to show that the output of $\rho'$ is equal to the output of $\rho$. 
	For this, we prove by backward induction (i.e. starting from $n$ and ending in $0$) that:
	\begin{eqnarray} \label{eq:reverse-induction}
	\asem{\nu_{i} \circ \chi_{i} \circ r_i} = \asem{\cA}(w[1,i]).
	\end{eqnarray}
% 	If the above equation holds, then $\asem{\cA}(w) = \asem{\cA'}(a_n a_{n-1} \ldots a_1)$.
	This concludes the proof, since $\nu_i$ and $\chi_i$ are valuations over disjoint sets of variables
	$$
	\asem{\cA}(w) = \asem{\nu_{0} \circ \chi_{0} \circ r_0} = \asem{\chi_{0} \circ \nu_{0} \circ r_0} = \asem{\chi_{0} \circ \mu'((q_0, r_0))} = \asem{\cA'}(a_n a_{n-1} \ldots a_1).$$
	To prove the base of induction in~(\ref{eq:reverse-induction}) notice that $\asem{\cA}(w) = \asem{\nu_n \circ \mu(q_n)} = \asem{\nu_n \circ \chi_n \circ r_n}$, since by definition $\chi_n \circ r_n = \mu(q_n)$.
	For the inductive step suppose that~(\ref{eq:reverse-induction}) holds for $i+1$ and we prove it for $i$.
	By definition of $\Delta'$ we have $\xred(\sigma_{i+1} \circ r_{i+1}) = (r_i, \tau_{i+1})$, furthermore
	$$
	\begin{array}{rcll}
	\tau_{i+1} \circ r_i & = & \sigma_{i+1} \circ r_{i+1} & (\text{by definition of $\cX$-reduction}) \\
	\nu_i \circ \tau_{i+1} \circ r_i & = & \nu_i \circ \sigma_{i+1} \circ r_{i+1}  & \\
	\nu_i \circ \tau_{i+1} \circ r_i & = & \nu_{i+1} \circ r_{i+1}  & (\text{by definition of $\nu_{i}$}) \\
	\chi_{i+1} \circ \nu_i \circ \tau_{i+1} \circ r_i & = & \chi_{i+1} \circ \nu_{i+1} \circ r_{i+1}  &  \\
	\nu_{i} \circ \chi_{i} \circ r_i & = & \nu_{i+1} \circ \chi_{i+1} \circ r_{i+1}  &  (\text{by definition of $\chi_{i}$}) \\
	\nu_{i} \circ \chi_{i} \circ r_i & = & \asem{\cA}(w)  & (\text{by inductive hyphotesis}) \\
	\end{array}
	$$
\end{proof}
Notice that, as proved in Theorem~\ref{theorem:counterexample}, this construction fails for the function $f_\cB^R$, where $f_\cB$ is given by the copyless CRA $\cB$. This is because while $\cB$ is copyless its alternation is not bounded. In the above construction when the output is read backwards we encode in the states the tree expression obtained by composing the substitutions from backwards. In the constructions we ensure that one can store such a tree in a succinct way by using the $\cX$-reductions. The $\cX$-reductions deal with constants and registers that are on the same `level' of alternation but it does not lower its alternation level. Since we do not have any reductions to drop the alternation of the tree, keeping such trees would require unbounded memory.

We conclude this section by stressing the robustness of bounded alternation copyless CRA: they are closed under unambiguous non-determinism, regular look-ahead, and reverse operation. Notice that all the structural properties studied in Section~\ref{sec:structure_copyless} also apply to this class, in particular, the results regarding normal form and stable registers.

\section{Conclusions and future work}
\label{sec:conclusions}
In this paper, we studied structural properties, expressiveness and closure properties of copyless CRA. 
In particular, we showed that the class of functions recognized by copyless CRA are not closed under reverse.
Due to this result we proved that copyless CRA are strictly contained in the class of weighted automata.
In~\cite{alur2013regular} Alur et al. introduced the class of \emph{regular cost functions} defined in terms of streaming string-to-tree transducers.
% (see~\cite{alur2013regular} for more details).
This class contained copyless CRA but it was left open whether this inclusion is strict or not. 
Since the class of regular cost functions is closed under reverse operation, we can conclude that copyless CRA are strictly contained in the class of regular cost functions.

To recover the closure properties of CRA, we proposed the subclass of bounded alternation copyless CRA (BAC).
We prove that BAC are closed under unambiguous non-determinism, regular look-ahead, and reverse.
An open problem here is to show whether these constructions are optimal or not. In particular, the construction to show that BAC are closed under regular look-ahead is double exponential: first constructing an unambiguous BAC; and second by determinizing it. Each step takes exponential time, and one can envision a construction in one step that requires only a single exponential. 
% We left this construction as future work.
For general copyless CRA, we do not know whether this class is closed under unambiguous non-determinism or regular look-ahead. A positive answer would be surprising, since unambiguous automata are often trivially closed under the reverse operation (e.g. finite automata or weighted automata).

%The class of functions recognized by Coplyess CRA are not closed under reverse operation. We show that its subclass BAC with restricted output, by a bound on the number of alternations between the semiring operations gains good closure properties. In this paper we prove the closure under regular look-ahead and from~\cite{kickasspaper} we know that it is also closed under reverse. 

To study whether a subclass of CRA is closed under reverse, one could approach the problem in a more general way. A natural extension of BAC is to enhance the model with the ability of moving in both directions. Our results do not give a straightforward argument that BAC is closed under two-way extension, but we believe that this could be shown by exploiting the copyless restriction and bounded alternation similar as in the proof of Theorem~\ref{theo:regular-lookahead}.

The most important task for future work is to study the decidability properties of copyless CRA or BAC. The classical questions for computational models, like boundedness or equivalence of two automata, remain unanswered. We hope that the machinery developed in Section~\ref{sec:structure_copyless} is a first step towards answering these questions.

\paragraph*{Acknowledgments}
We thank the anonymous referees for their helpful comments.
The first author was partially supported by Poland's National Science Center grant 2013/09/B/ST6/01575.
The last author was supported by CONICYT + PAI / Concurso Nacional Apoyo al Retorno de Investigadores/as desde el extranjero -- Convocatoria 2013 + 821320001 and by the Millenium
Nucleus Center for Semantic Web Research under grant NC120004.

\bibliographystyle{abbrv}
\bibliography{biblio}

%\section{Proof of Lemma~\ref{lemma:copyless-expressions}}
%\input{JCSS2_appendix_monomials}

\end{document}